\newtheorem{theorem}{Theorem}
\newtheorem{lemma}{Lemma}
\newtheorem{corollary}{Corollary}
\newtheorem{definition}{Definition}
\theoremstyle{plain}
\newtheorem{proposition}[theorem]{Proposition}
\newcommand{\intv}[1]{\left [ #1 \right ]}
\definecolor{linkcol}{rgb}{0,0,0.8}
\definecolor{citecol}{rgb}{0.65,0,0}
\definecolor{titlecol}{rgb}{0.65,0,0}
\newcommand{\nd}{\mbox{\sf nd}\xspace}
\newcommand{\tw}{\mbox{\sf tw}\xspace}
\newcommand{\cw}{\mbox{\sf cw}\xspace}
\renewcommand{\P}{\mbox{\sf P}\xspace}
\newcommand{\NP}{\mbox{\sf NP}\xspace}
\newcommand{\fpt}{\mbox{\sf FPT}\xspace}
\newcommand{\coNPP}{\mbox{\sf coNP\xspace}}
\newcommand{\coNPc}{\mbox{\sf {\hspace{.16cm}coNPc\hspace{.15cm}}}}
\newcommand{\coNPbis}{\mbox{\sf {coNP}}\xspace}
\newcommand{\NPc}{\mbox{\sf NPc}\xspace}
\newcommand{\NPh}{\mbox{\sf NPh}\xspace}
\newcommand{\NPcoNPh}{\mbox{\sf (co)NPh}\xspace}
\newcommand{\XP}{\mbox{\sf XP}\xspace}
\newcommand{\W}{{\sf W}\xspace}
\newtheorem{fact}{Fact}{\bfseries}{\itshape}
\begin{document}

 \date{}

\title{\vspace{-1.1cm}On the (parameterized) complexity of recognizing well-covered $(r,\ell)$-graphs\thanks{
This work was supported by FAPERJ, CNPq, CAPES Brazilian Research Agencies, EPSRC (EP/K025090/1), the Leverhulme Trust (RPG-2016-258), and the French ANR projects DEMOGRAPH (ANR-16-CE40-0028) and ESIGMA (ANR-17-CE40-0028).\newline}}

\author{Sancrey Rodrigues Alves\thanks{FAETEC, Fundação de Apoio \`a Escola T\'ecnica do Estado do Rio de Janeiro, Brazil. \texttt{sancrey@cos.ufrj.br}.} \and
 Konrad K. Dabrowski\thanks{Department of Computer Science, Durham University, Durham, United Kingdom. \texttt{konrad.dabrowski@durham.ac.uk}.} \and
  Luerbio Faria\thanks{UERJ, DICC, Universidade do Estado do Rio de Janeiro, Brazil. \texttt{luerbio@cos.ufrj.br}.} \and
  Sulamita~Klein\thanks{UFRJ, COPPE-Sistemas, Universidade Federal do Rio de Janeiro, Brazil. \texttt{sula@cos.ufrj.br}.} \and
  Ignasi Sau\thanks{CNRS, LIRMM, Universit\'e de Montpellier, Montpellier, France. \texttt{ignasi.sau@lirmm.fr}.}~~\thanks{Departamento de Matem\'atica, Universidade Federal do Cear\'a, Fortaleza, Brazil.} \and
  U\'everton S. Souza\thanks{UFF, IC, Universidade Federal Fluminense, Niter\'oi, Brazil. \texttt{ueverton@ic.uff.br}.}}

\maketitle

\vspace{-.5cm}

\begin{abstract}
  \noindent An $(r, \ell)$-\emph{partition} of a graph $G$ is a partition of its vertex set into $r$ independent sets and $\ell$ cliques. A graph is $(r, \ell)$ if it admits an $(r, \ell)$-partition. A graph is {\it {well-covered}} if every maximal independent set is also maximum. A graph is $(r,\ell)$-{\it {well-covered}} if it is both $(r,\ell)$ and well-covered. In this paper we consider two different decision problems. In the \textsc{$(r,\ell)$-Well-Covered Graph} problem (\textsc{$(r,\ell)$wc-g} for short), we are given a graph $G$, and the question is whether $G$ is an $(r,\ell)$-well-covered graph. In the \textsc{Well-Covered $(r,\ell)$-Graph} problem (\textsc{wc-$(r,\ell)$g} for short), we are given an $(r,\ell)$-graph $G$ together with an $(r,\ell)$-partition, and the question is whether $G$ is well-covered. This generates two infinite families of problems, for any fixed non-negative integers $r$ and $\ell$, which we classify as being {\sf {P}}, {\sf {coNP}}-complete, {\sf {NP}}-complete, {\sf {NP}}-hard, or {\sf {coNP}}-hard.
Only the cases {\sc {wc-$(r,0)$g}} for $r\geq 3$ remain open.
In addition, we consider the parameterized complexity of these problems for several choices of parameters, such as the size $\alpha$ of a maximum independent set of the input graph, its neighborhood diversity, its clique-width, or the number $\ell$ of cliques in an $(r, \ell)$-partition. In particular, we show that the parameterized problem of determining  whether every maximal independent set of an input graph $G$ has cardinality equal to $k$ can be reduced to the \textsc{wc-$(0,\ell)$g} problem parameterized by $\ell$. In addition, we prove that both problems are {\sf coW[2]}-hard but can be solved in {\sf XP}-time.
\end{abstract}

\vspace{.3cm}

\noindent{\bf Keywords}: well-covered graph; $(r, \ell)$-graph; {\sf {coNP}}-completeness; {\sf FPT}-algorithm; parameterized complexity; {\sf coW[2]}-hardness.\vspace{.5cm}

\section{Introduction}
\label{sec:intro}

One of the most important combinatorial problems is {\sc {Maximum Independent Set (MIS)}}, where the objective is to find a maximum sized subset $S\subseteq V$ of pairwise non-adjacent vertices in a graph $G=(V,E)$. Maximum independent sets appear naturally in a wide range of situations, and {\sc MIS} also finds a number of ``real world'' relevant applications.

Unfortunately, the decision version of {\sc {MIS}} is an \NP-complete problem~\cite{Karp72}, and thus it cannot be solved in polynomial time unless $\P=\NP$. In spite of the fact that finding a {\sl maximum} independent set is a computationally
hard problem, a {\sl maximal} independent set of a graph can easily be found in linear time. Indeed, a naive greedy algorithm for finding maximal independent sets consists simply of selecting an arbitrary vertex $v$ to add to a set $S$, and updating the current graph by removing the closed neighborhood $N[v]$ of $v$. This algorithm always outputs a maximal independent set in linear time, but clearly not all choices lead to a maximum independent set.

Well-covered graphs were first introduced by Plummer~\cite{PLUMMER197091} in 1970.
Plummer defined that ``a graph is said to be {\it {well-covered}} if every
minimal point cover is also a minimum cover''. This is equivalent
to demanding that all maximal independent set have the same cardinality. Therefore, well-covered graphs can be equivalently defined as the class of graphs for which the naive greedy algorithm discussed above {\sl always} outputs a maximum independent set.

\begin{sloppypar}
The problem of recognizing a well-covered graph, which we denote by \textsc{Well-Covered Graph}, was proved to be {\sf {coNP}}-complete by Chv\'atal and Slater~\cite{chvatal1993note} and independently by Sankaranarayana and Stewart~\cite{SaSt92}. On the other hand, the \textsc{Well-Covered Graph} problem is in $\P$ when the input is known to be a perfect graph of bounded clique size~\cite{DeanZ94} or a claw-free graph~\cite{lesk1984equimatchable,TankusT96}.
\end{sloppypar}

Let $r, \ell \geq 0$ be two fixed  integers. An $(r, \ell)$-\emph{partition} of a graph $G=(V,E)$ is a partition of $V$ into
$r$ independent sets $S^1,\ldots, $ $S^r$ and $\ell$ cliques $K^1,\ldots, $ $K^{\ell}$. For convenience, we allow these sets to be empty.
A graph is $(r, \ell)$ if it admits an $(r, \ell)$-partition. Note that the notion of $(r,\ell)$-graphs is a generalization of that of $r$-colorable graphs.

A $\P$ versus $\NP$-complete dichotomy for recognizing $(r, \ell)$-graphs was proved by Brandst{\"a}dt~\cite{Brandstadt96}: the problem is in  $\P$ if $\max\{r, \ell\} \leq 2$, and $\NP$-complete otherwise.
The class of $(r,\ell)$-graphs and its subclasses have been extensively studied in the literature. For instance, list partitions of $(r,\ell)$-graphs were studied by
Feder et al.~\cite{FederHKM03}. In another paper, Feder et al.~\cite{FederHKNP05} proved that recognizing graphs that are both chordal and  $(r,\ell)$ is in $\P$.

A graph is {\it {$(r, \ell)$-well-covered}} if it
is both $(r, \ell)$ and well-covered.
In this paper we analyze the complexity of  the {\sc {$(r,\ell )$-Well-Covered Graph}} problem, which consists of
deciding whether a graph is $(r,\ell)$-well-covered.
In particular, we give a complete classification of the complexity of this problem.

Additionally, we analyze the complexity of the {\textsc{Well-Covered-$(r,\ell)$-Graph}}  problem, which consists of deciding, given an $(r,\ell )$-graph $G=(V,E)$ together with an $(r,\ell )$ partition, whether $G$ is well-covered or not.
We classify the complexity of this problem for every pair $(r,\ell )$, except for the cases when $\ell=0$ and $r \geq 3$, which we leave open.

We note that similar restrictions have been considered in the literature. For instance, Kolay et al.~\cite{KolayPRS15} recently considered the problem of removing a small number of vertices from a perfect graph so that it additionally becomes $(r, \ell)$.

To the best of our knowledge,
this is the first time in the literature
that a decision problem obtained by ``intersecting'' two recognition \NP-complete and \coNPP-complete properties has been studied.
From our results, the {\sc {$(r,\ell)$wc-g}} problem has a very peculiar property, namely that some cases of the problem are in {\sf NP}, but other cases are in {\sf coNP}. And if $\P \ne \NP$, there are some cases where the decision problem is neither in NP nor in coNP.

In addition, according to the state of the art for the {\sc {Well-Covered Graph}} problem, to the best of our knowledge this is the first work that associates the hardness of {\sc {Well-Covered Graph}} with the number of independent sets and the number of cliques of an $(r,\ell)$-partition of the input graph. This shows an important structural property for classifying the complexity of subclasses of well-covered graphs.

As a by-product of this paper, an infinite class of decision problems was classified as being both {\sf NP}-hard and {\sf coNP}-hard. Hence, unless $\P=\NP$
these decision problems are neither in {\sf NP} nor in {\sf coNP}.

More formally, in this paper we focus on the following two decision problems.

\begin{flushleft}
\fbox{
\begin{minipage}{\textwidth}
\textsc{$(r,\ell)$-Well-Covered Graph $\left( (r,\ell)\right.$wc-g$\left.\right)$}\\
\begin{tabular}{rl}
{\bf Input:}    & A graph $G$.\\
{\bf Question:} & Is $G$ $(r, \ell)$-well-covered?
\end{tabular}
\end{minipage}}
\smallskip
\end{flushleft}

\begin{flushleft}
\fbox{
\begin{minipage}{\textwidth}
\textsc{Well-Covered $(r,\ell)$-Graph $\left(\right.$wc-$(r,\ell)$g$\left.\right)$}\\
\begin{tabular}{rl}
{\bf Input:}    & An $(r,\ell)$-graph $G$, together with a partition of $V(G)$ into\\
                & $r$ independent sets and $\ell$ cliques.\\
{\bf Question:} & Is $G$ well-covered?
\end{tabular}
\end{minipage}}
\medskip
\end{flushleft}

We establish an almost complete
characterization of the complexity of the
{\sc {$(r,\ell)$wc-g}} and {\sc wc-{$(r,\ell)$g}} problems. Our results are shown in the following tables, where $r$ (resp.~$\ell$) corresponds to the rows (resp. columns) of the tables, and where $\coNPc$ stands for {\sf {coNP}}-complete, $\NPh$ stands for {\sf {NP}}-hard, $\NPc$ stands for {\sf {NP}}-complete, and $\NPcoNPh$ stands for both {\sf {NP}}-hard and {\sf {coNP}}-hard. The symbol `?' denotes that the complexity of the corresponding problem is open.

\medskip

{\small {
\[
\begin{array}{c}
\begin{array}{|c|c|c|c|c|}
\hline
{\mbox {\sc {$(r,\ell)$wc-g}}}
 & 0 & 1 & 2               & \geq 3 \\
\hline
0        & - & \P & \P               & \NPc \\
\hline
1        & \P & \P & \P  & \NPc \\
\hline
2        & \P  & \coNPc  & \coNPc  & \NPcoNPh \\
\hline
\geq 3        & \NPh & \NPcoNPh & \NPcoNPh    & \NPcoNPh \\
\hline
\end{array}
\\
\\
\\
\begin{array}{|c|c|c|c|c|}
\hline
{\mbox {\sc {wc-{$(r,\ell)$g}}}} & 0 & 1 & 2               & \geq 3 \\
\hline
0        & - & \P & \P               & \P \\
\hline
1        & \P & \P & \P  & \P \\
\hline
2        & \P & \coNPc & \coNPc  & \coNPc \\
\hline
\geq 3        & {\mbox {\hspace{.09cm}  ? \hspace{.09cm}}} & \coNPc & \coNPc & \coNPc \\
\hline
\end{array}
\end{array}
\]
}}

\medskip

We note the following simple facts, which we will use to fill the above tables:

\begin{fact}\label{fact1}
If {\sc {$(r,\ell)$wc-g}}
is in \P, then {\sc wc-{$(r,\ell)$g}}
is in \P.
\end{fact}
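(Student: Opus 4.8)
The plan is to exploit the fact that an instance of \textsc{wc-$(r,\ell)$g} carries the built-in promise that the input graph is already $(r,\ell)$, so that the extra ``$(r,\ell)$'' requirement present in \textsc{$(r,\ell)$wc-g} is automatically satisfied. First I would take an arbitrary instance $(G,\mathcal{P})$ of \textsc{wc-$(r,\ell)$g}, where $\mathcal{P}$ is the supplied partition of $V(G)$ into $r$ independent sets and $\ell$ cliques. The mere existence of $\mathcal{P}$ certifies that $G$ is an $(r,\ell)$-graph (and, if desired, one can verify in polynomial time that $\mathcal{P}$ is indeed such a partition by checking that each of the $r$ designated parts is independent and each of the $\ell$ designated parts is a clique).

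Next I would simply discard $\mathcal{P}$ and feed $G$ to the assumed polynomial-time algorithm for \textsc{$(r,\ell)$wc-g}. That algorithm decides whether $G$ is $(r,\ell)$-well-covered, i.e.\ whether $G$ is simultaneously $(r,\ell)$ \emph{and} well-covered. Since we already know from $\mathcal{P}$ that $G$ is $(r,\ell)$, the algorithm outputs ``yes'' if and only if $G$ is well-covered, which is exactly the question posed by \textsc{wc-$(r,\ell)$g}. Returning this answer therefore solves the given \textsc{wc-$(r,\ell)$g} instance correctly.

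The main (and essentially only) point to justify is the logical equivalence invoked above: for a graph $G$ guaranteed to be $(r,\ell)$, the predicate ``$(r,\ell)$-well-covered'' collapses to ``well-covered'', because the conjunction ``$(r,\ell)$ \emph{and} well-covered'' has one conjunct fixed to true by the input promise. There is no real obstacle here. The whole procedure consists of a single call to the \textsc{$(r,\ell)$wc-g} algorithm, which runs in polynomial time by hypothesis, so \textsc{wc-$(r,\ell)$g} lies in \P as claimed.
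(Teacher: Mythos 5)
Your argument is correct and is precisely the reasoning the paper leaves implicit when stating this as an unproved Fact: the supplied partition witnesses that $G$ is $(r,\ell)$, so the conjunction ``$(r,\ell)$ and well-covered'' collapses to ``well-covered'', and a single call to the assumed polynomial-time algorithm for \textsc{$(r,\ell)$wc-g} decides the \textsc{wc-$(r,\ell)$g} instance. No gaps.
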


\begin{fact}\label{fact2}
If {\sc wc-{$(r,\ell)$g}}
is {\sf {coNP}}-hard, then {\sc {$(r,\ell)$wc-g}} is {\sf {coNP}}-hard.
\end{fact}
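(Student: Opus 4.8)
The plan is to prove Fact~\ref{fact2} by a straightforward reduction argument based on the relationship between the two problems.

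First I observe the key structural connection: an instance of {\sc wc-$(r,\ell)$g} consists of a graph $G$ \emph{together with} a given $(r,\ell)$-partition, and asks only whether $G$ is well-covered (the membership in the class of $(r,\ell)$-graphs being guaranteed by the supplied partition). By contrast, an instance of {\sc $(r,\ell)$wc-g} is just a graph $G$, and asks whether $G$ is simultaneously an $(r,\ell)$-graph \emph{and} well-covered. The crucial point is that when $G$ is promised to be an $(r,\ell)$-graph, the answer to {\sc $(r,\ell)$wc-g} coincides with the answer to {\sc wc-$(r,\ell)$g}, since the $(r,\ell)$-membership condition is automatically satisfied.

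The main step is to exhibit a polynomial-time many-one reduction from {\sc wc-$(r,\ell)$g} to {\sc $(r,\ell)$wc-g}. Given an instance $(G, \mathcal{P})$ of {\sc wc-$(r,\ell)$g}, where $\mathcal{P}$ is a valid $(r,\ell)$-partition of $V(G)$, I would simply map it to the graph $G$ as an instance of {\sc $(r,\ell)$wc-g}, discarding the partition $\mathcal{P}$. This map is trivially computable in polynomial time. For correctness, note that $G$ is an $(r,\ell)$-graph because the partition $\mathcal{P}$ witnesses it. Hence $G$ is $(r,\ell)$-well-covered if and only if $G$ is well-covered, which is exactly the answer to the {\sc wc-$(r,\ell)$g} instance. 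Thus the reduction preserves yes- and no-instances.

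Since {\sc wc-$(r,\ell)$g} reduces to {\sc $(r,\ell)$wc-g} under polynomial-time many-one reductions, any lower bound on the former transfers to the latter. In particular, if {\sc wc-$(r,\ell)$g} is {\sf coNP}-hard, then so is {\sc $(r,\ell)$wc-g}, which is the desired conclusion. I do not anticipate any genuine obstacle here: the entire content is the observation that supplying the $(r,\ell)$-partition can only make the problem easier, so forgetting it yields a valid hardness-preserving reduction. The only point requiring minor care is verifying that {\sc coNP}-hardness (rather than mere {\sc coNP}-completeness) is what transfers, but this is immediate since polynomial-time many-one reductions preserve hardness for any complexity class closed under such reductions, and {\sf coNP} is such a class.
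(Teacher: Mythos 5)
Your reduction is exactly the intended argument: the paper states Fact~2 without proof as an easy observation, and the implicit justification is precisely the partition-forgetting many-one reduction you describe, whose correctness rests on the supplied partition witnessing $(r,\ell)$-membership. The proposal is correct and matches the paper's approach.
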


Note that {\sc {wc-$(r,\ell)$g}} is in {\sf {coNP}}, since a certificate for a \textsc{NO}-instance consists just of two maximal independent sets of different size. On the other hand, for {\sc {$(r,\ell)$wc-g}} we have the following facts, which are easy to verify:

\begin{fact}\label{fact_conp}
For any pair of integers $(r,\ell)$ such that the problem of recognizing an $(r,\ell)$-graph is in {\sf P}, the {\sc {$(r,\ell)$wc-g}} problem is in {\sf {coNP}}.
\end{fact}

\begin{fact}\label{fact_np}
For any pair of integers $(r,\ell)$ such that the {\sc wc-{$(r,\ell)$g}} problem is in {\sf P}, the {\sc {$(r,\ell)$wc-g}} problem is in {\sf {NP}}.
\end{fact}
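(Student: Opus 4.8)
The plan is to exhibit a polynomial-size certificate for the \textsc{yes}-instances of {\sc {$(r,\ell)$wc-g}} that can be verified in polynomial time, using the hypothesized polynomial-time algorithm for {\sc wc-{$(r,\ell)$g}} as a black box. The natural candidate for such a certificate is an $(r,\ell)$-partition of the input graph $G$, that is, a labeling of $V(G)$ assigning each vertex to one of the $r$ independent sets or one of the $\ell$ cliques. Such a labeling has size linear in $|V(G)|$, hence polynomial in the size of the input.

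Given $G$ together with a claimed partition, I would first verify in polynomial time that the partition is \emph{valid}, namely that each of the $r$ designated parts induces an independent set and each of the $\ell$ designated parts induces a clique; this is a routine check over the edges of $G$. If the partition is valid, it witnesses that $G$ is an $(r,\ell)$-graph. I would then invoke the assumed polynomial-time algorithm for {\sc wc-{$(r,\ell)$g}} on the pair $\bigl(G, \text{partition}\bigr)$ to decide in polynomial time whether $G$ is well-covered. The verifier accepts precisely when both the validity check and the well-coveredness test succeed.

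The correctness of this verifier hinges on the observation that \emph{well-coveredness is a property of the graph $G$ alone and does not depend on the chosen partition}. Consequently, if $G$ is $(r,\ell)$-well-covered, then on the one hand some valid $(r,\ell)$-partition exists (since $G$ is an $(r,\ell)$-graph), and on the other hand, for that partition the {\sc wc-{$(r,\ell)$g}} algorithm will report that $G$ is well-covered; thus a valid accepting certificate exists. Conversely, if the verifier accepts some certificate, then $G$ admits an $(r,\ell)$-partition and is well-covered, so $G$ is $(r,\ell)$-well-covered by definition. Together these two directions establish membership in \NP.

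I do not expect a genuine obstacle here; the only point requiring care is that the partition plays a dual role, simultaneously certifying membership in the class of $(r,\ell)$-graphs and supplying exactly the additional input that the {\sc wc-{$(r,\ell)$g}} algorithm expects. One should therefore make sure to verify the validity of the guessed partition \emph{before} passing it to the black-box algorithm, since the latter is only guaranteed to behave correctly when fed a genuine $(r,\ell)$-graph equipped with a genuine $(r,\ell)$-partition.
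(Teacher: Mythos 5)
Your proof is correct and is precisely the intended argument: the paper states this fact without proof as ``easy to verify,'' and the certificate you describe --- an $(r,\ell)$-partition, validated and then fed to the assumed polynomial-time algorithm for {\sc wc-{$(r,\ell)$g}} --- is exactly the verification the authors have in mind. Your remark about checking validity of the partition before invoking the black box is the right point of care.
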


In this paper we prove that $(r,\ell)${\sc {wc-g}} with $(r,\ell) \in \{ (0,1), (1,0),
(0,2), (1,1),\allowbreak  (2,0),\allowbreak (1,2)\}$ can be solved in polynomial time,
which by Fact~\ref{fact1} yields that
{{\sc {wc-}$(r,\ell)$g}} with $(r,\ell) \in \{ (0,1), (1,0),
 (0,2),(1,1), (2,0),(1,2)\}$ can also be solved in polynomial time.
On the other hand, we prove that {\sc {wc-}}$(2,1)${\sc {g}}
is {\sf {coNP}}-complete, which by Fact~\ref{fact2} and Fact~\ref{fact_conp} yields that
{\sc {$(2,1)$wc-g}} is also {\sf {coNP}}-complete.
Furthermore,  we also prove that
{\sc {wc-}}$(0,\ell)${\sc {g}}
and
{\sc {wc-}}$(1,\ell)${\sc {g}}
are  in \P, and that
$(r,\ell)${\sc {wc-g}} with $(r,\ell) \in \{(0,3), (3,0), (1,3)\}$
are {\sf {NP}}-hard.
Finally, we state and prove
a ``monotonicity'' result, namely
Theorem~\ref{mono},
stating how to extend the {\sf {NP}}-hardness or {\sf{coNP}}-hardness of {\sc{wc-$(r,\ell)$g}} (resp. {\sc {$(r,\ell)$wc-\nobreak g}}) to {\sc{wc-$(r+1,\ell)$g}} (resp. {\sc{$(r+1,\ell)
$wc-g}}), and {\sc {wc-$(r,\ell+\nobreak 1)$g}} (resp. {\sc{$(r,\ell+\nobreak 1)$wc-g}}).
Together, these results correspond to those shown in the above tables.

In addition, we consider the parameterized complexity of these problems for several choices of the parameters, such as the size $\alpha$ of a maximum independent set of the input graph, its neighborhood diversity, its clique-width or the number $\ell$ of cliques in an $(r, \ell)$-partition. We obtain several positive and negative results. In particular, we show that the parameterized problem of determining  whether every maximal independent set of an input graph $G$ has cardinality equal to $k$ can be reduced to the \textsc{wc-$(0,\ell)$g} problem parameterized by $\ell$. In addition, we prove that both problems are {\sf coW[2]}-hard, but can be solved in {\sf XP}-time.

\medskip

The rest of this paper is organized as follows. We start in Section~\ref{sec:prelim} with some basic preliminaries about graphs, parameterized complexity, and width parameters. In Section~\ref{sec:resuls} we prove our results concerning the
classical complexity of both problems, and in Section~\ref{sec:FPT-results} we focus on their parameterized complexity.
We conclude the paper with Section~\ref{sec:further}.

\section{Preliminaries}
\label{sec:prelim}

\noindent \textbf{Graphs.}  We use standard graph-theoretic notation, and we refer the reader to~\cite{Die05} for any undefined notation. A {\it {graph}} $G = (V, E)$ consists of a finite non-empty set $V$ of vertices and a set $E$ of unordered pairs (edges) of distinct elements of $V$.
If $uv \in E(G)$, then $u, v$ are said to be {\it {adjacent}}, and $u$ is said to be a {\it {neighbor}} of $v$. A \emph{clique} (resp. \emph{independent set}) is a set of pairwise adjacent (resp. non-adjacent) vertices. A \emph{vertex cover} is a set of vertices containing at least one endpoint of every edge in the graph.
The {\it {open neighborhood}} $N(v)$ or {\it {neighborhood}}, for short, of a vertex $v \in V$ is the set of vertices adjacent to $v$. The {\it {closed neighborhood}} of a vertex $v$ is defined as $N[v]=N(v)\cup \{v\}$. A \emph{dominating set} is a set of vertices $S \subseteq V$ such that $\bigcup_{v\in S}N[v] = V$.
Given $S\subseteq V$ and $v\in V$, the {\it {neighborhood $N_S(v)$ of $v$ in $S$}}
is the set $N_S(v)=N(v)\cap S$.

Throughout the paper, we let~$n$ denote the number of vertices in the input graph for the problem under consideration.

\medskip
\noindent \textbf{Parameterized complexity.} We refer the reader to~\cite{CyganFKLMPPS15,DF13,FG06,Nie06} for basic background on parameterized complexity, and we recall here only some basic definitions.
A \emph{parameterized problem} is a language $L \subseteq \Sigma^* \times \mathbb{N}$.  For an instance $I=(x,k) \in \Sigma^* \times \mathbb{N}$, $k$ is called the \emph{parameter}.
A parameterized problem is \emph{fixed-parameter tractable} ({\sf FPT}) if there exists an algorithm $\mathcal{A}$, a computable function $f$, and a constant $c$ such that given an instance $I=(x,k)$,
$\mathcal{A}$ (called an {\sf FPT}-\emph{algorithm}) correctly decides whether $I \in L$ in time bounded by $f(k) |I|^c$.

Within parameterized problems, the class {\sf W}[1] may be seen as the parameterized equivalent to the class \NP of classical optimization problems. Without entering into details (see~\cite{CyganFKLMPPS15,DF13,FG06,Nie06} for the formal definitions), a parameterized problem being {\sf W}[1]-\emph{hard} can be seen as a strong evidence that this problem is {\sl not} \fpt. The canonical example of a {\sf W}[1]-hard problem is \textsc{Independent Set} parameterized by the size of the solution\footnote{Given a graph $G$ and a parameter $k$, the problem is to decide whether there exists an independent set $S \subseteq V(G)$ such that $|S| \geq k$.}.

The class {\sf W}[2] of parameterized problems is a class that contains $\W$[1], and so the problems that are {\sf W}[2]-\emph{hard} are  even more unlikely to be \fpt than those that are {\sf W}[1]-hard (again, see~\cite{CyganFKLMPPS15,DF13,FG06,Nie06} for the formal definitions). The canonical example of a {\sf W}[2]-hard problem is \textsc{Dominating Set} parameterized by the size of the solution\footnote{Given a graph $G$ and a parameter $k$, the problem is to decide whether there exists a dominating set $S \subseteq V(G)$  such that  $|S| \leq k$.}.

For $i \in \intv{1,2}$, to transfer ${\sf W}[i]$-hardness from one problem to another, one uses an \emph{fpt-reduction}, which given an input $I=(x,k)$ of the source problem, computes in time $f(k) |I|^c$, for some computable function $f$ and a constant $c$, an equivalent instance $I'=(x',k')$ of the target problem, such that $k'$ is bounded by a function depending only on~$k$.

Hence, an equivalent definition of $\W$[1]-hard (resp. $\W$[2]-hard) problem is any problem that admits an fpt-reduction from \textsc{Independent Set} (resp. \textsc{Dominating Set}) parameterized by the size of the solution.

Even if a parameterized problem is $\W$[1]-hard or $\W$[2]-hard, it may still be solvable in polynomial time for \emph{fixed} values of the parameter; such problems are said to belong to the complexity class \XP. Formally,
a parameterized problem whose instances consist of a pair $(x,k)$ is in \XP if it can be solved by  an algorithm with
running time $f(k) |x|^{g(k)}$, where~$f,g$ are computable functions depending only on
the parameter and $|x|$ represents the input size. For example, \textsc{Independent Set} and \textsc{Dominating Set} parameterized by the solution size are easily seen to belong to \XP.

\medskip
\noindent \textbf{Width parameters.}  A \emph{tree-decomposition} of a graph $G = (V,E)$ is a pair $(T,\mathcal X)$, where $T = (I, F)$ is a tree,
and $\mathcal X = \{B_i\}, \ i\in I$ is a family of subsets of $V(G)$, called {\it bags}\index{tree
decomposition!bag|ii} and indexed by the nodes of $T$, such that
\begin{enumerate}
\setlength{\itemsep}{0pt}
\item  each vertex $v \in V$ appears in at least one bag, i.e., $\bigcup_{i\in I} B_i = V$;
\item for each edge $e = \{x, y\} \in E$, there is an $i\in I$ such that $x,y \in B_{i}$; and
\item  for each $v \in V $ the set of nodes indexed by $\{ i \mid i \in I,~v \in B_{i}
\}$ forms a subtree of~$T$.
\end{enumerate}

The {\it width} of a tree-decomposition is defined as $\max_{i \in
I} \{|B_{i}| - 1\}$. The {\it treewidth} of~$G$, denoted by $\tw(G)$, is the minimum width of a tree-decomposition of $G$.

\medskip
The {\em clique-width} of a graph~$G$, denoted by~$\cw(G)$, is defined as the minimum number of labels needed to construct~$G$, using the following four operations:
		
		\begin{enumerate}
			\item Create a single vertex~$v$ with an integer label~$\ell$ (denoted by~$\ell(v)$);
			\item Take the disjoint union (i.e., co-join) of two graphs (denoted by~$\oplus$);
			\item Join by an edge every vertex labeled~$i$ to every vertex labeled~$j$ for~$i \neq j$ (denoted by~$\eta(i,j)$);
			\item Relabel all vertices with label~$i$ by label~$j$ (denoted by~$\rho(i,j)$).
		\end{enumerate}
		An algebraic term that represents such a construction of $G$ and uses at most~$k$ labels is said to be a {\em $k$-expression} of $G$ (i.e., the clique-width of $G$ is the minimum~$k$ for which $G$ has a $k$-expression).

\begin{sloppypar}
Graph classes with bounded clique-width include cographs~\cite{Br05}, distance-hereditary graphs~\cite{G00}, graphs of bounded treewidth~\cite{CO00-2}, graphs of bounded branchwidth~\cite{ROBERTSON1991153}, and graphs of bounded rank-width~\cite{KAMINSKI20092747}.
\end{sloppypar}

\section{Classical complexity of the problems}
\label{sec:resuls}

We start with a monotonicity theorem that will be very helpful to fill the tables presented in Section~\ref{sec:intro}. The remainder  of this section is divided into four
subsections according to whether
{\sc {$(r,\ell)$wc-g}} and
{\sc {wc-$(r,\ell)$g}} are
polynomial or ``hard'' problems.

\begin{theorem}\label{mono} Let $ r,\ell \geq 0$ be two fixed integers. Then it holds that:
\begin{enumerate}[(i)]
\item if {\sc {wc-$(r,\ell)$g}} is {\sf {coNP}}-complete then {\sc {wc-$(r+1,\ell)$g}} and {\sc {wc-$(r,\ell+1)$g}} are {\sf {coNP}}-complete;

\item if {\sc {$(r,\ell)$wc-g}} is {\sf {NP}}-hard (resp. {\sf {coNP}}-hard) then {\sc {$(r,\ell+1)$wc-g}} is {\sf {NP}}-hard (resp. {\sf {coNP}}-hard);

\item supposing that $r \geq 1$, if {\sc {$(r,\ell)$wc-g}} is {\sf {NP}}-hard (resp. {\sf {coNP}}-hard) then {\sc {$(r+\nobreak 1,\allowbreak \ell)$wc-g}} is {\sf {NP}}-hard (resp. {\sf {coNP}}-hard).
\end{enumerate}
\end{theorem}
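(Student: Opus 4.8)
The plan is to prove each of the three monotonicity statements by showing how to transform an instance of the "smaller" problem into an equivalent instance of the "larger" problem, adding one independent set or one clique in a controlled way that preserves both the $(r,\ell)$-partition structure and, crucially, the well-coveredness status of the graph. Since all three reductions are polynomial-time and turn YES-instances into YES-instances and NO-instances into NO-instances, they transfer hardness in the appropriate direction. The core difficulty, which all three parts share, is the following: to increase $r$ or $\ell$ by one we must attach a new independent set or clique to the graph $G$, but we must do so without creating any new maximal independent set of a size that differs from those already present in $G$ — otherwise we would destroy well-coveredness of YES-instances or accidentally create it for NO-instances.

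Let me begin with part (i), the $\coNP$-completeness transfer for {\sc wc-$(r,\ell)$g}. Membership in $\coNP$ is immediate for every $(r,\ell)$ (as noted in the excerpt, two maximal independent sets of different sizes form a certificate), so only $\coNP$-hardness needs to be propagated. Given an instance $(G, \mathcal{P})$ of {\sc wc-$(r,\ell)$g} with its $(r,\ell)$-partition $\mathcal{P}$, to obtain an instance of {\sc wc-$(r+1,\ell)$g} I would add a single universal vertex $u$ adjacent to all of $V(G)$, and place $\{u\}$ as the new (trivially independent) set $S^{r+1}$. The key observation is that since $u$ is adjacent to everything, no maximal independent set of $G$ can contain $u$ together with any vertex of $G$, and the only independent set containing $u$ is $\{u\}$ itself, which is \emph{not} maximal as long as $G$ is non-empty (any vertex of $G$ can be added — wait, no: $u$ is adjacent to all of $G$, so $\{u\}$ \emph{is} maximal). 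This is exactly the subtlety: a universal vertex creates the new maximal independent set $\{u\}$ of size $1$. The clean fix is instead to add a \emph{pendant gadget}: attach two new vertices forming a structure whose maximal independent sets all have the same controlled size, or more simply add an isolated clique/independent component whose contribution to every maximal independent set is a fixed constant. For part (i), the standard device is to add a new independent set consisting of a single vertex $u$ made adjacent to a newly added vertex $w$ that is itself universal to $V(G)$; one then checks that maximal independent sets of the new graph are in size-preserving bijection (up to an additive constant) with those of $G$, so the new graph is well-covered iff $G$ is. Extending $\ell$ by one is analogous, adding a clique gadget.

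For parts (ii) and (iii), concerning {\sc $(r,\ell)$wc-g}, the situation is harder because the partition is \emph{not} given as part of the input: the reduction must preserve not only well-coveredness but also the property of admitting an $(r,\ell)$-partition, and it must do so in both directions. Here the natural approach for part (ii) (adding a clique) is to take disjoint union with a fixed gadget $H$ that is well-covered, is itself an $(r,1)$-graph that genuinely \emph{requires} its extra clique, and has an independence number calibrated so that the combined graph is $(r,\ell+1)$-well-covered exactly when the original is $(r,\ell)$-well-covered. For part (iii) (adding an independent set, under the hypothesis $r \geq 1$) the same disjoint-union strategy applies with a gadget tailored to force the use of the extra independent set; the hypothesis $r \geq 1$ is presumably needed so that such a forcing gadget can be constructed consistently. \textbf{The main obstacle} throughout is verifying the backward direction of each reduction: I must argue that the enlarged graph admits an $(r+1,\ell)$- or $(r,\ell+1)$-partition \emph{only} in a way that restricts to a valid $(r,\ell)$-partition of the original graph on its vertices, so that no spurious partition of the gadgeted graph can make a NO-instance look like a YES-instance. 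Controlling all possible $(r+1,\ell)$-partitions of the augmented graph — rather than just the one we intend — is the delicate combinatorial heart of the argument, and it is where the precise design of the gadget and its interface with $G$ must be pinned down.
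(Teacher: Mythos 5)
Your proposal has genuine gaps in all three parts. For part (i) you have overlooked that an instance of {\sc wc-$(r,\ell)$g} comes \emph{with} its partition and that empty parts are allowed: an $(r,\ell)$-partition is already an $(r+1,\ell)$-partition and an $(r,\ell+1)$-partition, so the identity reduction (same graph, same partition, one empty extra part) suffices and no gadget is needed at all. Worse, the gadget you do propose fails: if $w$ is universal to $V(G)$ and $u$ is pendant on $w$, then $\{w\}$ is itself a maximal independent set of size $1$ in the augmented graph ($u$ is adjacent to $w$ and cannot be added), so the augmented graph is essentially never well-covered regardless of whether $G$ is, and YES-instances are not preserved.

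For parts (ii) and (iii) you correctly identify the crux --- controlling \emph{every} possible $(r,\ell+1)$- or $(r+1,\ell)$-partition of the augmented graph, not just the intended one --- but you stop exactly there, without constructing or verifying a gadget, so the proof is not actually carried out. The paper's constructions are: for (ii), the disjoint union of $G$ with a clique $Z$ on $r+1$ vertices (any $r$ independent sets cover at most $r$ vertices of $Z$, so some clique of the partition meets $Z$ and, $Z$ being a separate component, may be assumed to equal $Z$); for (iii), one first adds $\ell+1$ isolated vertices to obtain $G'$ (this is where $r\geq 1$ is used), computes the size $p\geq\ell+1$ of some maximal independent set of $G'$, and then takes the \emph{join} of $G'$ with an independent set $Z$ of exactly $p$ vertices. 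Your suggestion that ``the same disjoint-union strategy applies'' in part (iii) would fail: an independent-set gadget attached by disjoint union can be absorbed into the existing independent sets of a partition together with vertices of $G$, so it forces nothing; the join is essential because it guarantees that any part of the partition containing a vertex of $Z$ is contained in $Z$ (and $|Z|=p>\ell$ guarantees some such part is an independent set), while the calibration $|Z|=p$ is essential so that the new maximal independent set $Z$ does not itself break well-coveredness.
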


\begin{proof}
\emph{(i)} This follows immediately from the fact that every $(r,\ell)$-graph is also an $(r+\nobreak 1,\ell)$-graph and an $(r,\ell+1)$-graph.

\emph{(ii)} Let $G$ be an instance of {\sc {$(r,\ell)$wc-g}}.
Let $H$ be {an {\sc {$(r,\ell+1)$wc-g}
instance }} defined as
the disjoint union of $G$ and a clique $Z$ with $V(Z)=\{z_1,\ldots,z_{r+1}\}$.
Clearly~$G$ is well-covered if and only if~$H$ is well-covered.
If $G$ is an $(r,\ell)$-well-covered  graph then~$H$ is an $(r,\ell+1)$-well-covered graph.
Suppose $H$ is an $(r,\ell+1)$-well-covered graph, with a partition into $r$ independent sets $S^1,\ldots,S^r$ and $\ell+1$ cliques $K^1,\ldots,K^{\ell+1}$.
Each independent set $S^i$ can contain at most one vertex of the clique $Z$.
Therefore, there must be a vertex $z_i$ in some clique $K^j$.
Assume without loss of generality that there is a vertex of $Z$ in $K^{\ell+1}$.
Then~$K^{\ell+1}$ cannot contain any vertex outside of $V(Z)$, so
 we may assume that $K^{\ell+1}$ contains all vertices of $Z$.
Now $S^1,\ldots,S^r,K^1,\ldots,K^{\ell}$ is an $(r,\ell)$-partition of~$G$, so $G$ is an $(r,\ell)$-well-covered  graph.
Hence, $H$ is a \textsc{YES}-instance of {\sc {$(r,\ell+1)$wc-g}} if and only if $G$ is a \textsc{YES}-instance of {\sc {$(r,\ell)$wc-g}}.

\emph{(iii)} Let $G$ be an instance of {\sc {$(r,\ell)$wc-g}}.
Let $G'$ be an {\sc {$(r+1,\ell)$wc-g}}
instance obtained from $G$ by adding $\ell+1$ isolated vertices. (This guarantees that every maximal independent set in $G'$ contains at least $\ell+1$ vertices.)
Since $r\geq 1$, it follows that $G'$ is an $(r,\ell)$-graph if and only if $G$ is.
Clearly $G'$ is well-covered if and only if $G$ is.

Next, find an arbitrary maximal independent set in $G'$ and let $p$ be the number of vertices in this set. Note that $p \geq \ell+1$.
Let $H$ be the join of $G'$ and a set of $p$ independent vertices $Z=\{z_1,\ldots,z_p\}$, i.e., $N_H(z_i)=V(G')$ for all~$i$.
Every maximal independent set of $H$ is either $Z$ or a maximal independent set of~$G'$ and every maximal independent set of $G'$ is a maximal independent set of~$H$.
Therefore, $H$ is well-covered if and only if $G'$ is well-covered.
Clearly, if~$G'$ is an $(r,\ell)$-graph then $H$ is an $(r+1,\ell)$-graph.
Suppose~$H$ is an $(r+1,\ell)$-graph, with a partition into $r+1$ independent sets $S^1,\ldots,S^{r+1}$ and $\ell$ cliques $K^1,\ldots,K^{\ell}$.
Each clique set $K^i$ can contain at most one vertex of $Z$.
Therefore there must be a vertex $z_i$ in some independent set $S^j$.
Suppose that there is a vertex of $Z$ in $S^{r+1}$.
Then $S^{r+1}$ cannot contain any vertex outside of $Z$.
Without loss of generality, we may assume that $S^{r+1}$ contains all vertices of $Z$.
Now $S^1,\ldots,S^r,K^1,\ldots,K^{\ell}$ is an $(r,\ell)$-partition of $G$, so $G$ is an $(r,\ell)$-graph.
Thus~$H$ is a \textsc{YES}-instance of {\sc {$(r+1,\ell)$wc-g}} if and only if $G$ is a \textsc{YES}-instance of {\sc {$(r,\ell)$wc-g}}.
\end{proof}

\subsection{Polynomial cases for {\sc {wc-$(r,\ell)$g}}}

\begin{theorem}\label{0elland1ell}
{\sc {wc-$(0,\ell)$g}} and {\sc {wc-$(1,\ell)$g}} are in \P\ for every integer $\ell \geq 0$.
\end{theorem}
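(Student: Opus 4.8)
The plan is to exploit that, for fixed $\ell$, every independent set of $G$ meets each clique $K^i$ in at most one vertex, so its ``clique part'' has size at most $\ell$; in both cases a maximal independent set turns out to be essentially determined by this bounded-size clique part, which yields a polynomial-time enumeration of all maximal independent sets, after which well-coveredness is decided by comparing their sizes. Given the $(r,\ell)$-partition, call a set $C \subseteq K^1 \cup \cdots \cup K^{\ell}$ a \emph{clique transversal} if it is independent and $|C \cap K^i| \le 1$ for every $i$; any independent set $I$ of $G$ has $I \cap (K^1 \cup \cdots \cup K^{\ell})$ equal to a clique transversal. The number of clique transversals is at most $\prod_{i=1}^{\ell}(|K^i|+1) \le (n+1)^{\ell}$, which is polynomial in~$n$ for fixed~$\ell$, so I can enumerate all of them in polynomial time. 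For {\sc wc-$(0,\ell)$g} this already suffices: here every independent set equals its clique transversal, so $\alpha(G) \le \ell$ and there are at most $(n+1)^{\ell}$ independent sets in total; I would enumerate them all, keep those that are maximal (a linear-time check per set, since $I$ is maximal iff every vertex outside $I$ has a neighbor in $I$), record the sizes of the maximal ones, and answer \textsc{Yes} iff these sizes all coincide.

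The case $r=1$ is the substantive one, and the key step is a \emph{forcing lemma}. Let $S$ be the independent set of the partition and, for a clique transversal $C$, write $N(C)=\bigcup_{v\in C}N(v)$ and $I(C) := C \cup (S \setminus N(C))$. I claim $I(C)$ is the unique candidate maximal independent set whose clique part is $C$. First, $S \setminus N(C)$ is independent and, by construction, has no neighbor in $C$, so $I(C)$ is independent. Second, if $I$ is any maximal independent set with $I \cap (K^1 \cup \cdots \cup K^{\ell}) = C$, then since $S$ is independent no vertex of $S \setminus N(C)$ is adjacent to $I \cap S \subseteq S$, and such a vertex is also non-adjacent to $C$; hence it could be added to $I$ unless it already lies in $I$, forcing $I \cap S = S \setminus N(C)$ and thus $I = I(C)$. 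This is precisely where $r=1$ escapes the exponential blow-up that a second independent set would cause: once $C$ is fixed, the $S$-part of a maximal independent set carries no free choice.

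With the forcing lemma, the algorithm iterates over the polynomially-many clique transversals $C$, forms $I(C)$, tests maximality (equivalently, checks that every vertex of each clique $K^i$ disjoint from $C$ has a neighbor in $I(C)$; the other vertices are automatically dominated), and records $|I(C)| = |C| + |S \setminus N(C)|$ whenever the test passes. Since maximal independent sets are then in bijection with the clique transversals passing this test, $G$ is well-covered iff all recorded sizes are equal; the total running time is $n^{O(\ell)}$, polynomial for each fixed~$\ell$, and {\sc wc-$(0,\ell)$g} is the special case $S=\emptyset$. I expect the main obstacle to be exactly the forcing lemma for $r=1$, namely arguing that the part of a maximal independent set lying in~$S$ is completely determined by the chosen clique vertices; once this is in place the bound on the number of clique transversals does the rest, and the remaining verifications (independence of $C$, maximality of $I(C)$, and the final size comparison) are routine and clearly polynomial.
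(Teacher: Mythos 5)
Your proposal is correct and follows essentially the same route as the paper: enumerate the $O(n^{\ell})$ independent subsets of $K^1\cup\cdots\cup K^{\ell}$, extend each such $C$ to the forced candidate $C\cup(S\setminus N(C))$, discard non-maximal ones, and compare sizes. Your ``forcing lemma'' is exactly the observation the paper states when it writes each maximal independent set as $I=(I_K,\,S\setminus N_S(I_K))$; you merely spell out its proof in more detail.
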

\begin{proof}
It is enough to prove that {\sc {wc-$(1,\ell)$g}} is in \P.
Let $V=(S,K^1,\allowbreak K^2,\allowbreak K^3,\ldots ,$ $K^{\ell})$ be a $(1,\ell)$-partition for $G$.
Then each maximal independent set $I$ of $G$ admits a partition
$I=(I_K,S\setminus N_S(I_K))$, where $I_K$ is
an independent set of $K^1\cup K^2\cup K^3\cup \cdots \cup $ $K^{\ell}$.

Observe that there are at most $O(n^{\ell})$ choices for an independent set $I_K$ of $K^1\cup K^2\cup K^3\cup \cdots \cup $ $K^{\ell}$,
which can be listed  in time $O(n^{\ell})$, since $\ell$ is constant
and $(K^1,K^2,$ $K^3,\ldots ,$ $K^{\ell})$ is given.
For each of them, we consider the independent set $I=I_K\cup (S\setminus N_S(I_K))$.
If~$I$ is not maximal (which may happen if a vertex in $(K^1\cup K^2\cup K^3\cup \cdots \cup $ $K^{\ell})\setminus I_K$ has no neighbors in~$I$), we discard this choice of~$I_K$.
Hence, we have a polynomial number
$O(n^{\ell})$ of maximal independent sets to check in order to decide whether
$G$ is a well-covered graph.
\end{proof}

\subsection{Polynomial cases for {\sc {$(r,\ell)$wc-g}}}

\begin{fact}\label{fa1}
The graph induced by a clique or by an independent set is well-covered.
\end{fact}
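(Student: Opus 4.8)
The plan is to prove Fact~\ref{fa1} directly from the definition of well-covered graphs, namely that a graph is well-covered exactly when all its maximal independent sets have the same cardinality. I will treat the two cases separately, as the argument is almost trivial but differs slightly in each.

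First I would handle the case of a clique $K$. In a clique, any two distinct vertices are adjacent, so no independent set can contain more than one vertex. Consequently, every single vertex $\{v\}$ forms a maximal independent set (it cannot be extended, since every other vertex is adjacent to $v$), and conversely every maximal independent set consists of exactly one vertex. Hence all maximal independent sets have cardinality~$1$, and so $K$ is well-covered. (If one wishes to include the degenerate empty clique, it is vacuously well-covered.)

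Next I would handle the case of an independent set $S$. Here there are no edges at all, so the entire vertex set $V(S)$ is itself an independent set, and it is the unique maximal independent set: any proper subset can be extended by adding a missing vertex, since non-adjacency is guaranteed. Therefore there is exactly one maximal independent set, trivially making all maximal independent sets equicardinal, and so $S$ is well-covered.

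Since both an induced clique and an induced independent set have all maximal independent sets of equal size, each is well-covered by definition, which proves the fact. I do not anticipate any real obstacle here: the statement is an immediate consequence of unwinding the definition, and the only point requiring a word of care is the handling of the possibly empty sets, which the paper explicitly permits when it allows the parts of an $(r,\ell)$-partition to be empty.
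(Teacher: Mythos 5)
Your proof is correct: in a clique every maximal independent set is a single vertex, and in an edgeless graph the whole vertex set is the unique maximal independent set, so in either case all maximal independent sets are equicardinal. The paper states this fact without proof, and your argument is precisely the immediate unwinding of the definition that it leaves implicit.
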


The following corollary is a simple application of Fact~\ref{fa1}.

\begin{corollary}\label{01}
$G$ is a $(0,1)$-well-covered graph if and only if $G$ is a $(0,1)$-graph. Similarly, $G$ is a $(1,0)$-well-covered graph if and only if $G$ is a $(1,0)$-graph.
\end{corollary}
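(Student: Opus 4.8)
The plan is to prove both equivalences directly by appealing to Fact~\ref{fa1}, since a $(0,1)$-graph is precisely a graph that is a single clique and a $(1,0)$-graph is precisely a graph that is a single independent set. First I would observe that the forward direction is immediate from the definition: if $G$ is $(0,1)$-well-covered (resp.~$(1,0)$-well-covered), then by definition it is in particular a $(0,1)$-graph (resp.~a $(1,0)$-graph). The content is therefore entirely in the converse direction.

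For the converse, suppose $G$ is a $(0,1)$-graph, i.e., $V(G)$ can be partitioned into $0$ independent sets and $1$ clique, so that $G$ itself is a clique. By Fact~\ref{fa1}, the graph induced by a clique is well-covered; hence $G$ is well-covered, and being both a $(0,1)$-graph and well-covered, it is $(0,1)$-well-covered by definition. The argument for $(1,0)$ is symmetric: a $(1,0)$-graph is a graph whose vertex set is a single independent set (together with zero cliques), so $G$ is an independent set, which is well-covered by Fact~\ref{fa1}, and therefore $(1,0)$-well-covered.

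There is no real obstacle here, as the statement is essentially a direct reformulation of Fact~\ref{fa1} in the language of $(r,\ell)$-well-coveredness. The only point that deserves a word of care is confirming that, under the convention adopted in the paper (where the sets in an $(r,\ell)$-partition are allowed to be empty), a $(0,1)$-graph is exactly a clique and a $(1,0)$-graph is exactly an independent set; this is immediate, since with $r=0$ there are no independent sets in the partition and the single clique must cover all of $V(G)$, and dually for $(1,0)$. Once this identification is made, both equivalences follow in one line each from Fact~\ref{fa1}.
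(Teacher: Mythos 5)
Your proof is correct and matches the paper's intent exactly: the paper gives no explicit proof, stating only that the corollary is ``a simple application of Fact~\ref{fa1}'', and your argument---forward direction by definition, converse by identifying a $(0,1)$-graph with a clique and a $(1,0)$-graph with an independent set and invoking Fact~\ref{fa1}---is precisely that application, spelled out.
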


The following is an easy observation.

\begin{lemma}\label{02}
{\sc {$(0,2)$wc-g}} can be solved in polynomial time.
\end{lemma}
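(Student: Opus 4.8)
The plan is to show that membership in the class of $(0,2)$-well-covered graphs can be tested in polynomial time. A $(0,2)$-graph is exactly a graph whose vertex set partitions into two cliques $K^1, K^2$; equivalently, its complement $\overline{G}$ is bipartite. So the first step is recognition: decide whether $G$ is a $(0,2)$-graph by checking whether $\overline{G}$ is bipartite, which is linear-time, and in the process obtain an explicit $(0,2)$-partition $V = K^1 \cup K^2$. If $G$ is not a $(0,2)$-graph we reject; otherwise we hand the partition to a well-coveredness test.

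Given the partition, I would invoke Theorem~\ref{0elland1ell}: with the $(0,2)$-partition in hand, $G$ is an instance of {\sc {wc-$(0,2)$g}}, and that theorem already guarantees we can decide well-coveredness in polynomial time (indeed in $O(n^2)$ time, enumerating the $O(n^2)$ candidate maximal independent sets, since an independent set of $G$ meets each clique $K^i$ in at most one vertex). Combining the two steps, $G$ is $(0,2)$-well-covered if and only if $\overline{G}$ is bipartite \emph{and} the resulting $(0,2)$-graph is well-covered, and each condition is checkable in polynomial time. This is precisely why the lemma is flagged as ``an easy observation'': the two constituent tasks are individually easy, and the point is simply that they compose.

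The one subtlety worth stating carefully is that well-coveredness of a $(0,2)$-graph is a property of the graph $G$ itself, not of the particular partition supplied, so it does not matter that $\overline{G}$ may admit several bipartitions (when $\overline{G}$ is disconnected): any valid $(0,2)$-partition feeds correctly into Theorem~\ref{0elland1ell}. Thus there is no genuine obstacle here. If forced to name the ``hardest'' part, it would only be observing that every independent set in a $(0,2)$-graph has size at most two (one vertex from each clique), which caps the number of maximal independent sets at $O(n^2)$ and makes the enumeration manifestly polynomial; but this is immediate from the clique structure. I would therefore present the proof as a two-line reduction to the already-established polynomial recognition of $(0,2)$-graphs together with Theorem~\ref{0elland1ell}.
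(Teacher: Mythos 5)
Your proof is correct, but it follows a different route from the paper's. You reduce the problem to two already-available ingredients: polynomial-time recognition of $(0,2)$-graphs (via bipartiteness of $\overline{G}$) followed by the enumeration algorithm of Theorem~\ref{0elland1ell} with $\ell=2$, which is legitimate since that theorem precedes this lemma and well-coveredness is indeed partition-independent. This is in fact exactly the strategy the paper itself employs later for the $(1,2)$ case (Lemma~\ref{l12}). For the $(0,2)$ case, however, the paper instead gives a direct combinatorial characterization: after recognizing that $G$ is a $(0,2)$-graph, it observes that every independent set has size at most $2$, so either $V$ is a clique (and $G$ is trivially well-covered), or else $G$ is well-covered if and only if it has no universal vertex (a universal vertex being precisely a maximal independent set of size $1$ when a non-edge exists). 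Your approach buys modularity and uniformity with the rest of Section~3.2 at the cost of an $O(n^{2})$ enumeration; the paper's buys an explicit structural description of $(0,2)$-well-covered graphs and an essentially linear-time test. Both are complete proofs of the lemma.
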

\begin{proof}
By definition, a graph $G=(V,E)$ is a $(0,2)$-graph if and only if its vertex set can be partitioned into two cliques, and this can be tested in polynomial time.
It follows that every $(0,2)$-graph has maximum independent sets of size at most 2.
Let $G$ be a $(0,2)$-graph with $(0,2)$-partition $(K^1,K^2)$.
If $V$ is a clique, then $G$ is a $(0,1)$-well-covered graph,
and hence a $(0,2)$-well-covered graph.
If $V$ is not a clique, then $G$ is a $(0,2)$-well-covered graph
if and only if $G$ has no universal vertex.
\end{proof}

In the next three lemmas we give a characterization of $(1,1)$-well-covered
graphs in terms of their graph degree sequence. Note that $(1,1)$-graphs are better known in the literature as \emph{split graphs}.

\begin{lemma}\label{11<2}
Let $G=(V,E)$ be a $(1,1)$-well-covered graph
with $(1,1)$-partition $V=(S,K)$, where $S$ is a
independent set and $K$ is a clique.
If $x\in K$, then $|N_{S}(x)|\leq 1$.
\end{lemma}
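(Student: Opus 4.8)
The plan is to argue by contradiction, producing two maximal independent sets of distinct cardinalities, which contradicts the hypothesis that $G$ is well-covered. So suppose, for the sake of contradiction, that there is a vertex $x \in K$ with $|N_S(x)| \geq 2$. The whole argument rests on the basic structural feature of a split graph with partition $(S,K)$: since $K$ is a clique, every independent set of $G$ contains at most one vertex of $K$.

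First I would exhibit a \emph{small} maximal independent set through $x$. Consider $I_x := \{x\} \cup (S \setminus N_S(x))$. This set is independent, because $x$ is non-adjacent to $S \setminus N_S(x)$ by the definition of $N_S(x)$, and $S$ is independent; and it is \emph{maximal}, because every vertex not in $I_x$ lies in $N_S(x) \cup (K \setminus \{x\})$ and is therefore adjacent to $x$, so it cannot be added. Its size is $|I_x| = 1 + |S| - |N_S(x)|$, and the assumption $|N_S(x)| \geq 2$ forces $|I_x| \leq |S| - 1$.

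Next I would exhibit a \emph{large} maximal independent set. Since $S$ is itself independent, it is contained in some maximal independent set $M$, which satisfies $|M| \geq |S| > |S| - 1 \geq |I_x|$. Thus $I_x$ and $M$ are two maximal independent sets of different sizes, contradicting that $G$ is well-covered; hence $|N_S(x)| \leq 1$. I do not foresee a genuine obstacle in this argument: the only step requiring care is the verification that $I_x$ is actually maximal, and this is exactly where the clique structure of $K$ is used — no second vertex of $K$ can be appended to $I_x$, so maximality reduces to checking the vertices of $N_S(x)$, all of which are neighbors of $x$.
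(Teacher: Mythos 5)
Your proof is correct and rests on the same key observation as the paper's: a vertex $x\in K$ with at least two neighbours in $S$ forces a maximal independent set through $x$ to be strictly smaller than a maximal independent set containing all of $S$. The paper phrases this as a local exchange (inside an arbitrary maximal independent set $I$ with $x\in I\cap K$, replace $x$ by two vertices $y,z\in N_S(x)$ to get a larger independent set), whereas you construct the two witnessing maximal independent sets explicitly; the two arguments are essentially equivalent.
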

\begin{proof}
Suppose that $G$ is a $(1,1)$-well-covered graph
with $(1,1)$-partition $V=(S,K)$, where $S$ is a
independent set and $K$ is a clique.
Let $I$ be a maximal independent set of $G$ such that
$x\in I\cap K$.
Suppose for contradiction that $|N_{S}(x)|\geq 2$,
and let $y,z\in N_{S}(x)$.
Since $y,z\in S$, $N_G(y), N_G(z)\subseteq K$.
Since $K$ is a clique, vertex $x$ is the only
vertex of~$I$ in $K$.
Hence, we have that
$N_G(y)\cap (I\setminus \{x\})=$
$N_G(z)\cap (I\setminus \{x\})=\emptyset$.
Therefore
$I'=(I\setminus \{x\})\cup \{y,z\}$
is an independent set of $G$
such that $|I'|=|I|+1$. Thus,
$I$ is a maximal independent set that is not
maximum, so $G$ is not well-covered.
Thus, $|N_{S}(x)|\leq 1$.
\end{proof}

\begin{lemma}\label{careca-cabeludo11}
A graph $G$ is a $(1,1)$-well-covered graph if and only if
it admits a $(1,1)$-partition $V=(S,K)$ such that
either for every $x\in K$, $|N_{S}(x)|=0$, or
for every $x\in K$, $|N_{S}(x)|=1$.
\end{lemma}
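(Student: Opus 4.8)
The plan is to prove both directions of this biconditional, using Lemma~\ref{11<2} as the main ingredient for the forward direction. Recall that Lemma~\ref{11<2} already tells us that for \emph{any} $(1,1)$-partition $V=(S,K)$ of a $(1,1)$-well-covered graph, every $x \in K$ satisfies $|N_S(x)| \leq 1$. So the content of the forward direction is not the bound itself, but rather the \textbf{uniformity}: that we can exhibit a single partition in which the values $|N_S(x)|$ are all equal (all $0$ or all $1$), ruling out the mixed case where some clique vertices have a private neighbor in $S$ and others do not.

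For the forward direction, I would start with a $(1,1)$-partition $V=(S,K)$ witnessing that $G$ is well-covered, and by Lemma~\ref{11<2} assume each $x \in K$ has $|N_S(x)| \in \{0,1\}$. The goal is to reach one of the two uniform cases, possibly after adjusting the partition. The natural move is to consider the vertices of $K$ that have no neighbor in $S$: call $x \in K$ \emph{isolated from $S$} if $N_S(x) = \emptyset$. If every $x \in K$ is isolated from $S$, we are in the ``all $0$'' case and done; if none are, we are in the ``all $1$'' case and done. The remaining scenario is a mixed partition, where some clique vertex $u$ has a private neighbor $s_u \in S$ while another clique vertex $w$ has no neighbor in $S$. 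I would argue this leads to two maximal independent sets of different sizes, contradicting well-coveredness. Concretely, a maximal independent set containing $w$ can avoid using any vertex of $K\setminus\{w\}$, and because $w$ has no $S$-neighbor it does not block any vertex of $S$; whereas building a maximal independent set through $u$ forces $s_u$ out, so one can compare the two resulting independent set sizes and derive a strict inequality. The cleanest route is probably to show that in the mixed case one can always construct a maximal independent set of non-maximum size directly, reusing the swap idea from Lemma~\ref{11<2}.

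The backward direction should be the easier one. Given a $(1,1)$-partition satisfying one of the two uniformity conditions, I must show $G$ is well-covered, i.e., all maximal independent sets have equal size. In the ``all $0$'' case, $K$ has no edges to $S$, so $G$ is the disjoint union of the clique $G[K]$ and the independent set $G[S]$; by Fact~\ref{fa1} each is well-covered, and a disjoint union of well-covered graphs whose components have constant independence number (here the clique contributes exactly $1$ and $S$ contributes $|S|$ to every maximal independent set) is well-covered, so every maximal independent set has size exactly $|S|+1$ (or $|S|$ if $K=\emptyset$). In the ``all $1$'' case, every clique vertex has exactly one $S$-neighbor; here I would show every maximal independent set has size exactly $|S|$ by checking that any maximal independent set either equals $S$ or, if it uses the single clique vertex $x$ allowed, must drop exactly the one neighbor $N_S(x)$, leaving the cardinality unchanged.

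I expect the \textbf{main obstacle} to be the forward direction's mixed case: one must be careful that the partition $(S,K)$ is not unique, and a priori a graph could be well-covered via one partition that looks mixed while still admitting a uniform partition. The argument must therefore either (a) show directly that a mixed partition contradicts well-coveredness (so no well-covered graph admits a mixed partition at all), or (b) show how to \emph{repair} a mixed partition into a uniform one by moving suitable vertices between $S$ and $K$. Option (a) is cleaner and is what I would pursue, transferring a clique vertex's private neighbor into the independent set to manufacture a size discrepancy, exactly mirroring the contradiction technique of Lemma~\ref{11<2}.
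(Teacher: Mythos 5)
Your proposal is correct and follows essentially the same route as the paper: the forward direction rules out a mixed partition by the same swap argument (replace the clique vertex with an $S$-neighbor by the clique vertex without one together with that $S$-neighbor, producing a maximal independent set that is not maximum), and the backward direction enumerates the maximal independent sets in each uniform case exactly as the paper does. Your worry about non-uniqueness of the partition is resolved the same way the paper resolves it, namely by showing option (a): every $(1,1)$-partition of a well-covered split graph is uniform.
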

\begin{proof}
Let $G$ be a $(1,1)$-well-covered graph.
By Lemma~\ref{11<2} we have that,
given a vertex $x\in K$, either $|N_{S}(x)|=0$
or $|N_{S}(x)|=1$.
Suppose for contradiction that
there are two vertices $x,y\in K$
such that $|N_{S}(x)|=0$
and $|N_{S}(y)|=1$.
Let $z$ be the vertex of $S$
adjacent to $y$.
Let $I$ be a maximal independent set
containing vertex $y$. Note that
the vertex $x$ is non-adjacent to every vertex of
$I\setminus \{y\}$ since there is at most one vertex
of~$I$ in $K$. The same applies to
the vertex $z$. Hence, a larger independent
set $I'$, with size $|I'|=|I|+1$,
can be obtained from $I$ by
replacing vertex $y$ with the non-adjacent
vertices $x,z$, i.e.,
$I$ is a maximal independent set
of $G$ that is not maximum, a contradiction.
Thus, either for every $x\in K$, $|N_{S}(x)|=0$, or
for every $x\in K$, $|N_{S}(x)|=1$.

Conversely, suppose that there is a $(1,1)$-partition $V=(S,K)$ of $G$ such that
either for every $x\in K$, $|N_{S}(x)|=0$, or
for every $x\in K$, $|N_{S}(x)|=1$.
If $K=\emptyset$, then
$G$ is $(1,0)$ and then $G$ is well-covered.
Hence we assume
$K\ne\emptyset$.
If for every $x\in K$, $|N_{S}(x)|=0$,
then every maximal independent set consists of all the vertices of~$S$ and exactly one vertex $v \in K$.
If for every $x\in K$, $|N_{S}(x)|=1$,
then every maximal independent set is either $I=S$,
or $I=\{x\}\cup (S\setminus N_S(x))$ for some $x \in K$.
Since $|N_S(x)|=1$ we have
$|I|=1+|S|-1=|S|$, and hence $G$ is a $(1,1)$-well-covered graph.
\end{proof}

\begin{corollary}\label{cor11}
{\sc {$(1,1)$wc-g}} can be solved in polynomial time.
\end{corollary}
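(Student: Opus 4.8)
The plan is to decide {\sc {$(1,1)$wc-g}} by reducing it to two tasks that are both solvable in polynomial time: recognizing whether the input is a $(1,1)$-graph, and---once a $(1,1)$-partition is in hand---testing well-coveredness via Theorem~\ref{0elland1ell}.

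First, I would recall that $(1,1)$-graphs are precisely the split graphs, and that split graphs can be recognized in polynomial (in fact linear) time, with a $(1,1)$-partition $V=(S,K)$ produced within the same bound, for instance from the degree-sequence characterization of Hammer and Simeone. If $G$ is not split, then it is not a $(1,1)$-graph, hence not $(1,1)$-well-covered, and the algorithm rejects. Second, suppose $G$ is split and we have computed a $(1,1)$-partition $(S,K)$. Since being $(1,1)$ is now guaranteed, $G$ is $(1,1)$-well-covered if and only if $G$ is well-covered. As $\ell=1$, the pair $\bigl(G,(S,K)\bigr)$ is a valid instance of {\sc {wc-$(1,1)$g}}, which by Theorem~\ref{0elland1ell} (the case $\ell=1$) can be solved in polynomial time. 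Because well-coveredness is an intrinsic property of $G$, independent of the chosen partition, it does not matter which valid $(1,1)$-partition we feed to this subroutine, so correctness is immediate and the overall running time is polynomial. Chaining these steps yields the claimed polynomial-time algorithm.

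I would expect the only delicate point to be the correctness of this last reduction when phrased directly through the degree-sequence characterization of Lemma~\ref{careca-cabeludo11}. That characterization asserts the \emph{existence} of a ``good'' partition (one in which all clique vertices have $0$, or all have $1$, neighbor in $S$), whereas a single computed split partition need not be of this form; a split graph may admit several $(1,1)$-partitions, so a naive check of the condition on one arbitrary partition could give a false negative. Routing the well-covered test through Theorem~\ref{0elland1ell} avoids this subtlety entirely, since the polynomial algorithm for {\sc {wc-$(1,1)$g}} decides well-coveredness for \emph{any} supplied $(1,1)$-partition.

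As an alternative, self-contained route one could instead apply Lemma~\ref{careca-cabeludo11} directly, but then one must handle the existential quantifier over partitions explicitly. Here I would use the rigidity of split decompositions: by the Hammer--Simeone structure, the split partitions of $G$ differ from a canonical one (with $K$ a maximum clique) by at most the relocation of a single swappable vertex, so there are only polynomially many partitions to consider, and the condition of Lemma~\ref{careca-cabeludo11} can be tested on each in polynomial time, accepting iff some partition is good. This enumeration-and-check step is the one where I would be most careful, and it is precisely the reason the first route (invoking Theorem~\ref{0elland1ell}) is cleaner.
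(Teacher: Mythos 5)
Your proposal is correct, and your primary route is genuinely different from (and arguably cleaner than) the one in the paper. The paper's proof of Corollary~\ref{cor11} follows what you describe as the ``alternative, self-contained route'': it checks splitness via~\cite{Brandstadt96}, enumerates all $(1,1)$-partitions of the split graph (there are only polynomially many, by the essential uniqueness of split partitions), and implicitly tests the existential condition of Lemma~\ref{careca-cabeludo11} on each. Your main route instead observes that $(1,1)$-well-coveredness decomposes into two partition-independent properties --- being split, and being well-covered --- so after recognizing splitness and computing any one $(1,1)$-partition you can simply invoke the algorithm of Theorem~\ref{0elland1ell} with $\ell=1$. This sidesteps the existential quantifier over partitions in Lemma~\ref{careca-cabeludo11} entirely, which you correctly flag as the one delicate point: checking the condition of that lemma on a single arbitrary split partition could indeed give a false negative, so either the enumeration (as in the paper) or the detour through Theorem~\ref{0elland1ell} (as in your main route) is needed. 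What the paper's approach buys is a structural characterization of $(1,1)$-well-covered graphs by degree sequences (exploited in Lemma~\ref{1-1char}); what yours buys is a shorter, more modular correctness argument that reuses machinery already established.
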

\begin{proof}
Since we can check in polynomial time whether $G$ is a $(1,1)$ graph~\cite{Brandstadt96}, and one can enumerate all $(1,1)$-partitions of a split graph in polynomial time,  we can solve the $(1,1)${\sc {wg-g}} problem in polynomial time.

\end{proof}

The next lemma shows that $(1,1)$-well-covered graphs can be recognized from their degree sequences.

\begin{lemma}\label{1-1char}
$G$ is a $(1,1)$-well-covered graph if and only if
there is a positive integer $k$ such that $G$ is
a graph with a $(1,1)$-partition $V=(S,K)$ where $|K|=k$,
such that the degree sequence of $V$
is either $(k,k,k,\ldots,k,$ $i_1,i_2,\ldots,i_s,$ $0,0,0,\ldots,0)$ with $\sum_{j=1}^s (i_j)= k$,
or $(k-1,k-1,k-1,\ldots,k-1,$ $0,0,0,\ldots,0)$, where the subsequences $k, \ldots, k$ (resp. $k-1, \ldots, k-1$) have length $k$.
\end{lemma}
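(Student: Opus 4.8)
The plan is to treat this lemma as a translation of Lemma~\ref{careca-cabeludo11} into the language of degree sequences, so that the whole argument reduces to matching its two ``clique--neighborhood'' regimes with the two displayed sequences. First I would fix a $(1,1)$-partition $V=(S,K)$ with $|K|=k$ and record the only two degree formulas available in a split graph: since $S$ is independent and $K$ is a clique, every $s\in S$ has $\deg_G(s)=|N_K(s)|$ and every $x\in K$ has $\deg_G(x)=(k-1)+|N_S(x)|$. I would also note the bookkeeping identity $\sum_{s\in S}\deg_G(s)=\sum_{x\in K}|N_S(x)|=e_{KS}$, where $e_{KS}$ is the number of edges between $S$ and $K$; these three facts are what connect the neighborhood structure to the sorted list of degrees.

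For the forward direction I would take, via Lemma~\ref{careca-cabeludo11}, a $(1,1)$-partition in which either every $x\in K$ has $|N_S(x)|=1$ or every $x\in K$ has $|N_S(x)|=0$. A small preliminary point is to guarantee $k\geq 1$: if the partition has $K=\emptyset$ then $G$ is edgeless, and I would simply promote one vertex to $K$, obtaining $k=1$ with the clique vertex having no neighbor in $S$. In the regime $|N_S(x)|=1$, each of the $k$ clique vertices has degree $k$, while $e_{KS}=k$ forces the positive $S$-degrees $i_1,\dots,i_s$ to satisfy $\sum_j i_j=k$, the rest of $S$ contributing zeros; since every $S$-degree is at most $k$, sorting produces the first form. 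In the regime $|N_S(x)|=0$ there are no $S$--$K$ edges, so each clique vertex has degree $k-1$ and each $S$-vertex degree $0$, which is exactly the second form.

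For the converse I would start from the given partition and degree sequence and recover the appropriate regime. If the sequence is the second form its largest value is $k-1$; as every clique vertex has degree at least $k-1$, each must in fact equal $k-1$, i.e.\ $|N_S(x)|=0$ for all $x\in K$, and Lemma~\ref{careca-cabeludo11} finishes the case. If the sequence is the first form its largest value is $k$, so $\deg_G(x)=(k-1)+|N_S(x)|\leq k$ already forces $|N_S(x)|\leq 1$; comparing the degree sum $k^2+\sum_j i_j=k^2+k$ with $k(k-1)+2e_{KS}$ yields $e_{KS}=k$, so the $k$ quantities $|N_S(x)|$, each at most $1$, must all equal $1$, and Lemma~\ref{careca-cabeludo11} applies once more.

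The step I expect to be the main obstacle is this last one: the degree sequence only controls the clique neighborhoods \emph{on average}, and to upgrade ``average $1$'' to ``each exactly $1$'' I must use both the upper bound $\deg_G(v)\leq k$ (the maximality of the leading block) and the total $\sum_j i_j=k$. The rest is routine once the edge cases are handled, in particular ensuring $k\geq 1$ in the edgeless graph and observing that an $S$-vertex of degree $k$ causes no trouble, since it merely occurs as one of the $i_j$ while $\sum_j i_j=k$ then forces every other $S$-degree to be zero.
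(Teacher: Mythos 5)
Your proof is correct and follows essentially the same route as the paper: both directions reduce the degree-sequence characterization to the dichotomy of Lemma~\ref{careca-cabeludo11} (all clique vertices with $|N_S(x)|=0$ versus all with $|N_S(x)|=1$). Your converse is in fact slightly more careful than the paper's, which asserts without detail that the clique vertices each see exactly one vertex of $S$, whereas you justify this via the bound $|N_S(x)|\leq 1$ from the maximum degree together with the handshake count $e_{KS}=k$; you also handle the $K=\emptyset$ edge case more explicitly.
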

\begin{proof}
Let $G$ be a $(1,1)$-well-covered graph.
Then $G$ admits a $(1,1)$-partition $V=(S,K)$ where $k:=|K|, k\geq 0$.
If $k=0$, then the degree sequence
is $(0,0,0,\ldots,0)$.
If $k\geq 1$, then
by Lemma~\ref{careca-cabeludo11}
either for every $x\in K$, $|N_{S}(x)|=0$, or
for every $x\in K$, $|N_{S}(x)|=1$.
If for every $x\in K$, $|N_{S}(x)|=0$,
then the degree sequence of $G$ is
$(k-1,k-1,k-1,\ldots,k-1,$ $0,0,0,\ldots,0)$.
If for every $x\in K$, $|N_{S}(x)|=1$,
then the degree sequence
of $G$ is $(k,k,k,\ldots,k,$ $i_1,i_2,\ldots,i_s,$ $0,0,0,\ldots,0)$,
with $\sum_{j=1}^s (i_j)= k$.

Suppose that there is a positive integer $k$ such that $G$ is
a graph with $(1,1)$-partition $V=(S,K)$ where $|K|=k$,
with degree sequence either $(k,k,k,\ldots,k,$ $i_1,i_2,\ldots,i_s,$ $0,0,0,\ldots,0)$,
or $(k-1,k-1,k-1,\ldots,k-1,$ $0,0,0,\ldots,0)$,
such that $\sum_{j=1}^s (i_j)= k$.
If the degree sequence
of $G$ is $(k,k,k,\ldots,k,$ $i_1,i_2,\ldots,i_s,$ $0,0,0,\ldots,0)$,
then the vertices of $K$ are adjacent
to $k-1$ vertices of $K$ and exactly
one of $S$, since the vertices
with degree $i_1,i_2,\ldots,i_s,$
have degree at most~$k$ and the vertices
with degree $0$ are isolated.
If the degree sequence
of $G$ is $(k-1,k-1,k-1,\ldots,k-1,$ $0,0,0,\ldots,0)$,
then the vertices of $K$ are adjacent
to $k-1$ vertices of $K$ and
none of $S$ and the
vertices with degree $0$ are isolated.
By Lemma~\ref{careca-cabeludo11}
we have that $G$ is a
well-covered graph.
\end{proof}

\medskip

\begin{sloppypar}
Ravindra~\cite{Rav77} gave the following characterization
of $(2,0)$-well-covered graphs.
\end{sloppypar}

\begin{proposition}[Ravindra~\cite{Rav77}]\label{rav}
Let $G$ be a connected graph.
$G$ is a $(2,0)$-well-covered graph
if and only if $G$ contains a
perfect matching $F$ such that for
every edge $e = uv$ in~$F$,
$G[N(u)\cup N(v)]$ is a complete bipartite graph.
\end{proposition}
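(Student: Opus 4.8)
The statement bundles two properties: being a $(2,0)$-graph, i.e.\ bipartite, and being well-covered. The plan is to treat these separately but to drive both implications from a single observation about the matching $F$. Since $F$ saturates every vertex and any independent set contains at most one endpoint of each edge of $F$, every independent set $I$ satisfies $|I|\le |F|$, with equality exactly when $I$ meets every edge of $F$; equivalently, a maximal independent set fails to be maximum precisely when some edge of $F$ has \emph{both} endpoints outside it. Throughout I use that a $(2,0)$-graph is the same as a bipartite graph, with bipartition $A\cup B$, and I may assume $G$ has at least one edge (hence, being connected, at least two vertices), so both parts are nonempty.

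For the forward direction I would argue as follows. Since $G$ is connected on at least two vertices, every vertex of $B$ has a neighbour in $A$ and vice versa, so $A$ and $B$ are themselves maximal independent sets; well-coveredness forces $|A|=|B|=\alpha=n/2$. By König's theorem the maximum matching equals the minimum vertex cover, which has size $n-\alpha=\alpha=n/2$, so $G$ admits a perfect matching $F$. It remains to check the neighbourhood condition for an arbitrary $uv\in F$ with $u\in A$, $v\in B$. Suppose for contradiction that some $x\in N(v)$ and $y\in N(u)$ satisfy $xy\notin E$; one checks $x\neq u$ and $y\neq v$, so $\{x,y\}$ is independent and disjoint from $\{u,v\}$. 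Extending $\{x,y\}$ greedily to a maximal independent set $I$, both $u$ (adjacent to $y$) and $v$ (adjacent to $x$) fall outside $I$, so the edge $uv$ is missed and $|I|\le\alpha-1$, contradicting well-coveredness. Hence $G[N(u)\cup N(v)]$ is complete bipartite.

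For the reverse direction, assume $G$ carries a perfect matching $F$ with the stated property. I claim every maximal independent set $I$ meets every edge of $F$: if some $uv\in F$ had $u,v\notin I$, maximality would yield $y\in I\cap N(u)$ and $x\in I\cap N(v)$, but the complete-bipartite condition on $uv$ forces $xy\in E$, contradicting independence of $I$. Therefore $|I|=|F|$ for every maximal independent set, all of them have the same size, and $G$ is well-covered. Combined with bipartiteness, $G$ is $(2,0)$-well-covered.

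The hard part, and the step I expect to be the genuine obstacle, is the ``$(2,0)$'' half of the reverse implication: the matching condition alone does \emph{not} imply bipartiteness, so the hypothesis that $G$ is a $(2,0)$-graph must be carried explicitly rather than recovered from $F$. The corona $C_5\circ K_1$ (a $5$-cycle with one pendant attached to each vertex) makes this sharp: its five pendant edges form a perfect matching, and for each such edge $uv$ the graph $G[N(u)\cup N(v)]$ is a star $K_{1,3}$, hence complete bipartite; yet $C_5\circ K_1$ contains $C_5$ and is not bipartite. My well-coveredness argument above still applies to it verbatim—indeed it is well-covered—so the bipartiteness assumption is essential to the biconditional, and the right way to present the reverse direction is to prove well-coveredness from $F$ and then invoke bipartiteness (the $(2,0)$ hypothesis) separately, rather than attempting to deduce it.
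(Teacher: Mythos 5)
The paper does not prove Proposition~\ref{rav} at all: it is imported from Ravindra's 1977 paper as a black box and used only in Lemma~\ref{l20}, so there is no in-paper argument to compare yours against. Your proof is correct and self-contained. The forward direction is sound: connectivity makes both colour classes maximal independent sets, well-coveredness forces $|A|=|B|=\alpha=n/2$, K\H{o}nig plus Gallai ($\nu=\tau=n-\alpha$) yields the perfect matching, and a missing cross-edge $xy$ with $x\in N(v)$, $y\in N(u)$ extends to a maximal independent set avoiding both ends of $uv\in F$, hence of size at most $|F|-1<\alpha$. The reverse direction (every maximal independent set must meet every edge of $F$, since the complete-bipartite condition would otherwise place an edge inside the independent set, so all maximal independent sets have size $|F|$) is equally clean. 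Your further observation is the most valuable part: the biconditional as printed is literally false in the ``if'' direction unless bipartiteness is carried as a hypothesis on the right-hand side, and $C_5\circ K_1$ witnesses this sharply --- its pendant edges form a perfect matching, each $G[N(u)\cup N(v)]$ is a $K_{1,3}$, and it is well-covered (indeed by the paper's own Proposition~\ref{pendant}) yet not a $(2,0)$-graph. Ravindra's theorem is properly stated for connected bipartite graphs with at least one edge, which is also the only way the paper uses it (recognizing $(2,0)$-graphs is done first in Lemma~\ref{l20}). The only other degenerate case worth flagging is $G=K_1$, which is $(2,0)$-well-covered but has no perfect matching; your standing assumption that $G$ has an edge quietly excludes it, consistently with the standard formulation.
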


We now prove that  Proposition~\ref{rav} leads to
a polynomial-time algorithm.

\begin{lemma}\label{l20}
{\sc {$(2,0)$wc-g}}
can be solved in polynomial time.
\end{lemma}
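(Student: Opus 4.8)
The plan is to turn Ravindra's structural characterization (Proposition~\ref{rav}) into a polynomial-time procedure by observing that its two requirements can be checked separately, and that the seemingly global matching condition is in fact edge-local. First I would reduce to the connected case. Since a graph is bipartite if and only if each of its connected components is, and since a maximal independent set of $G$ is exactly a union of maximal independent sets of its components (so that all maximal independent sets of $G$ have the same size precisely when this holds within every component), $G$ is $(2,0)$-well-covered if and only if every connected component of $G$ is $(2,0)$-well-covered. Thus it suffices to decide the question component by component and to accept iff all components pass. As a preliminary step, testing whether $G$ is a $(2,0)$-graph at all, which is part of the problem, is done in linear time by a $2$-coloring/BFS; an isolated-vertex component $K_1$ is handled directly, being trivially bipartite and well-covered, and a connected component with at least two vertices necessarily has an edge.

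For a connected component $C$ with at least one edge I would apply Proposition~\ref{rav}, which reduces the question to the existence of a perfect matching $F$ of $C$ all of whose edges $uv$ satisfy that $C[N(u)\cup N(v)]$ is complete bipartite. The key observation is that this last condition is \emph{local}: it depends only on the edge $uv$ and not on the rest of $F$. Hence I would mark an edge $uv$ of $C$ as \emph{good} if $C[N(u)\cup N(v)]$ is complete bipartite; writing the bipartition of $C$ as $(A,B)$ with $u\in A$ and $v\in B$, so that $N(v)\subseteq A$ and $N(u)\subseteq B$, this amounts to checking that every vertex of $N(v)$ is adjacent to every vertex of $N(u)$, which costs $O(n^2)$ time per edge. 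Let $C'$ be the spanning subgraph of $C$ consisting of exactly the good edges. Then $C$ admits the matching required by Proposition~\ref{rav} if and only if $C'$ has a perfect matching, since any perfect matching of $C'$ is a perfect matching of $C$ using only good edges, and conversely. Because $C'\subseteq C$ is bipartite, deciding whether $C'$ has a perfect matching is a standard bipartite maximum-matching computation (e.g.\ via Hopcroft--Karp), so this step runs in polynomial time.

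Assembling the pieces yields a polynomial-time algorithm: test bipartiteness, decompose into components, dispose of the $K_1$ components, and for every remaining component compute the good-edge subgraph $C'$ and test it for a perfect matching; accept iff all components succeed, and the total running time is polynomial in $n$. The one genuine obstacle is the correct translation of the clause ``there exists a perfect matching $F$ such that \emph{every} edge of $F$ is good'' into an efficiently decidable statement: the crucial insight is that, because goodness is an edge-local property, this is equivalent to the existence of \emph{some} perfect matching in the \emph{fixed} subgraph $C'$, which eliminates the apparent need to search over exponentially many matchings. The only remaining subtlety is bookkeeping: Proposition~\ref{rav} presumes connectivity (and implicitly the presence of a perfect matching), so the degenerate $K_1$ components, which are well-covered yet admit no perfect matching, must be treated outside of the characterization.
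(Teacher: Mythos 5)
Your proof is correct and takes essentially the same approach as the paper: both exploit the edge-locality of the condition in Proposition~\ref{rav} to reduce the question to a single polynomial-time matching computation (the paper assigns weight $1$ to good edges and asks for a weighted perfect matching of weight $n/2$ via Edmonds' algorithm, which is interchangeable with your deletion of bad edges followed by a bipartite perfect-matching test). Your treatment of disconnected inputs and of $K_1$ components is a welcome extra bit of care that the paper's proof glosses over.
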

\begin{proof}
Assume that~$G$ is connected and consider the weighted
graph $(G,\omega)$ with $\omega :E(G) \to \{0,1\}$
satisfying $\omega(uv)=1$, if $G[N(u)\cup N(v)]$  is a complete bipartite graph, and 0 otherwise.
By  Proposition~\ref{rav}, $G$ is well-covered if and only if $(G,\omega)$ has a weighted perfect matching with weight at least $n/2$, and this can be decided in polynomial time~\cite{edmonds65}.
\end{proof}

\begin{lemma}\label{l12}
{\sc {$(1,2)$wc-g}} can be solved in polynomial time.
\end{lemma}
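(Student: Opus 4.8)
The plan is to decompose the question ``is $G$ a $(1,2)$-well-covered graph?'' into its two defining requirements and dispatch each in polynomial time. Recall that $G$ is $(1,2)$-well-covered precisely when $G$ is a $(1,2)$-graph \emph{and} $G$ is well-covered; since well-coveredness is a property of $G$ alone and does not depend on any chosen partition, these two requirements can be checked independently.

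First I would test whether $G$ is a $(1,2)$-graph. Because $\max\{1,2\}\leq 2$, Brandst{\"a}dt's dichotomy~\cite{Brandstadt96} guarantees that recognizing $(1,2)$-graphs is in \P; moreover the recognition procedure can be made to output an explicit $(1,2)$-partition $V=(S,K^1,K^2)$ whenever one exists. If $G$ is not a $(1,2)$-graph, then it cannot be $(1,2)$-well-covered, and I would immediately answer \textsc{no}.

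Having a concrete $(1,2)$-partition $(S,K^1,K^2)$ in hand, the second step is to decide whether $G$ is well-covered. Here I would simply invoke the polynomial-time algorithm of Theorem~\ref{0elland1ell} for {\sc {wc-$(1,\ell)$g}} with $\ell=2$: given the partition, it enumerates the $O(n^2)$ candidate maximal independent sets (each determined by an independent subset $I_K$ of $K^1\cup K^2$, which contains at most one vertex from each clique) and verifies whether they all have the same cardinality. Since well-coveredness is independent of the partition used, it is legitimate to run this routine on the single partition produced by the recognition step. The graph $G$ is a \textsc{yes}-instance of {\sc {$(1,2)$wc-g}} if and only if both the recognition test and this well-coveredness test succeed, and both run in polynomial time.

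The only point requiring care---and the step I would treat as the main obstacle---is ensuring that the recognition phase yields a usable $(1,2)$-partition rather than a mere yes/no answer, together with the justification that any single such partition suffices as input to the well-coveredness check. The latter follows at once, since well-coveredness is a graph invariant and hence the particular partition is irrelevant to the outcome of Theorem~\ref{0elland1ell}; the former is guaranteed by the constructive nature of Brandst{\"a}dt's algorithm. With these two observations, the two polynomial-time subroutines compose into a polynomial-time algorithm for {\sc {$(1,2)$wc-g}}.
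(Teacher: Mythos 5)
Your proposal is correct and follows exactly the paper's own two-step argument: first find a $(1,2)$-partition in polynomial time via Brandst{\"a}dt's recognition algorithm~\cite{Brandstadt96}, then apply the {\sc {wc-$(1,\ell)$g}} algorithm of Theorem~\ref{0elland1ell} with $\ell=2$. The additional remarks about well-coveredness being partition-independent are sound but not needed beyond what the paper already implies.
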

\begin{proof}
We can find a $(1,2)$-partition of a graph $G$ (if such a partition exists) in polynomial time~\cite{Brandstadt96}.
After that, we use the algorithm  for {\sc {wc-$(1,\ell)$g}} given by Theorem~\ref{0elland1ell}.
\end{proof}

Below we summarize the cases for which we have shown that {\sc {wc-}}$(r,\ell)${\sc {g}} or $(r,\ell)${\sc {wc-g}} can be solved in polynomial time.

\begin{theorem}
 $(r,\ell)${\sc {wc-g}} with $(r,\ell) \in \{(0,1), (0,2), (1,0), (1,1), (1,2),(2,0)\}$
and
{\sc wc-{$(r,\ell)$g}} with $r \in \{0,1\}$ or $(r,\ell) = (2,0)$
can be solved in polynomial time.
\end{theorem}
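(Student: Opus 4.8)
The plan is to assemble the individual polynomial-time results established throughout this subsection, since the statement is a direct consolidation of the preceding lemmas and corollaries rather than a new claim. First I would dispatch the six cases of {\sc {$(r,\ell)$wc-g}}. The pairs $(0,1)$ and $(1,0)$ are handled by Corollary~\ref{01}, which reduces recognizing a $(0,1)$- or $(1,0)$-well-covered graph to recognizing the underlying $(r,\ell)$-graph, a polynomial-time task by Brandst{\"a}dt's recognition result~\cite{Brandstadt96}. The pair $(0,2)$ follows from Lemma~\ref{02}, the pair $(1,1)$ from Corollary~\ref{cor11} (built on the degree-sequence characterization of Lemmas~\ref{11<2}, \ref{careca-cabeludo11}, and~\ref{1-1char}), the pair $(2,0)$ from Lemma~\ref{l20} (via Ravindra's characterization in Proposition~\ref{rav} and a weighted-matching computation), and the pair $(1,2)$ from Lemma~\ref{l12}.

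Next I would treat the {\sc {wc-$(r,\ell)$g}} cases. For every pair with $r \in \{0,1\}$, that is, for {\sc {wc-$(0,\ell)$g}} and {\sc {wc-$(1,\ell)$g}} with arbitrary $\ell \geq 0$, polynomial solvability is exactly the content of Theorem~\ref{0elland1ell}, whose proof enumerates the $O(n^{\ell})$ candidate maximal independent sets. The only remaining case is $(r,\ell)=(2,0)$; here I would not design a dedicated algorithm but instead invoke Fact~\ref{fact1}, which states that membership of {\sc {$(r,\ell)$wc-g}} in \P\ implies membership of {\sc {wc-$(r,\ell)$g}} in \P. Since Lemma~\ref{l20} already places {\sc {$(2,0)$wc-g}} in \P, Fact~\ref{fact1} immediately yields that {\sc {wc-$(2,0)$g}} is in \P\ as well, completing the list.

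Because the theorem merely collects results proved earlier, there is no genuine obstacle to overcome: every ingredient is in place. The only point that merits a moment's attention is the $(2,0)$ case of {\sc {wc-$(r,\ell)$g}}, which is obtained indirectly through Fact~\ref{fact1} rather than by the direct enumeration used for the $r \in \{0,1\}$ cases; one should make sure to state this reduction explicitly so that the entire final column/row entry of the tables in Section~\ref{sec:intro} is accounted for. With that noted, the proof is simply the concatenation of the cited references.
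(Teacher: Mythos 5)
Your proposal is correct and matches the paper's own proof essentially verbatim: the paper likewise cites Corollary~\ref{01}, Lemma~\ref{02}, Corollary~\ref{cor11}, Lemma~\ref{l12}, and Lemma~\ref{l20} for the first part, and Theorem~\ref{0elland1ell} together with Lemma~\ref{l20} and Fact~\ref{fact1} for the second. No differences worth noting.
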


\begin{proof}
The first part follows from
Corollary~\ref{01}, Lemma~\ref{02}, Corollary~\ref{cor11},
Lemma~\ref{l12}, and Lemma~\ref{l20}, respectively.
The second part follows from
Theorem~\ref{0elland1ell},
and Lemma~\ref{l20} together with Fact~\ref{fact1}.
\end{proof}

\subsection{{\sf {coNP}}-complete cases for {\sc {wc-$(r,\ell)$g}}}

We note that the {\sc {well-Covered Graph}}  instance $G$
constructed in the reduction of Chv\'atal and Slater~\cite{chvatal1993note}
is a $(2,1)$-graph, directly implying that {\sc wc-{$(2,1)$g}}
is {\sf {coNP}}-complete.

\begin{figure}[t]
\epsfxsize=4.7cm \centerline{\epsfbox{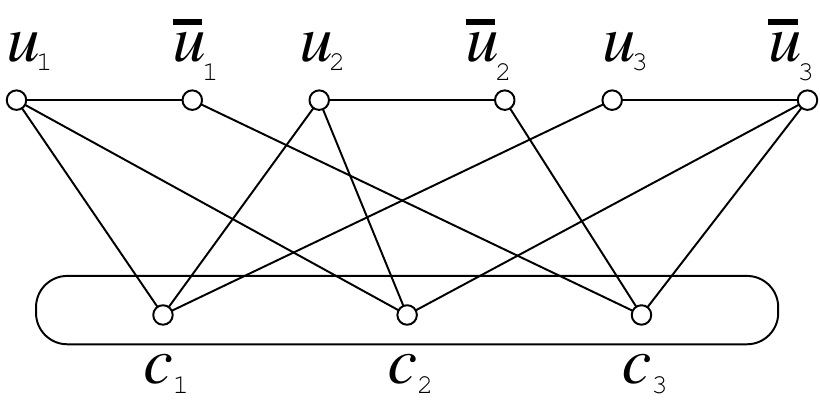}}
\caption{Chv\'atal and Slater's~\cite{chvatal1993note} {\sc {Well-Covered Graph}} instance $G=(V,E)$
obtained from the satisfiable {\sc {3-sat}} instance $I=(U,C)= \left(\{u_1,u_2,u_3\},\{(u_1,u_2,u_3),(u_1,u_2,\overline{u}_3),
(\overline{u}_1,\overline{u}_2,\overline{u}_3)\}\right)$, where $\{c_1,c_2,\ldots ,c_m\}$ is a clique of~$G$.
Observe that $I$ is satisfiable if and only if $G$ is not well-covered, since there is a maximal independent
set with size $n+1$ (e.g. $\{c_1,\overline{u}_1,\overline{u}_2,\overline{u}_3\}$) and there is a maximal independent
set of size $n$
(e.g. $\{{u}_1,{u}_2,\overline{u}_3\}$).
Note also that $G$ is a $(2,1)$-graph with $(2,1)$-partition $V=(\{u_1,u_2,\ldots ,u_n\}, \{\overline{u}_1,\overline{u}_2,\ldots ,\overline{u}_n\},
\{c_1,c_2,\ldots ,c_m\}\left.\right)$.} \label{chvatalslaterexample}
\end{figure}

\begin{sloppypar}
Indeed, Chv\'atal and Slater~\cite{chvatal1993note} take
a {\sc {3-sat}} instance $I=(U,C)=(\{u_1,u_2,\allowbreak u_3,\allowbreak \ldots ,u_n\},$
$\{c_1,c_2,c_3,\ldots ,c_m\}),$
and construct a \textsc{Well-Covered Graph} instance
$G=(V,E)=$ $\left(\right.\{u_1,u_2,u_3,$ $\ldots , u_n,\overline{u}_1,\overline{u}_2,\overline{u}_3,\ldots , \overline{u}_n,$
$c_1,c_2,c_3,\ldots ,c_m\},$ $\{xc_j:$ $x{\mbox { occurs in }} $ $c_j\}\ \cup$ $ \{u_i\overline{u}_i : 1 \leq i \leq n\}\ \cup$
$\{c_ic_j:1\leq i<j\leq m\}\left.\right)$.  Note that
$\{c_ic_j:1\leq i<j\leq m\}$ is a clique, and that
$\{u_1,u_2,u_3,$ $\ldots,  u_n\},$ and
$\{\overline{u}_1,\overline{u}_2,\overline{u}_3,\ldots , \overline{u}_n\}$
are independent sets. Hence, $G$ is
a $(2,1)$-graph. An illustration of this construction can be found in Figure~\ref{chvatalslaterexample}. This discussion can be summarized as follows.
\end{sloppypar}

\begin{proposition}[Chv\'atal and Slater~\cite{chvatal1993note}]\label{21conpc}
{\sc {wc-$(2,1)$g}} is {\sf {coNP}}-complete.
\end{proposition}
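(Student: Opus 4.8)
The plan is to establish \textbf{coNP}-completeness of \textsc{wc-$(2,1)$g} by exhibiting both membership in \textbf{coNP} and \textbf{coNP}-hardness, relying on the construction of Chv\'atal and Slater already recalled in the excerpt. Membership is immediate and was already noted: a \textsc{NO}-instance is certified by two maximal independent sets of different cardinalities, both of which can be verified in polynomial time. So the substance of the proof is the hardness direction, which I would obtain by carefully verifying that the Chv\'atal--Slater reduction from (the complement of) \textsc{3-sat} is in fact a reduction into our restricted problem \textsc{wc-$(2,1)$g}, where the input comes equipped with a $(2,1)$-partition.

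First I would recall the construction precisely: given a \textsc{3-sat} instance $I=(U,C)$ with variables $u_1,\ldots,u_n$ and clauses $c_1,\ldots,c_m$, one forms the graph $G$ on vertex set $\{u_1,\ldots,u_n,\overline{u}_1,\ldots,\overline{u}_n,c_1,\ldots,c_m\}$, joining each literal vertex to the clause vertices in which it occurs, joining each complementary pair $u_i\overline{u}_i$, and making $\{c_1,\ldots,c_m\}$ a clique. The key structural observation — which is the whole point of passing through \textsc{wc-$(2,1)$g} rather than the general \textsc{Well-Covered Graph} problem — is that $G$ is manifestly a $(2,1)$-graph: the two literal sets $\{u_1,\ldots,u_n\}$ and $\{\overline{u}_1,\ldots,\overline{u}_n\}$ are independent, and the clause set is a clique. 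Thus the triple $(\{u_i\},\{\overline{u}_i\},\{c_j\})$ is a valid $(2,1)$-partition that can be output alongside $G$ in polynomial time, so the instance is a legitimate input to \textsc{wc-$(2,1)$g}.

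Next I would verify the logical equivalence underlying the reduction, namely that $I$ is satisfiable if and only if $G$ is \emph{not} well-covered; equivalently, $I$ is unsatisfiable if and only if $G$ is well-covered. The analysis tracks the sizes of maximal independent sets. Every maximal independent set contains at most one clause vertex (the clause vertices form a clique) and, for each index $i$, exactly one of $u_i,\overline{u}_i$ (they are adjacent and every literal is dominated). A maximal independent set avoiding all clause vertices selects one literal per variable and has size $n$; such a set exists and dominates all clause vertices precisely when the chosen truth assignment satisfies every clause. A maximal independent set containing a clause vertex $c_j$ together with a consistent choice of literals can reach size $n+1$. The crux is therefore that a satisfying assignment yields a maximal independent set of size $n$ while a ``too-large'' set of size $n+1$ built around a clause vertex always exists, so satisfiability produces two maximal independent sets of distinct sizes and hence a non-well-covered graph; conversely, unsatisfiability forces all maximal independent sets to the same cardinality.

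I expect the main obstacle to be the bookkeeping in this last equivalence, specifically arguing rigorously that unsatisfiability of $I$ forces \emph{all} maximal independent sets of $G$ to have equal size, and conversely that a satisfying assignment genuinely yields a \emph{maximal} independent set of size $n$ (one must check that the literal vertices together dominate every clause vertex, which is exactly where satisfiability enters). The example in Figure~\ref{chvatalslaterexample} illustrates both a maximal independent set of size $n+1$ (e.g.\ $\{c_1,\overline{u}_1,\overline{u}_2,\overline{u}_3\}$) and one of size $n$ (e.g.\ $\{u_1,u_2,\overline{u}_3\}$), confirming the size gap. Since \textsc{3-sat} is \textbf{NP}-complete, its complement is \textbf{coNP}-complete, and the reduction above establishes \textbf{coNP}-hardness of \textsc{wc-$(2,1)$g}; combined with membership in \textbf{coNP}, this proves the proposition.
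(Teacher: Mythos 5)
Your proposal is correct and follows essentially the same route as the paper: both observe that the Chv\'atal--Slater graph is manifestly a $(2,1)$-graph via the partition $(\{u_i\},\{\overline{u}_i\},\{c_j\})$, combine this with the already-noted \coNPbis membership, and inherit hardness from the original reduction. You additionally spell out the size-$n$ versus size-$n{+}1$ analysis of maximal independent sets that the paper delegates to the citation, but this is elaboration of the same argument rather than a different proof.
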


As $(2,1)$-graphs can be recognized in polynomial time~\cite{Brandstadt96}, we obtain the following.

\begin{corollary}\label{co21}
{\sc {$(2,1)$wc-g}} is  {\sf {coNP}}-complete.
\end{corollary}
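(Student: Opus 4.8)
The plan is to derive Corollary~\ref{co21} directly from Proposition~\ref{21conpc} together with the generic facts already established in the introduction, rather than by constructing a fresh reduction. The target statement asserts that {\sc {$(2,1)$wc-g}} is {\sf {coNP}}-complete, which splits into two obligations: membership in {\sf {coNP}} and {\sf {coNP}}-hardness. I would handle hardness first, as it is the quicker half. By Proposition~\ref{21conpc}, {\sc {wc-$(2,1)$g}} is {\sf {coNP}}-complete, so in particular it is {\sf {coNP}}-hard; then Fact~\ref{fact2} says precisely that {\sf {coNP}}-hardness of {\sc {wc-$(r,\ell)$g}} transfers to {\sf {coNP}}-hardness of {\sc {$(r,\ell)$wc-g}}, giving the hardness of {\sc {$(2,1)$wc-g}} for free with $(r,\ell)=(2,1)$.

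For membership in {\sf {coNP}}, the key observation is that recognizing whether a graph is a $(2,1)$-graph is polynomial, by Brandst\"adt's dichotomy~\cite{Brandstadt96} (the case $\max\{r,\ell\}\leq 2$). This is exactly the hypothesis of Fact~\ref{fact_conp}, which states that whenever recognizing $(r,\ell)$-graphs is in {\sf P}, the problem {\sc {$(r,\ell)$wc-g}} lies in {\sf {coNP}}. Applying Fact~\ref{fact_conp} with $(r,\ell)=(2,1)$ therefore places {\sc {$(2,1)$wc-g}} in {\sf {coNP}}. Intuitively, a short certificate for a \textsc{NO}-instance just records either a violation of the $(2,1)$-property detectable in polynomial time, or two maximal independent sets of different sizes; but I would not need to spell this out, since Fact~\ref{fact_conp} already packages it.

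Combining the two halves yields the corollary: {\sc {$(2,1)$wc-g}} is both in {\sf {coNP}} and {\sf {coNP}}-hard, hence {\sf {coNP}}-complete. I expect no genuine obstacle here, since all the substantive work lives in Proposition~\ref{21conpc} (the Chv\'atal--Slater construction, whose graph is already a $(2,1)$-graph) and in Brandst\"adt's recognition result; the corollary is a two-line assembly of previously stated facts. The only point requiring a moment's care is ensuring that the instance in Proposition~\ref{21conpc} comes equipped with, or allows efficient recovery of, the $(2,1)$-partition needed to invoke the facts cleanly --- but since $(2,1)$-recognition is polynomial and also produces a partition, and since {\sc {$(2,1)$wc-g}} takes only the graph as input, this is automatically handled and poses no difficulty.
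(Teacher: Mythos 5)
Your proposal is correct and matches the paper's own (very terse) argument exactly: the paper derives the corollary from Proposition~\ref{21conpc} together with the polynomial-time recognizability of $(2,1)$-graphs~\cite{Brandstadt96}, which is precisely your combination of Fact~\ref{fact2} for hardness and Fact~\ref{fact_conp} for membership in {\sf coNP}. No issues.
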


\subsection{{\sf {NP}}-hard cases for {\sc {$(r,\ell)$wc-g}}}

\begin{figure}[tb]
\begin{center}
\vspace{-.35cm}
\scalebox{0.68}{
\epsfxsize=18cm \hspace*{-0.3cm}
\epsfbox{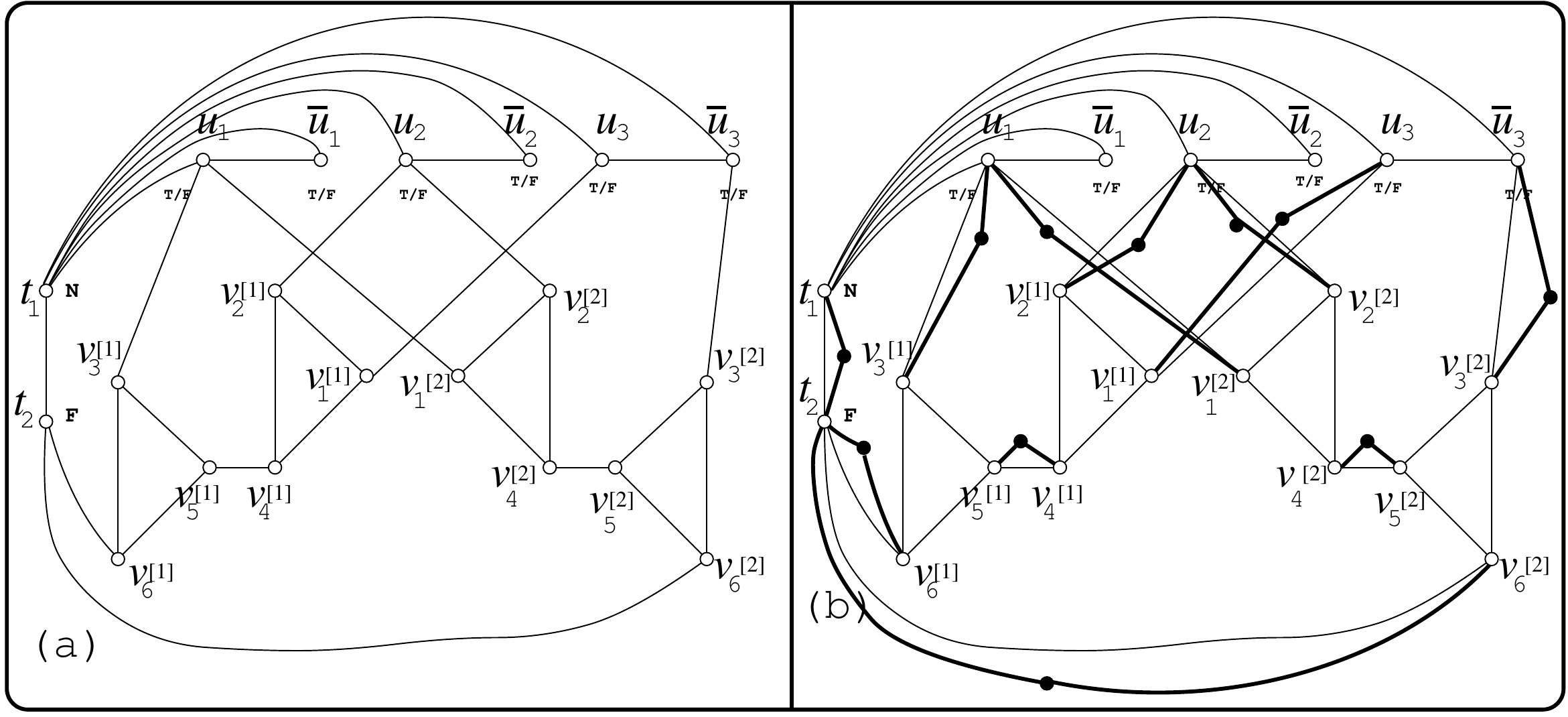}
\vspace{-2.5cm}
}
\begin{sloppypar}
\caption{(a) Stockmeyer's~\cite{Stockmeyer:1973} {\sc {3-coloring}} instance $G$
 obtained from the {\sc {3-sat}} instance $I=(U,C)= \left(\{u_1,u_2,u_3\},\{(u_3,u_2,u_1),(u_1,u_2,\overline{u}_3)\}\right)$. (b) The graph $G'$ obtained from~$G$ by
 adding a vertex $x_{uv}$ with $N_{G'}(x_{uv})=\{u,v\}$ for
every edge $uv$ of $G$ not belonging to a triangle.} \label{stockfig}
\end{sloppypar}
\end{center}
\end{figure}

Now we prove that
{\sc {$(0,3)$wc-g}} is {\sf {NP}}-complete.
For this purpose, we slightly modify an
\NP-completeness proof
of Stockmeyer~\cite{Stockmeyer:1973}.

\begin{sloppypar}
Stockmeyer's~\cite{Stockmeyer:1973}
\NP-completeness proof of 3-coloring
considers a {\sc {3-sat}} instance
$I=(U,C)= \left(\right.\{u_1,u_2,u_3,\ldots, u_n\},
\{c_1,c_2,c_3,\ldots, c_m\}\left.\right)$,
and constructs a {\sc {3-coloring}} instance  $G=(V,E)=
\left.\right(\{u_1,u_2,u_3,\ldots,u_n,\overline{u}_1,\overline{u}_2,\overline{u}_3,\ldots , \overline{u}_n\}\cup
\{v_1[j],v_2[j],v_3[j],v_4[j],v_5[j],v_6[j]: j\in \{1,2,3,\ldots,m\}\}\cup\{t_1,t_2\},\allowbreak
\{u_i\overline{u}_i: i\in\{1,2,3,\ldots ,n\}\}\cup
\{v_1[j]v_2[j],v_2[j]v_4[j],v_4[j]v_1[j],v_4[j]v_5[j],v_5[j]v_6[j],v_6[j]v_3[j],\allowbreak
v_3[j]v_5[j]: j\in\{1,2,3,\ldots , m\}\}\cup
\{v_1[j]x, v_2[j]y, v_3[j]z: c_j=(x,y,z)\}\cup
\{t_1u_i,\allowbreak t_1\bar{u}_i  : i\in\{1,2,3,\ldots, n\}\}\cup \{t_2v_6[j]:j\in \{1,2,3,\ldots , m\}\}\left.\right)$; see
 Figure~\ref{stockfig}(a).
\end{sloppypar}

\begin{lemma}\label{thm:0,3}
{\sc {$(0,3)$wc-g}} is {\sf {NP}}-complete.
\end{lemma}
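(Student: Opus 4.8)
The plan is to show membership in {\sf {NP}} and then {\sf {NP}}-hardness. For the former, note that by Theorem~\ref{0elland1ell} the problem {\sc {wc-$(0,3)$g}} is in \P, so Fact~\ref{fact_np} immediately places {\sc {$(0,3)$wc-g}} in {\sf {NP}}; equivalently, a certificate is a partition of $V$ into three cliques, after which well-coveredness is verified in polynomial time by the algorithm of Theorem~\ref{0elland1ell}. The remaining work is hardness, which I would obtain by a polynomial reduction from {\sc {3-sat}} routed through Stockmeyer's {\sc {3-coloring}} instance $G$ and its modification $G'$ described above.

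The conceptual key is the passage to the complement. A graph is $(0,3)$ exactly when its vertex set partitions into three cliques, i.e.\ when its complement is properly $3$-colorable; and a maximal independent set of $\overline{H}$ is precisely a maximal clique of $H$. I would therefore output the complement $\overline{G'}$ of the modified Stockmeyer graph as the {\sc {$(0,3)$wc-g}} instance. Under this correspondence, $\overline{G'}$ is a $(0,3)$-graph if and only if $G'$ is $3$-colorable, and $\overline{G'}$ is well-covered if and only if all maximal cliques of $G'$ have the same cardinality. First I would check that the modification preserves $3$-colorability in both directions: since $G$ is a subgraph of $G'$, any proper $3$-coloring of $G'$ restricts to $G$; conversely, given a proper $3$-coloring of $G$, each added vertex $x_{uv}$ sees only the two endpoints of an edge $uv$ of $G$, which receive distinct colors, so $x_{uv}$ can take the remaining color. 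Hence $G'$ is $3$-colorable if and only if $G$ is, i.e.\ if and only if the formula is satisfiable. This already settles the easy direction: if $\overline{G'}$ is $(0,3)$-well-covered then in particular it is $(0,3)$, whence $G'$ is $3$-colorable and the formula is satisfiable.

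The heart of the argument, and the step I expect to require the most care, is showing that whenever $G'$ is $3$-colorable \emph{all} its maximal cliques have size exactly $3$, so that well-coveredness of $\overline{G'}$ comes for free and the converse direction holds. The sole purpose of adding the vertices $x_{uv}$ is to guarantee that every edge of $G'$ lies in a triangle: an edge of $G$ already contained in a triangle remains so, an edge $uv$ of $G$ not in any triangle becomes the base of the triangle $\{u,v,x_{uv}\}$, and the only new edges $ux_{uv}$ and $vx_{uv}$ lie in that same triangle. I would also verify that $G'$ has no isolated vertex, which is inherited from $G$ together with the fact that each $x_{uv}$ has degree $2$. Consequently no maximal clique of $G'$ can have size $1$ (no isolated vertices) nor size $2$ (every edge extends to a triangle), so every maximal clique has size at least $3$; and if $G'$ is $3$-colorable then $\omega(G')\le 3$, forcing every maximal clique to have size exactly $3$. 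Therefore, when the formula is satisfiable, $\overline{G'}$ is simultaneously $(0,3)$ and well-covered, completing the reduction.

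I expect the main obstacle to be purely the bookkeeping in the last step: confirming edge by edge which edges of Stockmeyer's clause and consistency gadgets already lie in triangles (so that the set of added vertices $x_{uv}$ is correctly specified), and then checking that the modification leaves the graph with no isolated vertex and with clique number exactly $3$ in the colorable case. Once these facts are pinned down, the equivalence ``$\overline{G'}$ is $(0,3)$-well-covered $\iff$ the {\sc {3-sat}} instance is satisfiable'' follows, and since the construction is clearly polynomial, {\sf {NP}}-hardness — and hence {\sf {NP}}-completeness — is established.
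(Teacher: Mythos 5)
Your proposal is correct and follows essentially the same route as the paper's proof: membership via Theorem~\ref{0elland1ell} and Fact~\ref{fact_np}, then hardness by complementing Stockmeyer's graph after adding the vertices $x_{uv}$ so that every maximal clique of $G'$ is a triangle. Your justification of that last fact (every edge lies in a triangle, no isolated vertices, $\omega(G')\le 3$ in the colorable case) is just a slightly more explicit rendering of what the paper asserts by construction.
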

\begin{proof}
As by Theorem~\ref{0elland1ell} the {\sc Well-Covered Graph} problem can be solved in polynomial time on $(0,3)$-graphs, by Fact~\ref{fact_np} {\sc $(0,3)$wc-g} is in $\NP$.

Let $I=(U,C)$ be a {\sc {3-sat}} instance.
We produce,  in polynomial time in the size of $I$, a
{\sc {$(0,3)$wc-g}} instance~$H$, such that $I$ is satisfiable if and only
if~$H$ is $(0,3)$-well-covered.
Let $G=(V,E)$ be the graph of~\cite{Stockmeyer:1973}
obtained from~$I$,
and let~$G'$ be the graph obtained from $G$
by adding to $V$ a vertex $x_{uv}$ for
every edge~$uv$ of $G$ not belonging to a triangle,
and by adding
to $E$  edges $ux_{uv}$ and~$vx_{uv}$;
see Figure~\ref{stockfig}(b).
Finally, we define $H=\overline{G'}$ as the
complement of $G'$.
Note that, by~\cite{Stockmeyer:1973},
$I$ is satisfiable if and only if $G$
is 3-colorable. Since $x_{uv}$
is adjacent to only two different colors of $G$,
clearly
$G$ is 3-colorable if and only if $G'$
is 3-colorable. Hence,
$I$ is satisfiable if and only if $H$
is a $(0,3)$-graph.
We prove next that
$I$ is satisfiable if and only if
$H$ is a $(0,3)$-well-covered graph.

Suppose that
$I$ is satisfiable. Then, since $H$
is a $(0,3)$-graph, every
maximal independent set of $H$ has size
$3$, $2$, or $1$. If there is a maximal
independent set~$I$ in $H$ with size $1$ or $2$,
then $I$ is a maximal clique of $G'$
of size $1$ or $2$. This contradicts the
construction of $G'$, since every maximal
clique of $G'$ is a triangle.
Therefore, $G$ is well-covered.

Suppose that $H$ is $(0,3)$-well-covered.
Then $G'$ is 3-colorable, so
$G$ is also 3-colorable.
Thus, by~\cite{Stockmeyer:1973},
$I$ is satisfiable.
\end{proof}

\medskip

We next prove that
$(3,0)${\sc {wc-g}}
is {\sf {NP}}-hard.
For this, we again use the
proof of Stockmeyer~\cite{Stockmeyer:1973}, together with the following theorem.

\begin{sloppypar}
\begin{proposition}[Topp and Volkmann~\cite{Topp1990}]\label{pendant}
Let $G=(V,E)$ be an $n$-vertex graph,
$V=\{v_1,v_2,v_3, \ldots ,v_n\}$, and let
 $H$ be obtained from $G$
such that $V(H)=V\cup \{u_1,u_2,u_3, \ldots ,u_n\}$
and $E(H)=E\cup \{v_iu_i: i\in \{1,2,3,\ldots , n\}\}$.
Then $H$ is a well-covered graph where
every maximal independent set has size $n$.
\end{proposition}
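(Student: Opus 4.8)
The plan is to show directly that every maximal independent set of $H$ has cardinality exactly $n$; since all maximal independent sets then share the same size, well-coveredness follows at once. The key structural observation is that the $2n$ vertices of $H$ split into the $n$ pairs $P_i = \{v_i, u_i\}$, each joined by the pendant edge $v_i u_i$, and that these pairs partition $V(H)$. The entire argument then reduces to proving that any maximal independent set meets each $P_i$ in exactly one vertex.

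First I would fix an arbitrary maximal independent set $I$ of $H$ and establish $|I \cap P_i| \le 1$ for every $i$, which is immediate since $v_i u_i \in E(H)$ and an independent set cannot contain both endpoints of an edge. For the reverse inequality $|I \cap P_i| \ge 1$, I would argue by contradiction: suppose $I \cap P_i = \emptyset$ for some $i$. As $u_i$ is a pendant vertex with $N_H(u_i) = \{v_i\}$ and $v_i \notin I$, the vertex $u_i$ has no neighbor in $I$, so $I \cup \{u_i\}$ is independent, contradicting the maximality of $I$. Combining the two bounds gives $|I \cap P_i| = 1$ for each $i$.

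Summing over the $n$ disjoint pairs yields $|I| = n$, and since $I$ was an arbitrary maximal independent set, every maximal independent set of $H$ has cardinality $n$; in particular, all of them have equal size, so $H$ is well-covered. I do not anticipate any genuine obstacle here: the only nontrivial step is the maximality argument, and it rests solely on each $u_i$ being a leaf, so no assumption whatsoever on the structure of the base graph $G$ is required.
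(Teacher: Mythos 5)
Your proof is correct and follows essentially the same route as the paper's: both arguments exploit that each pendant edge $v_iu_i$ forbids two vertices of $\{v_i,u_i\}$ from lying in an independent set, while maximality (since $N_H(u_i)=\{v_i\}$) forces at least one, so every maximal independent set picks exactly one vertex per pair and has size $n$. The paper phrases this via the index set $\mathcal{U}=\{i : v_i\in I\}$ and notes that all $u_i$ with $i\notin\mathcal{U}$ must be in $I$, but the counting is identical to your pairwise partition argument.
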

\end{sloppypar}

\begin{proof}
Observe that every maximal independent set $I$ of $H$
has a subset $I_G= I \cap V$. Let ${\cal{U}}\subseteq \{1,2,3,\ldots ,n\}$ be the
set of indices $i$ such that $v_i\in I$. Since~$I$ is maximal,
the set $\{u_i: i\in \{1,2,3,\ldots ,n\}\setminus \cal{U}\}$ must
be contained in $I$, so $|I|=n$.
\end{proof}

\begin{lemma}\label{thm:3,0}
$(3,0)${\sc {wc-g}} is {\sf {NP}}-hard.
\end{lemma}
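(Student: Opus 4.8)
The plan is to reduce from 3-\textsc{sat} once more, reusing Stockmeyer's gadget graph $G$ from Figure~\ref{stockfig}(a), but now combined with the pendant-vertex construction of Proposition~\ref{pendant}. Recall that for the $(0,3)$ case in Lemma~\ref{thm:0,3} we took the complement of (a modification of) $G$ to turn a $3$-coloring into a $(0,3)$-partition. Here we want an instance that is \emph{itself} a $(3,0)$-graph precisely when $I$ is satisfiable, so I would keep $G$ (which is a $(3,0)$-graph, i.e.\ $3$-colorable, if and only if $I$ is satisfiable) rather than its complement. The difficulty is that $G$ on its own need not be well-covered, so satisfiability alone does not force the well-coveredness we need. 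The idea is to attach a pendant vertex to every vertex of $G$ as in Proposition~\ref{pendant}, producing a graph $H$ in which \emph{every} maximal independent set has the same size $n = |V(G)|$; this makes $H$ unconditionally well-covered.

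First I would set $H := G'$, where $G'$ is obtained from Stockmeyer's graph $G=(V,E)$ by adding, for each $v_i \in V$, a new pendant vertex $u_i$ adjacent only to $v_i$. By Proposition~\ref{pendant}, $H$ is well-covered (every maximal independent set has size $|V|$), so the well-covered requirement is satisfied \emph{regardless} of whether $I$ is satisfiable. Thus $H$ is $(3,0)$-well-covered if and only if $H$ is a $(3,0)$-graph, i.e.\ if and only if $H$ is $3$-colorable. The second step is to show that $H$ is $3$-colorable exactly when $G$ is: the ``only if'' direction is immediate since $G$ is an induced subgraph of $H$, and for the ``if'' direction I note that each pendant $u_i$ has a single neighbor $v_i$, so any proper $3$-coloring of $G$ extends to $H$ by giving $u_i$ any color different from that of $v_i$ (which is possible since three colors are available and only one is forbidden). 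Combining with Stockmeyer's result that $G$ is $3$-colorable if and only if $I$ is satisfiable, I conclude that $H$ is $(3,0)$-well-covered if and only if $I$ is satisfiable, and since $H$ is constructed in polynomial time, this gives the desired \NP-hardness.

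I should double-check the subtle point of which version of $G$ to use: the plain Stockmeyer graph, or the triangle-completed $G'$ from Lemma~\ref{thm:0,3}. For the present reduction I want the graph whose $3$-colorability is equivalent to satisfiability and which carries no hidden obstruction to the coloring extension; the unmodified Stockmeyer $G$ suffices, since there I only need $3$-colorability $\Leftrightarrow$ satisfiability and pendant vertices trivially extend any coloring. The main obstacle, and the only place where care is genuinely required, is verifying that adding the pendant vertices does not change the chromatic threshold—concretely, that the new graph is $3$-colorable precisely when $G$ is. This is exactly the content of the one-forbidden-color argument above, and it is routine because each added vertex has degree one. Note finally that, unlike the $(0,3)$ case, I do not claim membership in \NP\ here: because recognizing $(3,0)$-graphs (i.e.\ $3$-colorability) is itself \NP-complete, Fact~\ref{fact_np} does not apply, which is why the statement asserts only \NP-hardness.
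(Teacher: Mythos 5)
Your proposal is correct and matches the paper's own proof essentially verbatim: both take Stockmeyer's {\sc 3-coloring} graph $G$, attach a pendant vertex to each vertex via Proposition~\ref{pendant} to make the result unconditionally well-covered, and observe that the degree-one pendants extend any $3$-coloring, so the new graph is $(3,0)$-well-covered if and only if $I$ is satisfiable. Your closing remark about why only \NP-hardness (and not membership in \NP) is claimed is also consistent with the paper.
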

\begin{proof}
Let $I=(U,C)$ be a {\sc {3-sat}} instance;
let $G=(V,E)$ be the graph obtained from~$I$
in Stockmeyer's~\cite{Stockmeyer:1973}
\NP-completeness proof
for {\sc {3-coloring}};
and let $H$ be the graph obtained from $G$ by the
transformation
described in Proposition~\ref{pendant}.
We prove that $I$ is satisfiable if and only if
$H$ is a $(3,0)$-well-covered graph.
Suppose that $I$ is satisfiable.
Then by~\cite{Stockmeyer:1973} we have that
$G$ is 3-colorable. Since a
vertex $v \in V(H)\setminus V(G)$ has just one neighbor,
there are 2 colors left for $v$
to extend a 3-coloring of $G$, and so
$H$ is a $(3,0)$-graph. Hence,
by Proposition~\ref{pendant},
$H$ is a $(3,0)$-well-covered graph.
Suppose that~$H$ is a $(3,0)$-well-covered graph.
Then we have that
$G$ is a $(3,0)$-graph.
By~\cite{Stockmeyer:1973}, $I$ is satisfiable.
\end{proof}

\medskip

\begin{sloppypar}
Note that Theorem~\ref{mono} combined with Lemma~\ref{thm:0,3} does not imply that {\sc {$(1,3)$wc-g}}
is NP-complete.
\end{sloppypar}

\begin{lemma}\label{teo:1-3}
{\sc {$(1,3)$wc-g}} is {\sf {NP}}-complete.
\end{lemma}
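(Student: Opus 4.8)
The plan is to prove \textsc{$(1,3)$wc-g} is \NP-complete in two parts. Membership in \NP\ follows immediately from Fact~\ref{fact_np}: by Theorem~\ref{0elland1ell} the problem \textsc{wc-$(1,3)$g} lies in \P, so a $(1,3)$-partition of the input graph, together with the polynomial-time well-coveredness test of Theorem~\ref{0elland1ell}, certifies a \textsc{YES}-instance. For hardness, I first note (as remarked just before the statement) that part~(iii) of Theorem~\ref{mono} cannot be used to lift the \NP-hardness of \textsc{$(0,3)$wc-g} from Lemma~\ref{thm:0,3} to \textsc{$(1,3)$wc-g}, since that part requires $r\geq 1$. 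Hence I would give a fresh polynomial reduction from \textsc{3-sat}, reusing the graph built in the proof of Lemma~\ref{thm:0,3}.

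Concretely, from a \textsc{3-sat} instance $I$ let $G'$ be Stockmeyer's graph augmented as in Lemma~\ref{thm:0,3}, and let $H_0=\overline{G'}$ be the associated \textsc{$(0,3)$wc-g} instance. I would exploit two properties established there: (a) $H_0$ is a $(0,3)$-graph if and only if $I$ is satisfiable; and (b) every maximal clique of $G'$ is a triangle, so \emph{every} maximal independent set of $H_0$ has size exactly $3$, independently of satisfiability. The reduction outputs the join $H=B + H_0$, where $B$ is a fixed small gadget (every vertex of $B$ adjacent to every vertex of $H_0$) chosen below.

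The technical core is the following forcing argument. In any $(1,3)$-partition $(S,K^1,K^2,K^3)$ of $H=B+H_0$, the independent set $S$ lies entirely inside $B$ or entirely inside $H_0$, since all $B$--$H_0$ edges are present; moreover each clique splits as $K^i=K^i_B\cup K^i_{H_0}$ with $K^i_B$ a clique of $B$ and $K^i_{H_0}$ a clique of $H_0$. If $B$ is chosen \emph{not} to be a $(0,3)$-graph, then $S\subseteq H_0$ (in particular $S=\emptyset$) is impossible, as it would force $B=K^1_B\cup K^2_B\cup K^3_B$; hence $S\subseteq B$, which in turn forces $H_0=K^1_{H_0}\cup K^2_{H_0}\cup K^3_{H_0}$, i.e.\ $H_0$ is a $(0,3)$-graph. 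Conversely, if $H_0$ is $(0,3)$ and $B$ is itself $(1,3)$, the two partitions combine clique-by-clique into a $(1,3)$-partition of $H$. Thus $H$ is a $(1,3)$-graph if and only if $H_0$ is a $(0,3)$-graph, i.e.\ if and only if $I$ is satisfiable.

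The main obstacle is choosing $B$ so that $H$ is \emph{also} well covered, because the forcing above needs $B$ to \emph{not} be $(0,3)$ (hence clique-cover number at least $4$), whereas well-coveredness requires all maximal independent sets of $H$—in particular those of $B$—to match those of $H_0$, namely size $3$. I would resolve this tension by taking $B=C_7$, the $7$-cycle: it is well covered with $\alpha(C_7)=3$ (so all its maximal independent sets have size $3$), it is a $(1,3)$-graph (independent set $\{v_1,v_3,v_5\}$ with cliques $\{v_2\},\{v_4\},\{v_6,v_7\}$), yet its clique-cover number is $4$, so it is not a $(0,3)$-graph. Since the maximal independent sets of a join $B+H_0$ are exactly those of $B$ together with those of $H_0$, every maximal independent set of $H$ has size $3$, so $H$ is well covered regardless of $I$. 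Combining this with the previous paragraph, $H$ is $(1,3)$-well-covered if and only if $I$ is satisfiable, which completes the reduction and the proof.
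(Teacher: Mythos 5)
Your proof is correct, but it takes a genuinely different route from the paper's. The paper does not reuse the augmented graph $G'$ of Lemma~\ref{thm:0,3} at all: it takes the complement $\overline{G}$ of the \emph{plain} Stockmeyer graph and attaches a pendant vertex $p_v$ to every vertex, obtaining well-coveredness for free from Proposition~\ref{pendant} (Topp--Volkmann) and forcing the $(1,3)$-structure via a forbidden-subgraph argument: if some pendant $p_v$ lay in a clique $K^i$ of the partition, then $K^i\subseteq\{p_v,v\}$ and $G\setminus\{v\}$ would have to be an induced subgraph of a $(2,1)$-graph, contradicting the fact that it contains an induced $2K_3$; hence all pendants land in $S$, and since they dominate $H$ they are exactly $S$, so $\overline{G}$ must be $(0,3)$. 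Your construction instead black-boxes the \textsc{$(0,3)$wc-g} instance $H_0=\overline{G'}$ and joins it to a fixed gadget $C_7$, using the clique cover number $4$ of $C_7$ to force $S$ into the gadget side, and the coincidence $\alpha(C_7)=3=\alpha(H_0)$ with all maximal independent sets on both sides of size $3$ to make $H$ unconditionally well covered. This is arguably more modular (it upgrades any $(0,3)$-hardness instance with the right independence structure to a $(1,3)$-hardness instance), whereas the paper's pendant construction is more self-contained and generalizes differently (it is the same device used for Lemma~\ref{thm:3,0}). One small point you should nail down: you invoke the fact that \emph{every} maximal independent set of $H_0$ has size exactly $3$ \emph{independently of satisfiability}, but the proof of Lemma~\ref{thm:0,3} only derives the upper bound of $3$ from the $(0,3)$-partition, i.e.\ in the satisfiable case; for the unsatisfiable case you need to check directly that $G'$ is $K_4$-free (which it is: Stockmeyer's gadgets contain no $K_4$ and the added degree-$2$ vertices $x_{uv}$ cannot create one), so that $\omega(G')=\alpha(H_0)=3$ unconditionally. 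With that one-line verification added, your argument is complete.
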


\begin{proof}
As by Theorem~\ref{0elland1ell} the {\sc Well-Covered Graph} problem can be solved in polynomial time on $(1,3)$-graphs, by Fact~\ref{fact_np} {\sc $(1,3)$wc-g} is in $\NP$.

Let $I=(U,C)$ be a {\sc {3-sat}} instance. Without loss of generality, $I$ has more than two clauses.
We produce a {\sc {$(1,3)$wc-g}} instance $H$
 polynomial in the size of~$I$,
 such that $I$ is satisfiable if and only if
$H$ is $(1,3)$-well-covered.

Let $G=(V,E)$ be the graph of Stockmeyer~\cite{Stockmeyer:1973}
obtained from $I$ (see Figure~\ref{stockfig}(a)),
and let $H$ be the graph obtained from $\overline{G}$
(the complement of the graph $G$)
by adding one pendant vertex $p_v$ for each vertex $v$ of $\overline{G}$.
Note that $V(H)=V(G)\cup \{p_v: v\in V(G)\}$,
$E(H)=E(\overline{G})\cup \{p_vv: v\in V(G)\}$, and $N_{H}(p_v)=\{v\}$.

First suppose that $I$ is satisfiable.
Then by~\cite{Stockmeyer:1973},
$G$ is a $(3,0)$-graph, and~$\overline{G}$ is a $(0,3)$-graph with partition into cliques
$V(\overline{G})=(K_{\overline{G}}^1,K_{\overline{G}}^2,K_{\overline{G}}^3)$. Thus it follows that $(S=\{p_v: v\in V(G)\},K_{\overline{G}}^1,K_{\overline{G}}^2,K_{\overline{G}}^3)$ is a $(1,3)$-partition of $V(H)$. In addition, from Proposition~\ref{pendant} and by the construction of $H$, $H$ is a well-covered graph. Hence $H$ is $(1,3)$-well-covered.

Conversely, suppose that $H$ is $(1,3)$-well-covered, and let $V(H)=(S,K^1,\allowbreak K^2,\allowbreak K^3)$ be a $(1,3)$-partition for $H$.
Then we claim that no vertex $p_v\in V(H)\setminus V(G)$ belongs  to $K^i, i\in \{1,2,3\}$.
Indeed, suppose for contradiction that $p_v\in K^i$ for some $i\in \{1,2,3\}$. Then, $K^i \subseteq \{p_v, v\}$. Hence, $H\setminus K^i$ is a $(1,2)$-graph and $G\setminus \{v\}$ is an induced subgraph of a $(2,1)$-graph. But by construction of $G$, $G\setminus \{v\}$ (for any $v\in V(G)$) contains at least one $2K_3$ (that is, two vertex-disjoint copies of~$K_3$) as an induced subgraph, which is a contradiction given that $2K_3$ is clearly a forbidden subgraph for $(2,1)$-graphs. Therefore, $\{p_v: v\in V(G)\} \subseteq S$, and since $\{p_v: v\in V(G)\}$ is a dominating set of $H$, $S=\{p_v: v\in V(G)\}$.
Thus, $\overline{G}$ is a $(0,3)$-graph with partition $V(\overline{G})=(K^1,K^2,K^3)$, and therefore $G$ is a $(3,0)$-graph, i.e., a 3-colorable graph. Therefore, by~\cite{Stockmeyer:1973}, $I$ is satisfiable.
\end{proof}

\begin{corollary}\label{cor3l0}
If $r\geq 3$ and $\ell=0$, then $(r,\ell)${\sc {wc-g}} is {\sf {NP}}-hard.
If $r\in\{0,1\}$ and $\ell \geq3$, then $(r,\ell)${\sc {wc-g}} is {\sf {NP}}-complete.
\end{corollary}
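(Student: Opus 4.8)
The plan is to prove Corollary~\ref{cor3l0} by combining the two hardness results we have just established with the monotonicity machinery of Theorem~\ref{mono}. The statement has two parts, and each follows from a single ``base case'' lemma propagated upward along one axis of the table.

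For the first claim, that $(r,0)${\sc {wc-g}} is {\sf {NP}}-hard whenever $r \geq 3$, I would take Lemma~\ref{thm:3,0} as the base case, which gives {\sf {NP}}-hardness of $(3,0)${\sc {wc-g}}. To climb from $r$ to $r+1$ while keeping $\ell = 0$ fixed, I would invoke Theorem~\ref{mono}\emph{(iii)}: since its hypothesis requires $r \geq 1$, and we are starting at $r = 3$, this is satisfied at every step. Thus a straightforward induction on $r$ starting from $r = 3$ shows that $(r,0)${\sc {wc-g}} is {\sf {NP}}-hard for all $r \geq 3$. Here I would note that Theorem~\ref{mono}\emph{(iii)} preserves {\sf {NP}}-hardness (as well as {\sf {coNP}}-hardness), so the propagation is immediate.

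For the second claim, that $(r,\ell)${\sc {wc-g}} is {\sf {NP}}-complete whenever $r \in \{0,1\}$ and $\ell \geq 3$, I would use two base cases, Lemma~\ref{thm:0,3} for $(0,3)$ and Lemma~\ref{teo:1-3} for $(1,3)$, both of which give {\sf {NP}}-completeness. To increase $\ell$ while keeping $r$ fixed at $0$ or $1$, I would apply Theorem~\ref{mono}\emph{(ii)}, which propagates {\sf {NP}}-hardness from $(r,\ell)$ to $(r,\ell+1)$ with no restriction on $r$; inducting on $\ell$ from the respective base case then yields {\sf {NP}}-hardness of $(r,\ell)${\sc {wc-g}} for each fixed $r \in \{0,1\}$ and all $\ell \geq 3$. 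To upgrade hardness to completeness, I would separately check membership in {\sf {NP}}: for $r \in \{0,1\}$ the problem {\sc {wc-$(r,\ell)$g}} is in {\sf {P}} by Theorem~\ref{0elland1ell}, so by Fact~\ref{fact_np} the problem $(r,\ell)${\sc {wc-g}} lies in {\sf {NP}}. Combining {\sf {NP}}-hardness with {\sf {NP}}-membership gives {\sf {NP}}-completeness.

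I do not anticipate a genuine obstacle here, since all the hard work is contained in the earlier lemmas; the corollary is essentially a bookkeeping argument. The one point deserving care is checking the side condition $r \geq 1$ in Theorem~\ref{mono}\emph{(iii)} for the first claim, which is why the induction is anchored at $r = 3$ rather than at a smaller value, and why the first claim does not extend to $(r,0)$ for $r < 3$ via this route. The second point to state explicitly is that {\sf {NP}}-membership for the second claim comes from Fact~\ref{fact_np} applied through Theorem~\ref{0elland1ell}, rather than from the monotonicity theorem, which only transfers hardness.
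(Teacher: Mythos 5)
Your proof is correct and follows essentially the same route as the paper: the paper's own proof of this corollary is a one-line combination of Theorem~\ref{mono} with Lemmas~\ref{thm:0,3}, \ref{thm:3,0} and~\ref{teo:1-3} for hardness, plus Fact~\ref{fact_np} via Theorem~\ref{0elland1ell} for {\sf NP}-membership, exactly as you describe. Your explicit check of the side condition $r\geq 1$ in Theorem~\ref{mono}(iii) and your separation of hardness from membership are just a more careful spelling-out of the same bookkeeping.
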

\begin{proof}
$(r,\ell)${\sc {wc-g}} is {\sf {NP}}-hard in all of these cases by combining Theorem~\ref{mono}, and Lemmas~\ref{thm:0,3}, \ref{thm:3,0} and~\ref{teo:1-3}.
For $r\in\{0,1\}$ and $\ell \geq3$, the {\sc Well-Covered Graph} problem can be solved in polynomial time on $(r,\ell)$-graphs, so by Fact~\ref{fact_np} {\sc $(r,\ell)$wc-g} is in~$\NP$.
\end{proof}

Below we summarize the cases for which we have shown that {\sc {wc-}}$(r,\ell)${\sc {g}} or $(r,\ell)${\sc {wc-g}} is computationally hard.

\begin{theorem}\label{teonpc} The following classification holds:
\hspace{.1cm}
\vspace{.1cm}
\begin{enumerate}
\item
{\sc {wc-}}$(r,\ell)${\sc {g}}
with $r\geq 2$
and
$\ell\geq 1$
are \coNPbis-complete;
\item
$(0,\ell)${\sc {wc-g}} and $(1,\ell)${\sc {wc-g}}
with $\ell\geq 3$
are \NP-complete;
\item
$(2,1)${\sc {wc-g}} and $(2,2)${\sc {wc-g}} are
\coNPbis-complete;
\item
$(r,\ell)${\sc {wc-g}} with $r\geq 0$ and $\ell\geq 3$
is \NP-hard;
\item
$(r,\ell)${\sc {wc-g}} with $r\geq 3$ and $\ell\geq 0$
is \NP-hard;
\item
$(r,\ell)${\sc {wc-g}} with $r\geq 2$ and $\ell\geq 1$
is \coNPbis-hard.
\end{enumerate}
\end{theorem}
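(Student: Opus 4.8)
The plan is to prove Theorem~\ref{teonpc} as a bookkeeping result: each of its six items is obtained by combining the individual hardness lemmas already established with the monotonicity theorem (Theorem~\ref{mono}) that lets us propagate hardness to larger values of $r$ and $\ell$. So the proof is essentially a case analysis that traces, for each item, a chain back to a base case. I would treat the items one at a time, in the order listed.

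For item~(1), the base case is Proposition~\ref{21conpc}, which gives that {\sc {wc-$(2,1)$g}} is \coNPbis-complete. Since {\sc {wc-$(r,\ell)$g}} lies in \coNPbis for every $(r,\ell)$ (a \textsc{NO}-certificate being two maximal independent sets of different sizes, as remarked after Fact~\ref{fact2}), it suffices to establish \coNPbis-hardness for all $r\geq 2$, $\ell\geq 1$. Starting from $(2,1)$ and applying part~(i) of Theorem~\ref{mono} repeatedly — which raises either coordinate by one while preserving \coNPbis-completeness — I would reach every such pair. For item~(3), {\sc {$(2,1)$wc-g}} is \coNPbis-complete by Corollary~\ref{co21}; to get $(2,2)$ I would combine Fact~\ref{fact2} and Fact~\ref{fact_conp} with the already-proven \coNPbis-completeness of {\sc {wc-$(2,2)$g}} from item~(1), concluding \coNPbis-completeness of {\sc {$(2,2)$wc-g}}. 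For item~(2), Lemma~\ref{thm:0,3} and Lemma~\ref{teo:1-3} give the base cases $(0,3)$ and $(1,3)$ as \NP-complete; part~(ii) of Theorem~\ref{mono} raises $\ell$ from $3$ upward (membership in \NP following from Fact~\ref{fact_np} together with Theorem~\ref{0elland1ell}, exactly as in Corollary~\ref{cor3l0}).

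Items~(4)--(6) are the \NP-hardness and \coNPbis-hardness assertions over the larger ranges, and these are precisely where the two different parts of Theorem~\ref{mono} must be deployed carefully. For item~(4), the base cases are again $(0,3)$ and $(1,3)$ (\NP-hard by Lemmas~\ref{thm:0,3} and~\ref{teo:1-3}); part~(ii) increases $\ell$, and part~(iii) — which requires $r\geq 1$ — increases $r$, so to cover the row $r=0$, $\ell\geq 3$ I rely on part~(ii) alone applied to the $(0,3)$ base, and for $r\geq 1$ I first move from $(1,3)$ and then climb in both directions. For item~(5), the base case is Lemma~\ref{thm:3,0} ($(3,0)$ is \NP-hard); here $\ell=0$, so part~(ii) raises $\ell$ and part~(iii) raises $r$ (legitimate since $r\geq 3\geq 1$). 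Item~(6) is the \coNPbis-hardness half: the base case is Corollary~\ref{co21} ($(2,1)$), and parts~(ii) and~(iii) of Theorem~\ref{mono} — both of which carry \coNPbis-hardness, with part~(iii) applicable because $r\geq 2\geq 1$ — extend it to all $r\geq 2$, $\ell\geq 1$.

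The main subtlety to watch is the asymmetry of the monotonicity theorem: part~(iii) only increases the number of independent sets when $r\geq 1$, so each chain must be anchored at a base case whose $r$-coordinate is already at least $1$ before that part is invoked, and the row $r=0$ must be reached using part~(ii) exclusively. Provided one keeps track of this restriction, every target pair in each item is connected to an established base case by a valid sequence of applications, and no genuinely new argument is needed; the bulk of the work is organizing the induction and verifying that membership (in \NP\ or \coNPbis, where completeness rather than mere hardness is claimed) holds throughout, which follows from Facts~\ref{fact_conp}--\ref{fact_np} and Theorem~\ref{0elland1ell}.
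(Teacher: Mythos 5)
Your proposal is correct and follows essentially the same route as the paper: each item is anchored at the same base cases (Proposition~\ref{21conpc}, Corollary~\ref{co21}, Lemmas~\ref{thm:0,3}, \ref{thm:3,0} and~\ref{teo:1-3}) and propagated via the appropriate parts of Theorem~\ref{mono}, with membership handled by Facts~\ref{fact_conp} and~\ref{fact_np} and Theorem~\ref{0elland1ell}. Your explicit attention to the $r\geq 1$ restriction in Theorem~\ref{mono}(iii) is exactly the bookkeeping the paper's proof implicitly relies on.
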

\begin{proof}
Statement~1 follows from Proposition~\ref{21conpc} and Theorem~\ref{mono}(i).
Statement~2 follows from Corollary~\ref{cor3l0}.
Statement~3 follows from Statement~1, Facts~\ref{fact2} and~\ref{fact_conp} and the fact that recognizing $(r,\ell)$-graphs is in $\P$ if $\max\{r, \ell\} \leq 2$~\cite{Brandstadt96}.
Statement~4 follows from Statement~2 and Theorem~\ref{mono}(ii)-(iii).
Statement~5 follows from Lemma~\ref{thm:3,0} and Theorem~\ref{mono}(ii)-(iii).
Finally, Statement~6 follows from Corollary~\ref{co21} and Theorem~\ref{mono}(ii)-(iii).
\end{proof}

\newpage

\section{Parameterized complexity of the problems}
\label{sec:FPT-results}

In this section we focus on the parameterized complexity of the \textsc{Well-Covered Graph} problem, with special emphasis on the case where the input graph is an $(r,\ell)$-graph. Recall that the results presented in Section 2 show that \textsc{wc-$(r,\ell)$g} is {\sf {para-coNP}}-complete when parameterized by $r$ and $\ell$. Thus, additional pa\-ra\-me\-ters should be considered. Henceforth we let $\alpha$ (resp. $\omega$) denote the size of a maximum independent set (resp. maximum clique) in the input graph $G$ for the problem under consideration. Note that \textsc{wc}-$(r,\ell)$\textsc{g} parameterized by $r, \ell$, and $\omega$ generalizes \textsc{wc-$(r,0)$g}, whose complexity was left open in the previous sections. Therefore, we focus on the complexity of \textsc{wc-$(r,\ell)$g} parameterized by $r, \ell$, and $\alpha$, and on the complexity of the natural parameterized version of \textsc{Well-Covered Graph}, defined as follows:

\begin{flushleft}
\fbox{
\begin{minipage}{\textwidth}
\textsc{$k$-Well-Covered Graph}\vspace{.1cm}\\
\begin{tabular}{rl}
{\bf Input:}    & A graph $G$ and an integer $k$.\\
{\bf Parameter:} & $k$.\\
{\bf Question:} & Does every maximal independent set of $G$ have size exactly $k$?
\end{tabular}
\end{minipage}}
\medskip
\end{flushleft}

The next lemma provides further motivation to study of the \textsc{wc-$(0,\ell)$g} problem, as it shows that \textsc{$k$-Well-Covered Graph} (on general graphs) can be reduced to the \textsc{wc-$(0,\ell)$g} problem parameterized by $\ell$.

\begin{lemma}\label{lem:reductiongeneral}
The \textsc{$k$-Well-Covered Graph} problem can be fpt-reduced to the \textsc{wc-$(0,\ell)$g} problem parameterized by~$\ell$.
\end{lemma}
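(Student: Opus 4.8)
The plan is to give an explicit, polynomial-time construction that turns an instance $(G,k)$ of \textsc{$k$-Well-Covered Graph} into a $(0,k+1)$-graph $G^*$ (together with a clique partition), so that the new parameter $\ell=k+1$ depends only on $k$ and $G^*$ is well-covered if and only if every maximal independent set of $G$ has size exactly $k$. The guiding idea is that a $(0,\ell)$-graph automatically has independence number at most $\ell$, so I can only ``see'' independent sets up to size $\ell$; choosing $\ell=k+1$ lets me read off exactly the maximal-independent-set sizes of $G$ that are at most $k$, and additionally \emph{witness} whether $\alpha(G)>k$. I would first dispose of the trivial cases ($k=0$, or $G$ with no vertices) by deciding them directly in \textsc{FPT}-time and outputting a fixed canonical \textsc{yes}/\textsc{no} instance, so from now on assume $k\ge 1$ and $V(G)\neq\emptyset$.

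The core gadget is a \emph{layered graph} $L$ on vertex set $V(G)\times\{1,\dots,k+1\}$, where two distinct vertices $(u,i)$ and $(v,j)$ are adjacent \emph{unless} $i\neq j$, $u\neq v$, and $uv\notin E(G)$. Each layer $C_i=V(G)\times\{i\}$ is then a clique, so $L$ is a $(0,k+1)$-graph with the obvious partition $C_1,\dots,C_{k+1}$. The key structural claim I would prove is that an independent set of $L$ is exactly an independent set $A\subseteq V(G)$ together with an injective assignment of its elements to distinct layers, and that such a configuration is \emph{maximal} in $L$ precisely when $A$ is a maximal independent set of $G$ \emph{or} all $k+1$ layers are used (i.e.\ $|A|=k+1$). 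From this I would extract the two facts that drive the reduction: for every $t\le k$, the graph $L$ has a maximal independent set of size $t$ if and only if $G$ has a maximal independent set of size $t$; and $L$ has a maximal independent set of size $k+1$ if and only if $\alpha(G)\ge k+1$.

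These facts alone would misclassify instances of $G$ that are well-covered but with the ``wrong'' common size, so I would \emph{pin} the target value $k$ with a second gadget $D=kK_2$ (a disjoint union of $k$ edges), whose every maximal independent set has size exactly $k$ and which is a $(0,k)$-graph. I then set $G^*=L+D$ (the join). Since in a join no independent set meets both sides, the maximal independent sets of $G^*$ are exactly those of $L$ together with those of $D$, so their sizes are the sizes occurring in $L$ together with the single value $k$; moreover the join is still a $(0,k+1)$-graph, for which I give the explicit partition $C_1\cup D_1,\dots,C_k\cup D_k,C_{k+1}$ (pairing each layer with one matching edge). Now the verification is a short case analysis. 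If $(G,k)$ is a \textsc{yes}-instance then $\alpha(G)=k$ and $G$ has no maximal independent set of size smaller than $k$, so by the two facts $L$ has maximal independent sets only of size $k$; together with $D$ this makes $G^*$ well-covered. Conversely, if $(G,k)$ is a \textsc{no}-instance then either $\alpha(G)>k$, whence $L$ has a maximal independent set of size $k+1$, or some maximal independent set of $G$ has size $s<k$, whence $L$ has a maximal independent set of size $s$; in either case $G^*$ has maximal independent sets of two distinct sizes (the witnessed size and the pinned size $k$) and is not well-covered.

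The main obstacle, and the step I would treat most carefully, is the maximality characterization of independent sets of $L$, since correctness hinges on it in both directions; in particular one must check that a maximal independent set $A$ of $G$ of size $s\le k$ remains maximal in $L$ even though only $s$ of the $k+1$ layers are used (no candidate vertex of $G$ can be added because $A$ is maximal in $G$, and no repeated $G$-vertex or same-layer vertex can be added). The conceptually subtle point behind the construction—worth stating explicitly—is \emph{why} both the extra $(k+1)$-st layer and the pinning gadget $D$ are necessary: a plain $k$-layer construction could not witness $\alpha(G)>k$, and without $D$ an instance of $G$ that is well-covered with a uniform but incorrect common size (for example all maximal independent sets of size $k+1$, or all of some size $s<k$) would make $L$ well-covered and be wrongly accepted. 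Finally, the reduction is clearly polynomial in $|V(G)|$ and $k$, and outputs parameter $\ell=k+1$, so it is a valid fpt-reduction.
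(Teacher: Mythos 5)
Your proof is correct and uses essentially the same construction as the paper: your layered graph $L$ on $V(G)\times\{1,\dots,k+1\}$, with each layer and each column a clique and blown-up adjacency between columns, is exactly the paper's graph $G'$ with $\ell=k+1$. The only difference is how the target size $k$ is anchored: the paper first computes an arbitrary maximal independent set of $G$ and assumes without loss of generality that it has size $k$ (so that $G'$ itself already contains a maximal independent set of size $k$), whereas you join with the gadget $kK_2$; both devices serve the same purpose and your variant is equally valid.
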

\begin{proof}
Consider an arbitrary input graph~$G$ with vertices $u_1,\ldots,u_n$.
First, we find an arbitrary maximal (with respect to set-inclusion) independent set~$I$ in~$G$.
Without loss of generality we may assume that $|I|=k$ and $I=\{u_1,\ldots,u_{k}\}$. Let $\ell=k+1$.

We construct a $(0,\ell)$-graph~$G'$ with vertex set $\{v_{i,j}:i \in \{1,\ldots,\ell\},\allowbreak j \in \{1,\ldots,n\}\}$ as follows:
\begin{itemize}
\item For all $i \in \{1,\ldots,\ell\}$ add edges to make $V_i:=\{v_{i,j}:j \in \{1,\ldots,n\}\}$ into a clique.
\item For all $j \in \{1,\ldots,n\}$ add edges to make $W_j:=\{v_{i,j}:i \in \{1,\ldots,\ell\}\}$ into a clique.
\item For all pairs of adjacent vertices $u_a$, $u_b$ in~$G$, add edges between~$v_{i,a}$ and~$v_{j,b}$ for all $i,j \in \{1,\ldots,\ell\}$ (so that~$V_a$ is complete to~$V_b$).
\end{itemize}
Note that the sets~$V_i$ partition~$G'$ into~$\ell$ cliques, so~$G'$ is indeed a $(0,\ell)$-graph, where $\ell=k+1$.

The graph $G'$ has a maximal independent set of size~$k$, namely $\{v_{1,1},\ldots,\allowbreak v_{k,k}\}$, so~$G'$ is well-covered if and only if every maximal independent set in~$G'$ has size exactly~$k$.
Every maximal independent set in~$G'$ has at most one vertex in any set~$V_i$ and at most one vertex in any set~$W_j$, since~$V_i$ and~$W_j$ are cliques.
As there are~$\ell=k+1$ sets~$V_i$, it follows that every independent set in~$G'$ contains at most~$k+1$ vertices.
If~$G'$ contains an independent set $\{v_{i_1,j_1},\ldots,v_{i_x,j_x}\}$ for some~$x$ then $\{u_{j_1},\ldots,u_{j_x}\}$ is an independent set in~$G$.
If~$G$ contains an independent set $\{u_{j_1},\ldots,u_{j_x}\}$ for some~$x$ then $\{v_{1,j_1},\ldots,v_{\min(x,k+1),j_{\min(x,k+1)}}\}$ is an independent set in~$G'$.
Therefore~$G$ contains a maximal independent set smaller than~$k$ if and only if~$G'$ contains a maximal independent set smaller than~$k$
and~$G$ contains a (not necessarily maximal) independent set of size at least~$k+1$ if and only~$G'$ contains a maximal independent set of size exactly~$k+1$.
It follows that~$G'$ is well-covered if and only if~$G$ is. As $\ell=k+1$, this completes the proof.
\end{proof}

Recall that the {\sc Well-Covered Graph} problem is {\sf coNP}-complete~\cite{chvatal1993note,SaSt92}.
In order to analyze the parameterized complexity of the problem, we will need the following definition.

\begin{definition}
The class {\sf coW[2]} is the class of all parameterized problems whose complement is in {\sf W[2]}.
\end{definition}

For an overview of parameterized complexity classes, see~\cite{de2015machine,FG06}.

We are now ready to show the next result.

\begin{theorem}\label{coW[2]}
The \textsc{wc-$(0,\ell)$g} problem parameterized by~$\ell$ is {\sf {coW[2]}}-hard.
\end{theorem}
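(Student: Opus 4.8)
The plan is to prove \textsf{coW[2]}-hardness of \textsc{wc-$(0,\ell)$g} parameterized by $\ell$ by giving an fpt-reduction from the complement of \textsc{Dominating Set} parameterized by solution size, which is the canonical \textsf{W[2]}-complete problem. Equivalently, I will exhibit an fpt-reduction from \textsc{Dominating Set} to the \emph{complement} of \textsc{wc-$(0,\ell)$g}; since \textsf{coW[2]} is defined as the class of problems whose complement lies in \textsf{W[2]}, building a reduction that turns a \textsc{Dominating Set} instance $(G,k)$ into a \textsc{wc-$(0,\ell)$g} instance whose answer is \textsc{NO} precisely when $G$ has a dominating set of size at most $k$ establishes the hardness. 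The parameter $\ell$ of the constructed instance must be bounded by a computable function of $k$ alone.

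The key technical idea is to leverage Lemma~\ref{lem:reductiongeneral}, which already reduces \textsc{$k$-Well-Covered Graph} to \textsc{wc-$(0,\ell)$g} with $\ell=k+1$. Thus it suffices to produce an fpt-reduction from \textsc{Dominating Set} (parameterized by $k$) to the complement of \textsc{$k$-Well-Covered Graph} (or directly to a well-covered question in which every maximal independent set should have a fixed target size), where the target size $k'$ is controlled by $k$. The natural gadget is a classical one linking dominating sets to maximal independent sets: starting from the \textsc{Dominating Set} instance $G$ on $n$ vertices, I would build an auxiliary graph in which independent sets correspond to vertices of $G$ that are pairwise ``independent'' in a dominating-set sense, so that a \emph{small} maximal independent set exists exactly when a small dominating set exists. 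First I would fix a uniform target value so that a maximal independent set of the ``expected'' size is guaranteed to exist (for instance by attaching pendant structures as in Proposition~\ref{pendant}, which forces every maximal independent set to have size $n$), and then arrange a gadget whose maximal independent sets of \emph{smaller} size are in bijection with dominating sets of size at most $k$. The graph fails to be well-covered precisely when such a small dominating set exists, yielding the desired correspondence between \textsc{NO}-instances of well-coveredness and \textsc{YES}-instances of \textsc{Dominating Set}.

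Concretely, I would compose the dominating-set gadget with the clique-blow-up construction from the proof of Lemma~\ref{lem:reductiongeneral}: that construction takes any graph whose maximal independent sets we wish to analyze and produces a genuine $(0,\ell)$-graph with $\ell=k+1$ cliques while preserving the relevant small-independent-set structure. The crucial point to verify is that the blow-up caps independent-set sizes at $\ell=k+1$ and that a maximal independent set of the blow-up has size smaller than the target iff the original graph has a dominating set (hence a short maximal independent set) of size at most $k$. I would then set $\ell=k+O(1)$, confirming that the parameter depends only on $k$, and that the whole construction runs in polynomial time in $n$ (hence in fpt time), completing the fpt-reduction.

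The main obstacle I anticipate is the \emph{exact size-calibration}: I must ensure that the constructed $(0,\ell)$-graph has a maximal independent set of the canonical size (so that well-coveredness is equivalent to ``every maximal independent set has exactly that size''), while simultaneously guaranteeing that a short dominating set produces a \emph{strictly smaller} maximal independent set and nothing else does. Getting the gadget to translate ``dominating set of size $\le k$'' into ``maximal independent set of the right smaller size'' without introducing spurious small maximal independent sets — and keeping $\ell$ bounded by a function of $k$ rather than of $n$ — is the delicate part, and it is where I expect the bulk of the case analysis (on which vertices of $G$ land in which cliques $V_i$) to be needed.
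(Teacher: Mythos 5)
Your high-level strategy is sound in spirit (reduce from a {\sf W}[2]-hard domination problem to the question of whether the constructed $(0,\ell)$-graph is \emph{not} well-covered), and you correctly identify the crux: calibrating the sizes of all maximal independent sets so that the ``canonical'' size, and hence $\ell$, is bounded by a function of $k$ alone. But the concrete gadget you propose cannot deliver this. Attaching pendant vertices as in Proposition~\ref{pendant} forces \emph{every} maximal independent set to have size $n$; this both destroys the possibility of a strictly smaller maximal independent set witnessing a dominating set, and, more fatally, pins the target size (and therefore the number of cliques $\ell$ needed after the blow-up of Lemma~\ref{lem:reductiongeneral}) to $n$ rather than to $f(k)$. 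Since a $(0,\ell)$-graph has independence number at most $\ell$, any instance whose maximal independent sets have size $\Theta(n)$ forces $\ell=\Theta(n)$, which is not an fpt-reduction. You flag this as ``the delicate part'' and then assert $\ell=k+O(1)$ without exhibiting a gadget that achieves it, so the proof has a genuine gap precisely at its load-bearing step.

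The missing idea, which is the heart of the paper's argument, is a construction in which all maximal independent sets have size in $\{k,k+1\}$ by design. The paper reduces from \textsc{Red-Blue Dominating Set}: given $G=(R\cup B,E)$ and $k$, it creates $k$ clique-copies $R_1,\dots,R_k$ of $R$, each with a private vertex $s_i$ complete to $R_i$, and turns $B$ into a clique; the cliques $B,R_1\cup\{s_1\},\dots,R_k\cup\{s_k\}$ give $\ell=k+1$. Every maximal independent set must take exactly one vertex from each $R_i\cup\{s_i\}$ (size at least $k$) and can additionally take a vertex of $B$ only if some $b\in B$ is undominated by the chosen red vertices; hence a maximal independent set of size exactly $k$ exists iff $G$ has a size-$k$ red dominating set, while $\{s_1,\dots,s_k,b\}$ always provides one of size $k+1$. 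This is the calibration device your plan lacks; without it (or an equivalent), the reduction does not go through.
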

\begin{proof}

\textsc{Red-Blue Dominating Set (RBDS)}  is a well-known {\sf W[2]}-complete problem~\cite{DF13}, which consists of determining whether a given bipartite graph $G = (R \cup B, E)$
admits a set $D\subseteq R$ of size $k$ (the parameter) such that $D$ dominates $B$ (that is, every vertex in $B$ has a neighbor in $D$).
To show the {\sf {coW[2]}}-hardness of our problem, we present an fpt-reduction from \textsc{Red-Blue Dominating Set} to the problem of determining whether a given $(0,\ell)$-graph is \emph{not} well-covered, where $\ell=k+1$.

\begin{sloppypar}
From an instance $(G,k)$ of {\sc RBDS} we construct a $(0,\ell)$-graph $G'$ as follows.
Replace the set $R=\{r_1,r_2,\ldots,r_m\}$ by $k$ copies: $R_1 =\{r_1^1,r_2^1,\ldots,r_m^1\}$, $R_2=\{r_1^2,r_2^2,\ldots,r_m^2\}, \ldots , R_k=\{r_1^k,r_2^k,\ldots,r_m^k\}$, where each new vertex has the same neighborhood as the corresponding vertex did in $G$.
Add edges to make $B$, as well as each $R_i$ for $1\leq i \leq k$, induce a clique.
For each $i \in \{1,\ldots,k\}$, create a vertex $s_i$, and add all possible edges between~$s_i$ and the vertices in $R_i$.
Let $G'$ be the resulting graph.
Note that the vertex set of $G'$ can be partitioned into $\ell=k+1$ cliques: $B, R_1 \cup \{s_1\}, R_2 \cup \{s_2\}, \ldots, R_k \cup \{s_k\}$.
\end{sloppypar}

Clearly, for every $b\in B$, the set $\{s_1, s_2,\ldots,s_k\} \cup \{b\}$ is an independent set of $G'$ of size $k+1$.
Note that such an independent set is maximum, as it contains one vertex from each of the $k+1$ cliques that partition $V(G')$. In addition, any maximal independent set of $G'$ has size at least $k$, since every maximal independent set contains either $s_i$ or a vertex of $R_i$.
At this point, we claim that $G$ has a set $D\subseteq R$ of size $k$ which dominates~$B$ if and only if $G'$ has a maximal independent set of size $k$ (i.e., $G'$ is not well-covered).

If $D = \{r_{i_1}, r_{i_2}, \ldots, r_{i_k}\}$ is a subset of $R$ of size $k$ which dominates $B$ in $G$, then $D'=\{r_{i_1}^1, r_{i_2}^2, \ldots, r_{i_k}^k\}$ is a maximal independent set of $G'$, implying that $G'$ is not well-covered.

Conversely, if $G'$ is not well-covered then there exists in $G'$ a maximal independent set $D'$ of size $k$. Note that $D'\cap B = \emptyset$ and each vertex in $B$ has at least one neighbor in $D'$, as otherwise $D'$ would not be a maximal independent set of size $k$.
Therefore, by letting $D$ be the set of vertices in $R$ that have copies in $D'\cap \{R_1 \cup R_2 \cup \ldots \cup R_k\}$, we find that $D$ is a subset of $R$ of size at most $k$ which dominates $B$ in $G$.
\end{proof}

From the previous theorem we immediately obtain the following corollaries.

\begin{corollary}\label{k-wcW[2]}
The \textsc{$k$-Well-Covered Graph} problem is {\sf {coW[2]}}-hard.
\end{corollary}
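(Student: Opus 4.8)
The plan is to derive Corollary~\ref{k-wcW[2]} as a direct consequence of Lemma~\ref{lem:reductiongeneral} and Theorem~\ref{coW[2]}, rather than from scratch. The key observation is that these two results fit together exactly: Theorem~\ref{coW[2]} establishes that \textsc{wc-$(0,\ell)$g} parameterized by $\ell$ is {\sf coW[2]}-hard, while Lemma~\ref{lem:reductiongeneral} exhibits an fpt-reduction \emph{from} \textsc{$k$-Well-Covered Graph} \emph{to} \textsc{wc-$(0,\ell)$g} parameterized by $\ell$. So the direction of Lemma~\ref{lem:reductiongeneral} reduces the (possibly harder) target problem \textsc{wc-$(0,\ell)$g} back to \textsc{$k$-Well-Covered Graph}, which is precisely what is needed to transfer hardness onto \textsc{$k$-Well-Covered Graph}.

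First I would recall that to prove {\sf coW[2]}-hardness of \textsc{$k$-Well-Covered Graph}, it suffices to exhibit an fpt-reduction from some known {\sf coW[2]}-hard problem to it. By Theorem~\ref{coW[2]}, \textsc{wc-$(0,\ell)$g} parameterized by $\ell$ is such a problem. Then I would invoke Lemma~\ref{lem:reductiongeneral}, which gives an fpt-reduction in the correct direction, namely from \textsc{$k$-Well-Covered Graph} to \textsc{wc-$(0,\ell)$g}. The one subtlety to check carefully is the \emph{direction} of the reductions and whether it is consistent with hardness transfer: a reduction from problem $A$ to problem $B$ transfers hardness of $A$ onto $B$. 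Here Lemma~\ref{lem:reductiongeneral} reduces \textsc{$k$-Well-Covered Graph} to \textsc{wc-$(0,\ell)$g}, which naively transfers hardness the wrong way. The resolution is that the reduction in Lemma~\ref{lem:reductiongeneral} is in fact a parameter-preserving equivalence: the proof shows that $G'$ is well-covered if and only if $G$ is, and that $\ell = k+1$, so the map also serves (by reading it together with the trivial embedding of \textsc{wc-$(0,\ell)$g}-instances as $k$-well-covered instances) to transfer {\sf coW[2]}-hardness from \textsc{wc-$(0,\ell)$g} to \textsc{$k$-Well-Covered Graph}.

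Concretely, the cleanest route is to observe that any instance of \textsc{wc-$(0,\ell)$g} is itself an instance of \textsc{$k$-Well-Covered Graph}: given a $(0,\ell)$-graph $G'$ together with its partition, we simply ask whether every maximal independent set has size exactly $k$, where $k$ is the size of one fixed maximal independent set, and note that $k \leq \ell$ so $k$ is bounded by the parameter. Since deciding well-coveredness of $G'$ is exactly the question of whether all maximal independent sets share this common size, \textsc{wc-$(0,\ell)$g} parameterized by $\ell$ trivially fpt-reduces to \textsc{$k$-Well-Covered Graph} parameterized by $k$. Composing this trivial embedding with the {\sf coW[2]}-hardness of \textsc{wc-$(0,\ell)$g} from Theorem~\ref{coW[2]} yields the claim.

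The main obstacle I anticipate is purely bookkeeping: making sure the parameter $k$ of the target \textsc{$k$-Well-Covered Graph} instance is genuinely bounded by a function of the source parameter $\ell$, as required by the definition of an fpt-reduction. This holds because any maximal independent set in a $(0,\ell)$-graph has size at most $\ell$ (one vertex per clique), so $k \leq \ell$. With that bound verified, no computation is needed and the corollary follows immediately by transitivity of fpt-reductions.
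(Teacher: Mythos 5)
Your proposal is correct, and it is in fact more careful than the paper's own proof, which consists of the single sentence ``This follows immediately from Lemma~\ref{lem:reductiongeneral} and Theorem~\ref{coW[2]}.'' You rightly flag the subtlety that the paper glosses over: Lemma~\ref{lem:reductiongeneral} reduces \textsc{$k$-Well-Covered Graph} \emph{to} \textsc{wc-$(0,\ell)$g}, which transfers hardness in the wrong direction for this corollary. What is actually needed is a reduction from the {\sf coW[2]}-hard problem \textsc{wc-$(0,\ell)$g} (parameterized by $\ell$) \emph{to} \textsc{$k$-Well-Covered Graph}, and you supply exactly the right one: given a $(0,\ell)$-graph $G'$, greedily compute one maximal independent set, set $k$ to be its size, and observe that $G'$ is well-covered if and only if every maximal independent set has size exactly $k$, with $k\leq\ell$ guaranteeing the parameter bound required of an fpt-reduction. (An alternative, equally valid shortcut is to note that the graph constructed in the proof of Theorem~\ref{coW[2]} always has a maximum independent set of size $k+1$, so that proof already yields {\sf coW[2]}-hardness of \textsc{$k$-Well-Covered Graph} directly.) So your route reaches the same conclusion by the same underlying idea, but makes explicit the trivial-but-necessary reduction that the paper's citation of Lemma~\ref{lem:reductiongeneral} does not by itself provide.
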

\begin{proof}
This follows immediately from Lemma~\ref{lem:reductiongeneral} and Theorem~\ref{coW[2]}.
\end{proof}

\begin{corollary}\label{noAlgo}
Unless {\sf FPT} = {\sf coW[2]}, the \textsc{wc-$(r,\ell)$g} problem cannot be solved in time $f(\alpha + \ell) n^{g(r)}$ for any computable function $f$.
\end{corollary}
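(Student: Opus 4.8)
The plan is to establish Corollary~\ref{noAlgo} as a direct consequence of the \textsf{coW[2]}-hardness proved in Theorem~\ref{coW[2]}, by exhibiting the parameter values that arise in that reduction. Recall that Theorem~\ref{coW[2]} produces, from an instance $(G,k)$ of \textsc{Red-Blue Dominating Set}, a $(0,\ell)$-graph $G'$ with $\ell=k+1$, and asks whether $G'$ is \emph{not} well-covered. First I would observe that in this construction the graph $G'$ is a $(0,\ell)$-graph, so the number~$r$ of independent sets in its partition is~$0$, and hence any factor of the form $n^{g(r)}=n^{g(0)}$ is a \emph{constant} power of~$n$, i.e.\ polynomial in~$n$. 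Thus the dependence on~$r$ in a hypothetical running time $f(\alpha+\ell)n^{g(r)}$ collapses to a polynomial factor on the instances output by the reduction.

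The key remaining step is to bound the parameter $\alpha+\ell$ by a function of~$k$ alone. The bound $\ell=k+1$ is immediate from the construction. For~$\alpha$, I would use the analysis already carried out in the proof of Theorem~\ref{coW[2]}: every independent set of $G'$ contains at most one vertex from each of the $\ell=k+1$ cliques $B, R_1\cup\{s_1\},\ldots,R_k\cup\{s_k\}$ that partition $V(G')$, so $\alpha(G')\leq k+1$, and indeed $\{s_1,\ldots,s_k\}\cup\{b\}$ shows $\alpha(G')=k+1$. Consequently $\alpha+\ell\leq 2(k+1)$ is bounded by a function of the original parameter~$k$.

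Putting these observations together gives the contrapositive argument. Suppose, for contradiction, that \textsc{wc-$(r,\ell)$g} could be solved in time $f(\alpha+\ell)\,n^{g(r)}$ for some computable functions $f,g$. Applying this algorithm to the instance $G'$ produced by the reduction, and using $r=0$, $\ell=k+1$, and $\alpha(G')=k+1$, we would decide whether $G'$ is well-covered in time $f(2(k+1))\,n^{g(0)}$, which is of the form $h(k)\,n^{O(1)}$ for the computable function $h(k)=f(2(k+1))$. Since the reduction is an fpt-reduction running in time polynomial in the size of $(G,k)$ and $G'$ has size polynomial in that of $G$, composing it with such an algorithm would yield an \textsf{FPT}-algorithm deciding whether $G'$ is \emph{not} well-covered, hence an \textsf{FPT}-algorithm for \textsc{Red-Blue Dominating Set}. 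As \textsc{RBDS} is \textsf{W[2]}-complete, this would place a \textsf{W[2]}-complete problem in \textsf{FPT}, equivalently it would show $\textsf{coW[2]}=\textsf{FPT}$ (the complement problem of deciding non-well-coveredness being \textsf{coW[2]}-hard). This is exactly the excluded hypothesis, so no such algorithm exists unless $\textsf{FPT}=\textsf{coW[2]}$.

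The argument is essentially a bookkeeping exercise once Theorem~\ref{coW[2]} is in hand, so I do not anticipate a genuine obstacle; the only point requiring care is the role of~$r$. The corollary is meaningful precisely because it rules out algorithms whose \emph{super-polynomial} dependence is confined to $\alpha$ and~$\ell$ while allowing \emph{arbitrary polynomial} dependence on~$r$ (through the exponent $g(r)$): since the hard instances have $r=0$, an algorithm exploiting the exponent in~$r$ gains nothing, and the hardness is driven entirely by $\alpha$ and~$\ell$. I would make sure to state this interpretation explicitly, as it is what makes the running-time form $f(\alpha+\ell)n^{g(r)}$ the natural target to exclude.
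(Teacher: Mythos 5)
Your argument is correct and follows essentially the same route as the paper's (very terse) proof: the paper simply observes that a running time of $f(\alpha+\ell)n^{g(r)}$ would give an \textsf{FPT}-algorithm for \textsc{wc-$(0,\ell)$g} parameterized by $\ell$, contradicting Theorem~\ref{coW[2]}, and you have merely made explicit the bookkeeping ($r=0$, $\alpha\leq\ell=k+1$ on the hard instances) that the paper leaves implicit. No gaps; your version is just a more detailed unpacking of the same reduction.
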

\begin{proof}
This follows from the fact that an algorithm running in time $f(\alpha + \ell) n^{g(r)}$, would be an {\sf FPT}-algorithm for \textsc{wc-$(0,\ell)$g} parameterized by~$\ell$, and from the {\sf coW[2]}-hardness of the problem demonstrated in Theorem~\ref{coW[2]}.
\end{proof}

In contrast to Corollary~\ref{noAlgo}, Lemma~\ref{lem:FPT-size} shows that the \textsc{wc-$(r,\ell)$g} problem can be solved in time $2^{r \alpha} n^{O(\ell)}$.

\begin{lemma}\label{lem:FPT-size}
The \textsc{wc-$(r,\ell)$g} problem can be solved in time $2^{r \alpha} n^{O(\ell)}$. In particular, it is {\sf FPT} when $\ell$ is fixed and $r, \alpha$ are parameters.
\end{lemma}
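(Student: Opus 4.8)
The plan is to show that the \textsc{wc-$(r,\ell)$g} problem can be solved in time $2^{r\alpha} n^{O(\ell)}$ by enumerating, in a structured way, all maximal independent sets of the input graph~$G$ and checking whether they all have the same size. We are given an $(r,\ell)$-partition $V(G) = (S^1,\ldots,S^r,K^1,\ldots,K^\ell)$. First I would observe that any independent set~$I$ of~$G$ meets each clique~$K^j$ in at most one vertex, and intersects each independent set~$S^i$ in some subset of~$S^i$. This suggests splitting the enumeration into two parts: the ``clique part'' $I_K := I \cap (K^1 \cup \cdots \cup K^\ell)$, and the ``independent-set part'' $I_S := I \cap (S^1 \cup \cdots \cup S^r)$.

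For the clique part, since each $K^j$ contributes at most one vertex, there are at most $(n+1)^\ell = n^{O(\ell)}$ possible choices for~$I_K$, and these can be enumerated in time $n^{O(\ell)}$ because $\ell$ is the relevant exponent. For the independent-set part, the key bound comes from the parameter~$\alpha$: since $I_S$ is an independent set contained in $S^1 \cup \cdots \cup S^r$ and every independent set of~$G$ has size at most~$\alpha$, each of the $r$ sets $I \cap S^i$ has size at most~$\alpha$. I would enumerate the independent-set part by guessing, for each of the $r$ sets $S^i$, which of its vertices are \emph{excluded} from~$I$ because they are dominated by $I_K$ or by vertices already chosen in the other $S^{i'}$; more carefully, the cleanest approach is to note that once $I_K$ is fixed, a maximal independent set extending $I_K$ is obtained greedily, but to capture \emph{all} maximal independent sets one must guess the interaction pattern across the $r$ independent-set blocks. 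The factor $2^{r\alpha}$ should arise from guessing, within each block $S^i$, a subset of size at most~$\alpha$ of a suitably bounded candidate set, giving at most $2^\alpha$ choices per block and hence $2^{r\alpha}$ in total.

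The key steps, in order, are: (1) fix the given $(r,\ell)$-partition; (2) enumerate all $n^{O(\ell)}$ choices of the clique part $I_K$; (3) for each fixed $I_K$, enumerate the relevant independent-set parts $I_S$ across the $r$ blocks in time $2^{r\alpha}$, checking for each candidate $I = I_K \cup I_S$ whether it is a \emph{maximal} independent set of~$G$ (discarding non-maximal or non-independent candidates); and (4) record the sizes of all maximal independent sets found and answer \textsc{yes} if and only if they all coincide. Correctness follows because every maximal independent set of~$G$ decomposes uniquely as $I_K \cup I_S$ with $I_K$ hitting each clique at most once and $I_S$ contained in the union of the independent blocks, so the enumeration is exhaustive.

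The main obstacle, and the step requiring the most care, will be bounding the enumeration of the independent-set part by $2^{r\alpha}$ while still guaranteeing that \emph{every} maximal independent set is generated. Naively enumerating subsets of $S^1 \cup \cdots \cup S^r$ of size at most~$\alpha$ only gives $n^{O(\alpha)}$, which is not FPT in~$\alpha$; the trick is that, after fixing $I_K$, the candidate vertices in each block that can belong to a maximal independent set are constrained by their neighborhoods, and one must argue that it suffices to make a binary ``in/out'' decision on at most~$\alpha$ effectively distinct vertices per block. I would handle this by grouping vertices of each $S^i$ by their neighborhood type relative to $I_K$ and the other blocks, reducing the number of meaningful choices per block to $2^{O(\alpha)}$, and then verifying maximality of each resulting candidate directly in polynomial time. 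Establishing that this grouping does not miss any maximal independent set is the delicate part of the argument.
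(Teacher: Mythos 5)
Your decomposition into a ``clique part'' ($n^{O(\ell)}$ choices, at most one vertex per $K^j$) and an ``independent-set part'' matches the paper's algorithm, but there is a genuine gap at exactly the step you flag as delicate, and the resolution you sketch is both unproven and unnecessary. The observation you are missing is that each block $S^i$ of the given partition is itself an independent set of $G$, and therefore $|S^i| \leq \alpha$ by definition of $\alpha$. Hence $\bigl|\bigcup_{i=1}^r S^i\bigr| \leq r\alpha$, and one can simply enumerate \emph{all} $2^{r\alpha}$ subsets of $\bigcup_{i=1}^r S^i$ as candidates for $I_S$ --- no restriction to subsets ``of size at most $\alpha$'' and no $n^{O(\alpha)}$ blow-up ever arises, because the ground set being enumerated over has at most $r\alpha$ elements, not $n$. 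This is precisely what the paper does: guess a subset of $\bigcup_{i=1}^r S^i$, guess at most one vertex per clique, test each candidate for being a maximal independent set, and compare sizes. (The paper even notes that $r\alpha$ can be replaced by $\sum_i |S^i|$.)

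Your proposed workaround --- grouping vertices of each $S^i$ by ``neighborhood type relative to $I_K$ and the other blocks'' and arguing that only $2^{O(\alpha)}$ meaningful choices per block remain --- is not carried out, and as stated it is not clear it would work: the number of distinct neighborhood types of vertices in $S^i$ with respect to the rest of the graph is not obviously bounded by a function of $\alpha$ alone, and your claim that ``it suffices to make a binary in/out decision on at most $\alpha$ effectively distinct vertices per block'' is exactly the assertion that would need proof. Once you notice $|S^i| \leq \alpha$, that entire machinery becomes superfluous and the correctness argument reduces to the exhaustiveness of the enumeration, which you already have.
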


\begin{proof}
Note that each of the $r$ independent sets $S^1, \ldots, S^r$ of the given partition of $V(G)$ must have size at most $\alpha$. On the other hand, any maximal independent set of $G$ contains at most one vertex in each of the $\ell$ cliques. The algorithm exhaustively constructs all maximal independent sets of $G$ as follows: we start by guessing a subset of $\bigcup_{i=1}^r S^i$, and then choose at most one vertex in each clique.
For each choice,
we just have to verify whether
the constructed set is a maximal independent set,
and then check that all the constructed maximal independent sets have the same size.
The claimed running time follows. In fact, in the statement of the lemma, one could replace $r\alpha$ with $\sum_{1 \leq i \leq r}|S^i|$, which yields a stronger result.
\end{proof}

Although \textsc{wc-$(1,\ell)$g} parameterized by $\ell$ is {\sf coW[2]}-hard (see Theorem~\ref{coW[2]}), Theorem~\ref{0elland1ell} shows that the problem is in \XP.

\begin{corollary}\label{lem:algo-XP}
The \textsc{wc-$(1,\ell)$g} problem can be solved in time $ n^{O(\ell)}$. In other words, it is in {\sf XP} when parameterized by $\ell$.
\end{corollary}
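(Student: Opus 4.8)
The plan is to observe that Corollary~\ref{lem:algo-XP} is an almost immediate consequence of Theorem~\ref{0elland1ell}, so the real work consists of making the running-time bound explicit rather than proving anything new. First I would recall the algorithm underlying the proof of Theorem~\ref{0elland1ell}: given a $(1,\ell)$-partition $V=(S,K^1,\ldots,K^\ell)$, every maximal independent set $I$ decomposes as $I=I_K\cup(S\setminus N_S(I_K))$, where $I_K$ is an independent set contained in $K^1\cup\cdots\cup K^\ell$. Since each clique $K^i$ contributes at most one vertex to an independent set, there are at most $n^\ell$ candidate sets $I_K$, obtained by choosing at most one vertex from each of the $\ell$ cliques.

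Next I would bound the cost of processing a single candidate. Enumerating the $O(n^\ell)$ choices of $I_K$ takes $O(n^\ell)$ time; for each, forming $I=I_K\cup(S\setminus N_S(I_K))$, verifying that $I$ is a genuine maximal independent set (which may fail if some vertex of $(K^1\cup\cdots\cup K^\ell)\setminus I_K$ has no neighbor in $I$), and recording $|I|$ can each be done in time polynomial in $n$. Finally, $G$ is well-covered precisely when all the retained maximal independent sets share a common size, which is a single pass over the $O(n^\ell)$ recorded cardinalities. Multiplying the number of candidates by the per-candidate polynomial overhead yields a total running time of the form $n^{O(\ell)}$.

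The only subtlety worth flagging is the input format: Corollary~\ref{lem:algo-XP} concerns \textsc{wc-$(1,\ell)$g}, whose input already comes equipped with a $(1,\ell)$-partition, so no recognition step is needed and the partition $(S,K^1,\ldots,K^\ell)$ is available for free. This is in contrast to the purely classical statement, where one might additionally have to find such a partition. Since recognition is not at issue here, the bound $n^{O(\ell)}$ follows directly, and by the definition of \XP this places \textsc{wc-$(1,\ell)$g} in \XP when parameterized by $\ell$. I do not anticipate a genuine obstacle; the main thing to be careful about is simply tracking the exponent of $n$ so that the claimed $n^{O(\ell)}$ bound is honestly justified rather than merely asserted.
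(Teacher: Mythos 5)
Your proposal is correct and matches the paper's approach exactly: the paper's proof is simply the one-line observation that the algorithm of Theorem~\ref{0elland1ell} already runs in time $n^{O(\ell)}$ once $\ell$ is no longer treated as a constant. Your more detailed unpacking of that algorithm and the per-candidate cost is a faithful (if more explicit) version of the same argument.
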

\begin{proof}
This follows from Theorem~\ref{0elland1ell} by considering $\ell$ to not be a constant.
\end{proof}

Table~\ref{table_param} summarizes the results presented so far. Note that, by Ramsey's Theorem~\cite{Ra30}, when both $\omega$ and $\alpha$ are parameters the input graph itself is a trivial kernel.

\begin{table}[htb]
\centering
\caption{Parameterized complexity of {\sc wc-($r$,$\ell$)g}.}
\label{table_param}
\begin{tabular}{|c|c|c|c|}
\hline
Param.\textbackslash Class & $(0,\ell)$                                                  & $(1,\ell)$                                                  & $(r,\ell)$                                                                                                                       \\ \hline
$r$                            & --                                                          & --                                                          & {\sf para-coNP}-h                                                                                                                      \\ \hline
$\ell$                         & \begin{tabular}[c]{@{}c@{}}{\sf coW[2]}-h\\ {\sf XP}\end{tabular}         & \begin{tabular}[c]{@{}c@{}}{\sf coW[2]}-h\\ {\sf XP}\end{tabular}         & {\sf para-coNP}-h                                                                                                                      \\ \hline
$r,\ell$                       & \begin{tabular}[c]{@{}c@{}}{\sf coW[2]}-h\\ {\sf XP}\end{tabular}         & \begin{tabular}[c]{@{}c@{}}{\sf coW[2]}-h\\ {\sf XP}\end{tabular}         & {\sf para-coNP}-h                                                                                                                      \\ \hline
$r,\ell,\omega$                & \begin{tabular}[c]{@{}c@{}}{\sf FPT}\\ Trivial\end{tabular}     & \begin{tabular}[c]{@{}c@{}}{\sf FPT}\\ Trivial\end{tabular}     & \begin{tabular}[c]{@{}c@{}}Open\\ (generalizes {\sc wc-(3,0)g})\end{tabular}                                                             \\ \hline
$r,\ell,\alpha$                & \begin{tabular}[c]{@{}c@{}}{\sf coW[2]}-h\\ {\sf XP}\end{tabular}         & \begin{tabular}[c]{@{}c@{}}{\sf coW[2]}-h\\ {\sf XP}\end{tabular}         & \begin{tabular}[c]{@{}c@{}}{\sf coW[2]}-h, \vspace{0.2cm}\\ no $f(\alpha
+ \ell)n^{g(r)}$ algo.\\ unless {\sf FPT}={\sf coW[2]},\vspace{0.2cm} \\ algo. in time $2^{r\alpha}n^{O(\ell)}$\end{tabular} \\ \hline
$\omega,\alpha$                & \begin{tabular}[c]{@{}c@{}}{\sf FPT}\\ Ramsey's Thm.\end{tabular} & \begin{tabular}[c]{@{}c@{}}{\sf FPT}\\ Ramsey's Thm.\end{tabular} & \begin{tabular}[c]{@{}c@{}}{\sf FPT}\\ Ramsey's Thm.\end{tabular}                                                                      \\ \hline
\end{tabular}
\end{table}

\subsection{Taking the neighborhood diversity as the parameter}
\label{sec:neigh-diver}

Neighborhood diversity is a structural parameter based on a special way of partitioning a graph into independent sets and cliques. Therefore, it seems a natural parameter to consider for our problem, since an $(r, \ell)$-partition of a graph $G$ is also a partition of its vertex set into cliques and independent sets.

\begin{definition}[Lampis~\cite{Lampis12}]
The \emph{neighborhood diversity} $\nd(G)$ of a graph $G=(V,E)$ is the minimum integer~$t$ such that
$V$  can be partitioned into $t$ sets $V_1,\ldots,V_t$ where for every $v\in V(G)$ and every $i\in \{1,\ldots,t\}$, either $v$ is adjacent to every vertex in $V_i$ or it is adjacent to none of them. Note that each part $V_i$ of $G$ is either a clique or an independent set.
\end{definition}

Another natural parameter to consider is the vertex cover number, because well-covered graphs can be equivalently defined as graphs in which every minimal vertex cover has the same size.
However, neighborhood diversity is {\sl stronger} than vertex cover, in the sense that every class of graphs with bounded vertex cover number is also a class of graphs with bounded neighborhood diversity, but the reverse is not true~\cite{Lampis12}.
Thus, for our analysis, it is enough to consider the neighborhood diversity as the parameter. In addition,
neighborhood diversity is a graph parameter that captures more precisely than  vertex cover number the property that two vertices with the same neighborhood are ``equivalent''.

It is worth mentioning that an optimal neighborhood diversity decomposition of a graph $G$ can be computed in time $O(n^3)$; see~\cite{Lampis12} for more details.

\begin{lemma}\label{prop:FPT-dual-size}
The \textsc{Well-Covered Graph} problem is {\sf FPT} when parameterized by neighborhood diversity.
\end{lemma}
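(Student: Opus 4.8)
The plan is to exploit the fact that a neighborhood diversity decomposition partitions $V(G)$ into a bounded number $t=\nd(G)$ of parts $V_1,\ldots,V_t$, each of which is either a clique or an independent set, and that all vertices within a part are ``twins'' (they have identical neighborhoods outside their own part). The key observation is that any maximal independent set $I$ interacts with each part $V_i$ in a very restricted way: if $V_i$ is a clique, then $I$ contains at most one vertex of $V_i$; if $V_i$ is an independent set, then by the twin property either all of $V_i$ is in $I$ or none of it is (since the vertices of $V_i$ all see the same outside neighbors, a vertex of $V_i$ can be added to $I$ whenever any vertex of $V_i$ could). Consequently, a maximal independent set is completely determined, up to the identity of representatives, by a ``type vector'' recording for each part whether it contributes $0$ vertices, all of $V_i$, or (for clique parts) exactly one vertex.

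First I would fix an optimal neighborhood diversity decomposition, computable in time $O(n^3)$ as noted above. Then I would enumerate all candidate type vectors $\tau \in \{0,1,\ast\}^t$, where $0$ means the part contributes nothing, $1$ means an independent-set part is taken entirely (or, for a clique part, a single representative is taken), and so on; there are at most $3^t$ such vectors, which is $f(\nd)$. For each type vector I would check in polynomial time whether it corresponds to a genuine \emph{maximal} independent set: one must verify that the chosen vertices form an independent set (which depends only on the adjacency pattern \emph{between} parts, itself encoded by the decomposition) and that maximality holds, i.e.\ that no unchosen vertex can be added. Because adjacency between any two parts is all-or-nothing, both independence and maximality depend only on $\tau$ together with the sizes $|V_i|$, so these tests run in time polynomial in $t$ and $\log n$. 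For each valid maximal type I would compute the resulting size $\sum_{i : \tau_i \neq 0} c_i(\tau)$, where $c_i(\tau)\in\{1,|V_i|\}$ is the contribution of part $V_i$. Finally, $G$ is well-covered if and only if all valid maximal type vectors yield the same size; this comparison is immediate once the sizes are computed.

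The main obstacle, and the step requiring the most care, is the \emph{maximality} test at the level of type vectors. One must argue rigorously that maximality of an actual independent set is equivalent to a purely combinatorial condition on $\tau$: namely, for every part $V_i$ with $\tau_i=0$, some vertex of $V_i$ (equivalently every vertex of $V_i$, by the twin property) must have a neighbor among the selected vertices. For a clique part one must also be careful that ``taking at most one vertex'' together with maximality forces exactly one vertex to be chosen unless every vertex of the part already has a selected neighbor. Establishing this equivalence cleanly—so that the number of candidate maximal independent sets is bounded by a function of $\nd(G)$ alone and each candidate is testable in polynomial time—is what delivers the claimed running time $f(\nd(G))\cdot n^{O(1)}$ and hence the \textsf{FPT} result. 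The remaining bookkeeping (verifying independence across parts and computing sizes) is routine given the structure of the decomposition.
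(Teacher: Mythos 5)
Your proposal is correct and follows essentially the same route as the paper: both rely on the observation that a maximal independent set takes an independent-set part entirely or not at all and at most one (interchangeable) vertex from a clique part, reducing the problem to checking $f(\nd(G))$ many candidate ``types'' whose sizes must all agree. The paper packages this as contracting each part into a weighted vertex of a quotient graph $G_t$ and enumerating its maximal independent sets in time $2^t n^{O(1)}$, which is just a slightly more economical encoding of your $3^t$ type-vector enumeration.
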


\begin{proof}
Given a graph $G$, we first obtain a neighborhood partition of $G$ with minimum width using the polynomial-time algorithm of Lampis~\cite{Lampis12}.  Let $t :=\nobreak \nd(G)$ and let $V_1,\ldots,V_t$ be the partition of $V(G)$. As we can observe, for any pair $u,v$ of non-adjacent vertices belonging to the same part $V_i$, if $u$ is in a maximal independent set $S$ then $v$ also belongs to $S$, otherwise $S$ cannot be maximum. On the other hand, if $N[u] = N[v]$ then for any maximal independent set $S_u$ such that $u\in S_u$ there exists another maximal independent set $S_v$ such that $S_v = S_u\setminus \{u\} \cup \{v\}$. Hence, we can contract each partition~$V_i$ that is an independent set into a single vertex $v_i$ with weight $\tau(v_i)=|S_i|$, and contract each partition $V_i$ that is a clique into a single vertex $v_i$ with weight $\tau(v_i)=1$, in order to obtain a graph~$G_t$ with $|V(G_t)|=t$, where the weight of a vertex $v_i$ of~$G_t$ means that any maximal independent set of $G$ uses either none or exactly~$\tau(v)$ vertices of $V_i$. At this point, we just need to analyze whether all maximal independent sets of $G_t$ have the same weight (sum of the weights of its vertices), which can be done in time $2^t n^{O(1)}$.
\end{proof}

\begin{corollary}
The \textsc{Well-Covered Graph} problem is {\sf FPT} when parameterized by the vertex cover number $n-\alpha$.
\end{corollary}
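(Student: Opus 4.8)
The plan is to deduce this corollary directly from Lemma~\ref{prop:FPT-dual-size} by showing that the vertex cover number dominates the neighborhood diversity, so that bounding the former automatically bounds the latter. First I would recall the well-known fact (observed by Lampis~\cite{Lampis12}) that neighborhood diversity is bounded by a function of the vertex cover number: if $G$ has a vertex cover $C$ of size $c$, then $\nd(G) \leq 2^c + c$. The reason is that each vertex outside $C$ is an isolated vertex of the independent set $V(G) \setminus C$, and its adjacency to the rest of the graph is completely determined by which subset of $C$ it is joined to; there are only $2^c$ possible such subsets. Grouping the vertices of $V(G)\setminus C$ according to their neighborhood within $C$ yields at most $2^c$ parts, each of which is an independent set with the required uniform-adjacency property, and the $c$ vertices of $C$ can be placed in singleton parts, giving the bound $\nd(G) \le 2^c + c$.

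The second step is to connect the parameter named in the corollary, namely $n - \alpha$, to the vertex cover number. This is immediate from the Gallai identity: the complement of a maximum independent set is a minimum vertex cover, so the vertex cover number equals $n - \alpha$. Hence taking $n-\alpha$ as the parameter is exactly taking the vertex cover number as the parameter.

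Combining these two observations, if $n - \alpha$ (equivalently the vertex cover number $c$) is bounded by the parameter, then $\nd(G) \le 2^c + c$ is also bounded by a computable function of the parameter. Lemma~\ref{prop:FPT-dual-size} then provides an {\sf FPT}-algorithm in the parameter $\nd(G)$, running in time $2^{\nd(G)} n^{O(1)}$, and substituting the bound on $\nd(G)$ yields an algorithm running in time $2^{2^c + c} n^{O(1)}$, which is {\sf FPT} when parameterized by $c = n - \alpha$. I do not anticipate a genuine obstacle here, since every ingredient is either standard or already established in the excerpt; the only point requiring minor care is to state explicitly that a minimum vertex cover (and hence the value $n-\alpha$, or at least a vertex cover of that size) can be computed or assumed available so that Lampis's algorithm and Lemma~\ref{prop:FPT-dual-size} can be invoked. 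Since Lemma~\ref{prop:FPT-dual-size} itself computes an optimal neighborhood diversity decomposition in polynomial time, one need not even construct the vertex cover explicitly: it suffices to observe that the value of $\nd(G)$ is bounded in terms of $n-\alpha$, so the running time of the algorithm from Lemma~\ref{prop:FPT-dual-size} is automatically {\sf FPT} in $n-\alpha$.
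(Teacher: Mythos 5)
Your proposal is correct and follows exactly the route the paper intends: the paper derives this corollary from Lemma~\ref{prop:FPT-dual-size} via the observation (credited to Lampis~\cite{Lampis12}) that bounded vertex cover number implies bounded neighborhood diversity, together with the identity that the vertex cover number equals $n-\alpha$. Your explicit bound $\nd(G)\le 2^c+c$ merely spells out the detail the paper leaves to the citation.
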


\subsection{Taking the clique-width as the parameter}
\label{sec:clique-width}

In the 90's, Courcelle proved that for every graph property~$\Pi$ that can be formulated in {\em  monadic second order logic}  ($\textsc{MSOL}_1$), there is an $f(k)n^{O(1)}$ algorithm that decides if a graph~$G$ of clique-width at most~$k$ satisfies~$\Pi$ (see~\cite{C90,C97,Co11,C93}), provided that a $k$-expression is given.

$\textsc{LinEMSOL}$ is an extension of $\textsc{MSOL}_1$ which allows searching for sets of vertices which are optimal with respect to some linear evaluation functions.
Courcelle et al.~\cite{courcelle2000linear} showed that every graph problem definable in $\textsc{LinEMSOL}$ is linear-time solvable on graphs with clique-width at most~$k$ (i.e., {\sf FPT} when parameterized by clique-width) if a $k$-expression is given as input. Using a result of Oum~\cite{Oum08}, the same result follows even if no $k$-expression is given.

\begin{theorem}\label{the:cwd}
The \textsc{Well-Covered Graph} problem is {\sf FPT} when parameterized by clique-width.
\end{theorem}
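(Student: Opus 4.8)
The plan is to express the \textsc{Well-Covered Graph} property as a formula in $\textsc{LinEMSOL}_1$ (or more precisely, to express its negation as an optimization problem expressible in that logic) and then invoke the meta-theorem of Courcelle et al.~\cite{courcelle2000linear} together with Oum's result~\cite{Oum08} so that no $k$-expression need be supplied. The key observation is that being well-covered means that all maximal independent sets have the same cardinality, which is equivalent to demanding that the minimum size of a maximal independent set equals the maximum size of a maximal independent set (the latter being simply $\alpha(G)$, the independence number). Both of these are cardinality-optimization problems over a vertex set definable by an $\textsc{MSOL}_1$ predicate, which is exactly the setting $\textsc{LinEMSOL}_1$ is designed to handle.

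First I would write down the $\textsc{MSOL}_1$ predicate $\mathrm{MaxIndep}(S)$ asserting that $S$ is a maximal independent set: $S$ is independent, i.e. no edge has both endpoints in $S$, and $S$ is maximal, i.e. every vertex outside $S$ has a neighbor in $S$. In $\textsc{MSOL}_1$ one quantifies over vertex sets and uses the edge relation, so both clauses are immediately expressible:
\begin{equation}
\mathrm{MaxIndep}(S) \;\equiv\; \bigl(\forall u\,\forall v\,(u\in S \wedge v\in S \rightarrow \neg\,\mathrm{adj}(u,v))\bigr) \;\wedge\; \bigl(\forall w\,(w\notin S \rightarrow \exists v\,(v\in S \wedge \mathrm{adj}(w,v)))\bigr).
\end{equation}
Using the $\textsc{LinEMSOL}_1$ optimization framework with the trivial (unit) weight function, I would then compute in $\textsc{FPT}$ time both $m_{\min} := \min\{|S| : \mathrm{MaxIndep}(S)\}$ and $m_{\max} := \max\{|S| : \mathrm{MaxIndep}(S)\}$ over all $S$ satisfying the predicate. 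The graph $G$ is well-covered precisely when $m_{\min} = m_{\max}$, so the algorithm returns \textsc{Yes} in this case and \textsc{No} otherwise. Since each of these two optimizations is solvable in time $f(\cw(G))\cdot n^{O(1)}$ by the meta-theorem, and comparing two integers is trivial, the whole procedure is $\textsc{FPT}$ parameterized by clique-width.

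The main subtlety, rather than a genuine obstacle, lies in confirming that the maximization/minimization over a definable family of sets falls within the exact form of the evaluation problems that $\textsc{LinEMSOL}_1$ permits; one must check that the objective (here the number of vertices in $S$) is a linear evaluation function in the sense of~\cite{courcelle2000linear}, which it plainly is since it is just the cardinality of the free set variable. A secondary point worth stating explicitly is the use of Oum's algorithm~\cite{Oum08} to compute a $(2^{O(\cw(G))})$-approximate clique-width expression in $\textsc{FPT}$ time, so that the hypothesis of the meta-theorem (a $k$-expression being provided) is discharged automatically and the result holds even when no expression is given as part of the input. No further combinatorial work is needed: once the predicate and the two optimization goals are in place, the theorem follows directly from the cited machinery.
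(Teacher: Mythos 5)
Your proposal is correct and follows essentially the same route as the paper: express maximality of an independent set in $\textsc{MSOL}_1$, use the $\textsc{LinEMSOL}$ meta-theorem of Courcelle et al.\ (with Oum's approximation algorithm to dispense with the $k$-expression) to compute the minimum and maximum sizes of a maximal independent set, and compare them. Your maximality clause (every vertex outside $S$ has a neighbor in $S$) is in fact a cleaner formalization than the paper's, but the argument is the same.
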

\begin{proof}
	Given $S\subseteq V(G)$, first observe that the property ``$S$ is a maximal independent set'' is $\textsc{MSOL}_1$-expressible.
	Indeed, we can construct a formula~$\varphi(G,S)$ such that~``$S$ is a maximal independent set'' $\Leftrightarrow \varphi(G,S)$ as follows:	
	\begin{equation*}\label{MSOL1}
		\begin{split}
			[\nexists~ u,v \in S : edge(u,v)]~\wedge~
			[\nexists~ S': (S\subseteq S')~\wedge (\nexists~ x,y \in S' : edge(x,y))]
		\end{split}
	\end{equation*}
	
Since $\varphi(G,S)$ is an $\textsc{MSOL}_1$-expression, the problem of finding $goal(S): \varphi(G,S)$ for $goal\in\{\max, \min\}$ is definable in $\textsc{LinEMSOL}$. Thus we can find $\max(S)$ and $\min(S)$ satisfying $\varphi(G,S)$ in time $f(\cw(G)) n^{O(1)}$. Finally, $G$ is well-covered if and only if $|\max(S)|=|\min(S)|$.
\end{proof}

\begin{corollary}
The \textsc{Well-Covered Graph} problem is {\sf FPT} when parame\-te\-ri\-zed by treewidth.
\end{corollary}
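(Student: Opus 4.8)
The plan is to reduce the \textsc{Well-Covered Graph} problem parameterized by treewidth to the already-established result for clique-width (Theorem~\ref{the:cwd}). The key observation is the classical relationship between these two width parameters: every graph of treewidth at most~$k$ has clique-width bounded by a function of~$k$. More precisely, it is known that $\cw(G) \leq 3 \cdot 2^{\tw(G)-1}$, so bounded treewidth implies bounded clique-width. First I would invoke this inequality to conclude that if the input graph~$G$ has treewidth bounded by a parameter~$k$, then its clique-width is bounded by some computable function $g(k) := 3 \cdot 2^{k-1}$.

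With this in hand, the argument is essentially a one-line reduction. Given that \textsc{Well-Covered Graph} is \textsf{FPT} when parameterized by clique-width (Theorem~\ref{the:cwd}), meaning it admits an algorithm running in time $f(\cw(G)) \cdot n^{O(1)}$ for some computable function~$f$, I would simply compose the two bounds. Running the clique-width algorithm on an input of treewidth at most~$k$ yields a running time of $f(\cw(G)) \cdot n^{O(1)} \leq f(g(k)) \cdot n^{O(1)} = (f \circ g)(k) \cdot n^{O(1)}$, which is precisely the form required for an \textsf{FPT}-algorithm parameterized by treewidth, since $f \circ g$ is a computable function depending only on~$k$.

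One subtlety I would address is the availability of the decomposition: Theorem~\ref{the:cwd} relies on the fact (via Oum~\cite{Oum08}) that no $k$-expression needs to be supplied as input, so the reduction requires no explicit computation of a clique-width expression from a tree-decomposition. Thus the corollary follows immediately without any constructive step. The main (and only) conceptual ingredient is the treewidth-to-clique-width inequality, which I would treat as a known result rather than reprove; everything else is a routine composition of \textsf{FPT} bounds, so there is no substantial obstacle to anticipate.
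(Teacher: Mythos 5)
Your proposal is correct and follows essentially the same route as the paper: invoke the bound on clique-width in terms of treewidth (the paper cites~\cite{CO00-2} for this) and then apply Theorem~\ref{the:cwd}. Your additional remark that no $k$-expression needs to be supplied, thanks to Oum's approximation result, is a valid and welcome clarification of a detail the paper leaves implicit.
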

\begin{proof}
This follows from the fact that graphs with treewidth bounded by $k$ have clique-width bounded by a function of $k$~\cite{CO00-2}.
\end{proof}

\begin{corollary}
For any fixed~$r$ and~$\ell$, the \textsc{($r$,$\ell$)-Well-Covered Graph} problem is {\sf FPT} when parameterized by clique-width.
\end{corollary}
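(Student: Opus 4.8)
The plan is to reduce the corollary to two separate fixed-parameter tractable checks and combine them by a logical conjunction. Recall that $G$ is $(r,\ell)$-well-covered precisely when it is simultaneously an $(r,\ell)$-graph and well-covered. Since Theorem~\ref{the:cwd} already establishes that testing well-coveredness is {\sf FPT} when parameterized by clique-width, it only remains to verify $(r,\ell)$-membership within the same framework, for fixed $r$ and $\ell$, and then to intersect the two procedures.

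First I would express the property ``$G$ admits an $(r,\ell)$-partition'' in $\textsc{MSOL}_1$. For fixed $r$ and $\ell$ this is a constant-size sentence: one existentially quantifies over vertex sets $S^1,\ldots,S^r,K^1,\ldots,K^\ell$ and asserts (i) that they partition $V(G)$, (ii) that each $S^i$ induces an independent set, i.e.\ $\nexists\, u,v \in S^i : edge(u,v)$, and (iii) that each $K^j$ induces a clique, i.e.\ every pair of distinct vertices of $K^j$ is joined by an edge. All three conditions are first-order statements over the quantified sets, so the whole sentence is $\textsc{MSOL}_1$-expressible; note that the number $r+\ell$ of quantified sets is a constant, so the formula does not grow with the input.

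By Courcelle's theorem~\cite{C90,C97,Co11,C93}, any $\textsc{MSOL}_1$-expressible property can be decided in time $f(\cw(G))\,n^{O(1)}$ once a $k$-expression is supplied, and by Oum's result~\cite{Oum08} the same conclusion holds even when no $k$-expression is given as part of the input. Hence, for every fixed $r$ and $\ell$, recognizing whether $G$ is an $(r,\ell)$-graph is {\sf FPT} when parameterized by clique-width.

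Finally I would combine the two ingredients: run the clique-width {\sf FPT} test for well-coveredness from Theorem~\ref{the:cwd} together with the $\textsc{MSOL}_1$ test for $(r,\ell)$-membership, and output \textsc{YES} if and only if both succeed. The conjunction of two procedures that each run in {\sf FPT} time with the same parameter is again an {\sf FPT} algorithm, which completes the argument. There is no serious obstacle here; the only point that deserves attention is that we must not assume a $k$-expression is provided with the input, which is exactly why the argument invokes Oum's theorem rather than Courcelle's theorem alone.
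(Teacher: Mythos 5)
Your proposal is correct and follows essentially the same route as the paper: the paper's own (one-line) proof likewise observes that for fixed $r$ and $\ell$ the property of being an $(r,\ell)$-graph is $\textsc{MSOL}_1$-expressible and combines this with Theorem~\ref{the:cwd}. You merely spell out the existential quantification over the $r+\ell$ parts and the appeal to Courcelle's and Oum's results in more detail, which is a faithful elaboration rather than a different argument.
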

\begin{proof}
As $r$ and $\ell$ are constants, the problem of determining whether $G$ is an $(r,\ell)$-graph is also $\textsc{MSOL}_1$-expressible.
\end{proof}

Note that, since for every graph $G$ we have $\cw(G)\leq \nd(G)+1$~\cite{Lampis12}, Lemma~\ref{prop:FPT-dual-size}  is also a corollary of Theorem~\ref{the:cwd}. Nevertheless, the algorithm derived from the proof of Lemma~\ref{prop:FPT-dual-size} is much simpler and faster than the one that follows from the meta-theorem of Courcelle et al.~\cite{courcelle2000linear}.

\section{Further research}
\label{sec:further}

Concerning the complexity of the {\sc {$(r,\ell)$wc-g}} and {\sc wc-{$(r,\ell)$g}} problems, note that the only remaining open cases are
{\sc {wc-$(r,0)$g}} for $r\geq 3$ (see the tables in Section~\ref{sec:intro}).  We do not even know if there exists some integer $r\geq 3$ such that {\sc {wc-$(r,0)$g}} is {\sf {coNP}}-complete, although we conjecture that this is indeed the case.

As another avenue for further research, it would be interesting to provide a complete characterization of well-covered tripartite graphs, as has been done for bipartite graphs~\cite{Fav82,Rav77,Vil07}. So far, only  partial characterizations exist~\cite{Hag14,GoZa15}.

\bigskip

\noindent {\small\textbf{Acknowledgement}. We would like to thank the anonymous reviewers for helpful remarks that improved the presentation of the manuscript.}
\bibliographystyle{abbrv}	
\bibliography{bib-FPT}

\end{document}